\documentclass{llncs}

\usepackage{hyperref}

\usepackage[normalem]{ulem}

\usepackage{times,multirow}
\usepackage{stmaryrd}
\usepackage{amssymb}
\usepackage{verbatim}
\usepackage{amsmath,amsfonts}
\usepackage{graphicx}
\usepackage{color}
\usepackage{xcolor,pgf,tikz,pgflibraryarrows,pgffor,pgflibrarysnakes}
\usepackage{microtype}
\usepackage{gastex,graphicx}
\usepackage{amsmath}
\usepackage{amsfonts}
\usepackage{amssymb}
\usepackage{tikz}

\usetikzlibrary{fit} 
\usetikzlibrary{backgrounds} 

\usetikzlibrary{calc}
\tikzstyle{RectObject}=[rectangle,fill=white,draw,line width=0.5mm]
\tikzstyle{line}=[draw]
\tikzstyle{arrow}=[draw, -latex] 
%

%
\newcommand{\cnt}{\mathsf{C}}

\newcommand{\spl}{\mathsf{Spoiler}}
\newcommand{\dpl}{\mathsf{Duplicator}}

\newcommand{\N}{\mathbb N}

\newcommand{\wB}{\Box^{\mathsf{w}}}
\newcommand{\wU}{\until^{\mathsf{w}}}
\newcommand{\wF}{\fut^{\mathsf{w}}}

\newcommand{\until}{\:\mathsf{U}}
\newcommand{\since}{\:\mathsf{S}}

\newcommand{\R}{\:\mathbb{R}}
\newcommand{\fut}{\Diamond}

\mathchardef\mhyphen="2D
\mathchardef\mhyph="2D

\newcommand{\nex}{\mathsf{O}}
\newcommand{\nx}{\mathsf{O}}

\newcommand{\mtl}{\mathsf{MTL}}

\newcommand{\oomit}[1]{}



\hyphenation{op-tical net-works semi-conduc-tor}




\bibliographystyle{plain}

\title{Metric Temporal Logic with Counting}
\author{ Khushraj Madnani\inst{1}, Shankara Narayanan Krishna\inst{1}  \and Paritosh K. Pandya \inst{2}}
\institute{Department of Computer Science \& Engineering, 
IIT Bombay, Mumbai, India 400 076. \\
\email{\{khushraj,krishnas\}@cse.iitb.ac.in}\\
\and
School of Technology and Computer Science,
Tata Institute of Fundamental Research, \\
Mumbai, India 400 005. \\
\email{pandya@tcs.tifr.res.in}}

\begin{document}
\maketitle

Ability to count number of occurrences of events within a specified time
interval is very useful in specification of resource bounded real time
computation. In this paper, we study an extension of Metric Temporal Logic
($\mathsf{MTL}$) with two different counting modalities called $\mathsf{C}$ and $\mathsf{UT}$ (until
with threshold), which enhance the expressive power of $\mathsf{MTL}$ in orthogonal
fashion. We confine ourselves only to the future fragment of $\mathsf{MTL}$
interpreted in a pointwise manner over finite timed words. We provide a
comprehensive study of the expressive power of logic $\mathsf{CTMTL}$ and its
fragments using the technique of EF games extended with suitable counting
moves. Finally, as our main result, we establish the decidability of
$\mathsf{CTMTL}$ by giving an equisatisfiable reduction from $\mathsf{CTMTL}$ to $\mathsf{MTL}$. The
reduction provides one more example of the use of temporal projections with
oversampling introduced earlier for proving decidability.
Our reduction also implies that  $\mathsf{MITL}$ extended with $\mathsf{C}$ and $\mathsf{UT}$ modalities
is elementarily decidable.

\section{Introduction}
 Temporal logics provide constructs to specify qualitative ordering between events in time. But real time logics have the ability to specify quantitative timing
constraints between events. Metric Temporal Logic $\mtl$ is amongst the best studied of real time logics. Its principle modality $a \until_I b$ states that
an event  $b$ should occur in future within a time distance lying within interval $I$. Moreover, $a$ should hold continuously till then. 

    In many situations, especially those dealing with resource bounded computation, the ability to count the number of occurrences of events 
becomes important. In this paper, we consider an extension of $\mtl$ with two counting modalities $\cnt$ and $\mathsf{UT}$ (until threshold) which provide differing 
abilities to specify constraints on counts on events in time intervals. The resulting logic is called $\mathsf{CTMTL}$. Modality $\cnt_I^{\geq n} ~ \phi$ states that 
the number of times formula $\phi$ holds in time interval $I$ (measured relative to current time point) is at least $n$. 
This is a mild generalization of $\cnt_{(0,1)}^{\geq n} ~ \phi$ modality studied by Rabinovich \cite{rabin} in context of continuous time $\mtl$. 
The $\mathsf{UT}$  modality $\phi ~\until_{I,\#\kappa \geq n} ~\psi$ is like $\mtl$ until but it 
additionally states that the number of time formula $\kappa$ holds between now and time point where $\psi$ holds is at least $n$. Thus it extends $\until$ to 
simultaneously specify constraint on time and count of subformula. Constraining $\until$ by count of subformula was already explored for untimed $\mathsf{LTL}$ 
by Laroussini et al \cite{cltl}. But the combination of timing and counting seems new. The following example illustrates the use of these modalities.\\
\noindent \emph{An Example}. We specify some constraints to be monitored by exercise bicycle electronics.
\begin{itemize}
 \item 
Two minutes after the start of exercise, the heartbeat (number of pulses in next 60 seconds) should be between 90 and 120.
This can be stated as \\ 
                  $\Box(st \Rightarrow (\cnt_{[120,180]}^{\geq 90} pulse \land \cnt_{[120,180]}^{< 120} pulse))$    
\item Here is one exerise routine:
After start of exercise, $slow\_peddling$ should be done for 1 kilometre (marked by odometer giving 1000 pulses) and this should be achieved in interval 1 to 2 minutes.
After this $fast\_peddling$ should be done for 3 minutes. This can be specified as
$ \Box(st \Rightarrow slowpeddle ~\until_{[60,120], \#odo = 1000}  ~(\Box_{[0,180]} fastpeddle))$
\end{itemize}
The expressiveness and decidability properties of real time logics differ considerable based on nature of time. There has been considerable study of
counting $\mtl$ in continuous time \cite{rabin-tcs},\cite{hunter}.  In this paper, we consider the case of pointwise time, i.e. $\mathsf{CTMTL}$ interpreted over finite timed words in a pointwise manner. We provide a comprehensive picture of expressiveness and decidability of $\mathsf{CTMTL}$ and its fragments in pointwise time and we find 
that this differs considerably when compared with continuous time. 

As our first main result, we show that the $\cnt$ and the $\mathsf{UT}$ modalities both increase the expressive power of $\mtl$ but they are mutually incomparable.
EF games are a classical technique used to study expressive power of logic. \cite{concur11} have adapted EF games to $\mtl$ and shown a number of expressiveness
results. In this paper, we extend $\mtl$ EF games with counting moves corresponding to the $\cnt$ and $\mathsf{UT}$ modalities. We use the resulting EF theorem
to characterise expressive  powers of several fragments of $\mathsf{CTMTL}$.

One attraction of pointwise $\mtl$  over finite timed words is that its satisfiability is decidable \cite{Ouaknine05} whereas 
continuous time $\mtl$ has  undecidable satisfiability. As our second main result, we show that $\mtl$ extended with $\cnt$ and $\mathsf{UT}$ modalities also has
decidable satisfiability. In order to prove this result, we give an equisatisfiable reduction from $\mathsf{CTMTL}$ to $\mtl$. The reduction makes use of the
notion of temporal projections modulo oversampling introduced earlier \cite{time14} where timed words satisfying original $\mathsf{CTMTL}$ formula have to be oversampled
with additional time points to satisfy corresponding $\mtl$ formula. This result marks one more use of the technique of temporal projections.
We note that our reduction can also be applied to $\mathsf{MITL}$ (with both $\until$ and $\since$) extended with $\cnt$ and $\mathsf{UT}$ and it it gives an equisatisfiable formula
in $\mathsf{MITL}$ which is exponential in the size of original formula. Thus, we establish that $\mathsf{CTMITL}[\until,\since]$ has elementary satisfiability.

\section{A Zoo of Timed Temporal Logics}
\label{zoo}
\vspace{-0.1cm}
In this section, we present the syntax and semantics of the various timed temporal logics we study in this paper. 
  Let $\Sigma$ be a finite set of propositions. A finite timed word over $\Sigma$ is a tuple
$\rho = (\sigma,\tau)$.   $\sigma$ and $\tau$ are sequences $\sigma_1\sigma_2\ldots\sigma_n$ and  $t_1t_2\ldots t_n$ respectively, with $\sigma_i \in 2^{\Sigma}-\emptyset$,  and $t_i \in \R_{\geq 0}$
 for $1 \leq i \leq n$ and $\forall i \in dom(\rho)$,  $t_i \le t_{i+1}$, where $dom(\rho)$ is the set of positions $\{1,2,\ldots,n\}$ in the timed word. 
An example of a timed word 
 over $\Sigma=\{a,b\}$ is $\rho=(\{a,b\},0.3)(\{b\}, 0.7)(\{a\},1.1)$.
$\rho$ is strictly monotonic iff $t_i < t_{i+1}$ for all $i,i+1 \in dom(\rho)$. 
Otherwise, it is weakly monotonic.
The set of finite timed words over $\Sigma$ is denoted $T\Sigma^*$.

The logic $\mathsf{MTL}$ extends  linear temporal logic ($\mathsf{LTL}$) by adding timing constraints 
to the ``until'' modality of $\mathsf{LTL}$. We parametrize this logic by a permitted 
set of time intervals denoted by $I\nu$. 
 The intervals in $I\nu$ can be  open, half-open or closed,  
with end points in $\N \cup \{0,\infty\}$.  Such an interval is denoted $\langle a, b \rangle$. For example, 
$[3,7), [5, \infty)$. Let $t+\langle a, b\rangle=\langle t+a, t+b\rangle$.
\subsection*{Metric Temporal Logic}
\label{prelim}
 Given $\Sigma$,  the formulae of $\mathsf{MTL}$ are built from $\Sigma$  using boolean connectives and 
time constrained version of the modality $\until$ as follows:
$\varphi::=a (\in \Sigma)~|true~|\varphi \wedge \varphi~|~\neg \varphi~|
~\varphi \until_I \varphi$\\
where  $I \in I\nu$.    
\label{point}
For a timed word $\rho=(\sigma, \tau) \in T\Sigma^*$, a position 
$i \in dom(\rho)$, and an $\mathsf{MTL}$ formula $\varphi$, the satisfaction of $\varphi$ at a position $i$ 
of $\rho$ is denoted $(\rho, i) \models \varphi$, and is defined as follows:\\
\noindent $\rho, i \models a$  $\leftrightarrow$  $a \in \sigma_{i}$ and~~~~~
$\rho,i  \models \neg \varphi$ $\leftrightarrow$  $\rho,i \nvDash  \varphi$\\
$\rho,i \models \varphi_{1} \wedge \varphi_{2}$   $\leftrightarrow$ 
$\rho,i \models \varphi_{1}$ 
and $\rho,i\ \models\ \varphi_{2}$\\
$\rho,i\ \models\ \varphi_{1} \until_{I} \varphi_{2}$  $\leftrightarrow$  $\exists j > i$, 
$\rho,j\ \models\ \varphi_{2}, t_{j} - t_{i} \in I$, and  $\rho,k\ \models\ \varphi_{1}$ $\forall$ $i< k <j$

\noindent $\rho$ satisfies $\varphi$ denoted $\rho \models \varphi$ 
iff $\rho,1 \models \varphi$. Let $L(\varphi)=\{\rho \mid \rho, 1 \models \varphi\}$
denote the language of a $\mathsf{MTL}$ formula $\varphi$.  
Two formulae $\varphi$ and $\phi$ are said to be equivalent denoted as $\varphi \equiv \phi$ iff $L(\varphi) = L(\phi)$.
Additional temporal connectives are defined in the standard way: 
we have the constrained future eventuality operator $\fut_I a \equiv true \until_I a$ 
and its dual 
$\Box_I a \equiv \neg \fut_I \neg a$.
We also define the next operator as $\nex_I \phi \equiv \bot \until_I  \phi$. 
Weak versions of  operators 
are defined as  $\wF_I a=a \vee \fut_I a, 
\wB_I a\equiv a \wedge \Box_I a$, $a \wU_I b\equiv b \vee [a \wedge (a \until_I b)]$ if $0 \in I$, and 
$[a \wedge (a \until_I b)]$ if $0 \notin I$.
 \vspace{-.2cm}
 
\begin{theorem}
	Satisfiability checking of $\mathsf{MTL}$ is decidable over  finite timed words and is non-primitive recursive. \cite{Ouaknine05}.
\end{theorem}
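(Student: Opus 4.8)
The plan is to establish the two assertions separately: first decidability, via a translation into one-clock alternating timed automata combined with the theory of well-structured transition systems, and then the non-primitive-recursive lower bound via a reduction from a suitably hard counter-machine problem.

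For decidability, the first step is to translate every $\mtl$ formula over finite timed words into an equivalent one-clock alternating timed automaton. Each subformula becomes a location, the single clock is reset whenever a timing obligation is opened, and the conjunctive (alternating) branching captures the universal ``all intermediate positions'' requirement of $\varphi_1 \until_I \varphi_2$. It then suffices to decide nonemptiness of such automata. The second step is to present a configuration of the automaton as a finite set of pairs $(\ell, x)$ where $\ell$ is a location and $x$ a clock value, abstract each real clock value to its integer part together with the ordering of fractional parts (a region-style abstraction), and equip the resulting space of configurations with a well-quasi-order. One then runs a backward-reachability procedure from the accepting configurations: because the transition relation is monotone with respect to the order and the order is a WQO, the procedure terminates, and it decides whether some initial configuration can reach acceptance.

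The main obstacle in this part is verifying that the abstracted configuration space genuinely forms a well-quasi-order compatible with the transitions. Clock values range over the reals, so one must first collapse them to finitely many region indices and fractional-order patterns, then observe that a configuration is a \emph{word} (or multiset) over this finite, well-ordered set of region descriptors; the well-quasi-ordering of such sequences follows from a Higman/Dickson-type argument, and one must check that firing a transition (advancing time and taking a letter) preserves the ordering in the sense required by the WSTS framework. Getting the compatibility of the upward-closed sets with the timed transition relation correct is the delicate point.

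For the lower bound, the plan is to reduce from a known non-primitive-recursive problem, such as reachability (or the halting problem) for lossy counter machines. I would encode a run of the machine as a finite timed word, using the density of the time domain to lay out successive configurations and representing each counter by a bounded number of event occurrences within designated time windows; $\mtl$ formulas then enforce that consecutive configurations are related by the machine's instructions, with the inevitable ``lossiness'' matching that of the lossy machine. The non-primitive-recursive hardness is inherited directly, since the WSTS backward-reachability computation underlying decidability has running time bounded only by the Ackermann function (through the length of controlled bad sequences), and the counter-machine encoding shows this bound is essentially tight. I expect the faithful $\mtl$ encoding of the increment/decrement and zero-test instructions, while respecting the one-clock limitation, to be the most technical step of the lower-bound construction.
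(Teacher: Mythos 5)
The paper gives no proof of this theorem: it is imported verbatim from the cited reference (Ouaknine--Worrell, LICS 2005), so there is nothing in the paper itself to compare against. Your sketch reproduces the architecture of that reference's proof: translation of $\mtl$ over finite timed words into one-clock alternating timed automata, configurations as finite words over region-abstracted (location, clock) pairs well-quasi-ordered via Higman's lemma, and a WSTS-style reachability analysis for decidability; and a reduction from a faulty machine model for the non-primitive-recursive lower bound. The decidability half is essentially correct as stated.

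The lower-bound half has one concrete problem as written. A future-only $\mtl$ encoding of a machine step typically uses punctual constraints of the shape $\Box(e \Rightarrow \fut_{[1,1]} e')$, which forces every event representing a counter unit at time $t$ to have a copy at time $t+1$ but cannot forbid \emph{extra} unmatched events appearing at time $t+1$. The slack in the encoding is therefore an \emph{insertion} error, not a loss: counters can spontaneously increase between configurations, not decrease. Reducing from lossy counter machines and asserting that "the inevitable lossiness matches that of the lossy machine" gets the direction of the error wrong, and the reduction would not be faithful. This is precisely why the original proof reduces from channel machines with \emph{insertion} errors, whose reachability problem is shown non-primitive recursive by a separate duality argument with lossy channel systems. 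Your plan is repairable by switching to an insertion-error machine model (and then supplying the hardness of that model), but as stated the lower-bound reduction does not go through.
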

 
 \subsection*{Metric Temporal Logic with Counting ($\mathsf{CTMTL}$)}
 \label{sec:ctmtl}
 We denote by $\mathsf{CTMTL}$ the logic obtained by extending $\mathsf{MTL}$  
with the ability to count, by endowing two counting modalities $\mathsf{C}$ as well as $\mathsf{UT}$. \\
\noindent \emph{Syntax of $\mathsf{CTMTL}$}: 
 $\varphi::=a (\in \Sigma)~|true~|\varphi \wedge \varphi~|~\neg \varphi~|~\varphi~|~\cnt^{\geq n}_I\varphi~|~\varphi \until_{I,\eta} \varphi$, 
 where   $I \in I\nu$, $n \in \mathbb{N} \cup \{0\}$ and $\eta$ is a {\it threshold  formula} of the form
 $\#\varphi \geq n$ or $\#\varphi < n$.   
The counting modality
$\cnt^{\geq n}_I\varphi$ is called the $\mathsf{C}$ modality, while 
$\varphi \until_{I,\eta} \varphi$ is called the $\mathsf{UT}$ modality.
Let $\rho=(\sigma, \tau) \in T\Sigma^*$, 
$i,j \in dom(\rho)$. 
 Define \\
 \noindent $~~~~~~~~~~~N^{\rho}[i,I](\varphi)=\{k \in dom(\rho)\mid 
t_k \in t_i+I \wedge \rho, k \models \varphi\}$, and \\ 
\noindent $~~~~~~~~~~~~\rho[i,j](\varphi)=\{k \in dom(\rho) \mid i < k < j \wedge \rho, k \models \varphi\}$.\\
  Denote by  $|N^{\rho}[i,I](\varphi)|$ and $|\rho[i,j](\varphi)|$ respectively, the cardinality of $N^{\rho}[i,I](\varphi)$
  and $\rho[i,j](\varphi)$.
$|N^{\rho}[i,I](\varphi)|$ is the number of points 
in $\rho$ that lie in the interval $t_i+I$,  and which satisfy
$\varphi$, while  $|\rho[i,j](\varphi)|$ is 
the number of points lying between $i$ and $j$ which satisfy $\varphi$.
Define $\rho, i \models \cnt^{\geq n}_I \varphi$ iff   
  $|N^{\rho}[i,I](\varphi)| \geq n$.
Likewise, \noindent  $\rho, i \models \varphi_1 \until_{I,\#{\varphi} \geq n} \varphi_2$ iff  $\exists j {>} i$, $\rho, j \models \varphi_2$, ${t_j-t_i \in I}$, and  
$\rho, k \models \varphi_1$, $\forall {i < k < j}$
and $|\rho[i,j](\varphi)| \geq n$. 
\\
\noindent \emph{Remark}: 
The classical until operator of $\mathsf{MTL}$ is captured in $\mathsf{CTMTL}$ since $\varphi \until_I \psi \equiv \varphi \until_{I, \#{true} \geq 0} \psi$.
We can express $\cnt^{\sim n}_I$ and $\#{\varphi \sim n}$ for $\sim \in \{\leq, <,>, =\}$ in $\mathsf{CTMTL}$ since  
 $\cnt^{< n}_I \varphi \equiv \neg \cnt^{\ge n}_I \varphi$, $\cnt^{> n}_I \varphi \equiv \cnt^{\ge n+1}_I \varphi$, $\cnt^{\le n}_I \varphi \equiv \neg \cnt^{\ge n+1}_I \varphi$ and    
   $\#{\varphi} > n \equiv \#{\varphi} \geq n+1$, 
   $\#{\varphi} \leq n \equiv \neg (\#{\varphi} > n+1)$. 
Boolean combinations of threshold formulae are also expressible in $\mathsf{CTMTL}$ as shown by Lemmas \ref{cnf-open} and 
\ref{final} in Appendix \ref{app:tmtl}. Thus, $a \until_{(1,2), \#d=3 \wedge \#{\cnt^{<2}_{(0,1)}}\leq 5}c$ is expressible in $\mathsf{CTMTL}$. 
The \emph{nesting depth} of a $\mathsf{CTMTL}$ formula is the maximum nesting of $\mathsf{C, UT}$ operators. Formally, 
\begin{itemize}
\item  $depth(\varphi_1 \until_{I,\#{\varphi_3} \sim n} \varphi_2) = max(depth(\varphi_1),depth(\varphi_2),depth(\varphi_3) +1)$,
\item  $depth(C^{\ge n}_I \varphi) = depth(\varphi)+1$, $depth(\varphi \wedge \psi)=max(depth(\varphi), depth(\psi))$,
\item $depth(\neg \varphi)=depth(\varphi)$ and $depth(a) = 0$ for any $a \in \Sigma$.
\end{itemize}
\noindent For example, 
$depth(a \until_{[0,2], \eta} \cnt^{\geq 1}b)$ with 
$\eta=\textcolor{blue}{\#[}{a \until_{(0,1),\textcolor{red}{\#[}\cnt^{=2}_{(0,1)}a \wedge \fut_{(0,1),\textcolor{magenta}{\#}d = 2}\textcolor{red}{]}\geq 1}}c\textcolor{blue}{]} < 7$
 is 3. We obtain the following natural fragments of $\mathsf{CTMTL}$ as follows:
We denote by $\mathsf{CMTL}$, the fragment of $\mathsf{CTMTL}$ obtained by using the 
 $\mathsf{C}$ modality and the $\until_I$ modality.  Further, $\mathsf{C^0MTL}$
denotes the subclass of $\mathsf{CMTL}$ where the interval $I$ 
in $\cnt^{\sim n}_I\varphi$ is of the form 
$I=\langle 0,b \rangle$. When the interval is of the form $I=\langle 0,1\rangle$, then we denote 
the class by $\mathsf{C^{(0,1)}MTL}$. 
Note that $\mathsf{C^{(0,1)}MTL}$ is the class which allows 
counting in the next one unit of time.
  This kind of counting (unit counting in future and past) was introduced and studied in \cite{rabin} 
 in the continuous semantics.  $\mathsf{C^{(0,1)}MTL}$ is the pointwise counterpart 
 of this logic, with only future operators.
Clearly, $\mathsf{C^{(0,1)}MTL} \subseteq \mathsf{C^{0}MTL} \subseteq \mathsf{CMTL} \subseteq \mathsf{CTMTL}$.
 Restricting $\mathsf{CTMTL}$ to the $\mathsf{UT}$ modality, we obtain the fragment 
$\mathsf{TMTL}$. Restricting the $\mathsf{C}$ modality to 
$\mathsf{C^{(0,1)}}$ or $\mathsf{C^0}$ and also allowing the $\mathsf{UT}$ modality, one gets 
the fragments $\mathsf{C^{(0,1)}TMTL}$ and $\mathsf{C^{0}TMTL}$ respectively.   
If we disallow the $\mathsf{C}$ modality, restrict 
the intervals $I$ appearing in the formulae to non-punctual intervals of the form 
$\langle a, b \rangle$ ($a \neq b$), and restrict threshold formulae $\eta$ to be of the form 
$\#{true} \geq 0$, then we obtain $\mathsf{MITL}$.

\section{Expressiveness Hierarchy in the Counting Zoo}
 \label{sec:express}
 In this section, we study the expressiveness and hierarchy of the logics introduced in section \ref{zoo}. The main results of this section are the following:
 \begin{theorem}
 \label{thm:hier}
 $\mathsf{MTL} \subset \mathsf{C^{(0,1)}MTL} \subset \mathsf{C^0MTL} \subset \mathsf{TMTL} =\mathsf{C^{0}TMTL} \subset \mathsf{CTMTL}$. 
  Moreover,  $\mathsf{CMTL}$ and $\mathsf{TMTL}$ are incomparable, and  $\mathsf{C^0MTL}\subset \mathsf{CMTL}$.
 \end{theorem}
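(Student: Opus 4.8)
The plan is to split the statement into (i) the containments, which are largely syntactic, and (ii) the strictness and incomparability claims, which I would prove uniformly through EF games adapted to each fragment and extended with counting moves. The chain $\mathsf{C^{(0,1)}MTL} \subseteq \mathsf{C^0MTL} \subseteq \mathsf{CMTL} \subseteq \mathsf{CTMTL}$ already holds by syntactic inclusion, and $\mathsf{TMTL} \subseteq \mathsf{CTMTL}$ and $\mathsf{TMTL} \subseteq \mathsf{C^0TMTL}$ are immediate. The two inclusions that need genuine work on the ``$\subseteq$'' side are $\mathsf{C^0MTL} \subseteq \mathsf{TMTL}$ and the reverse inclusion $\mathsf{C^0TMTL} \subseteq \mathsf{TMTL}$ required for the equality $\mathsf{TMTL} = \mathsf{C^0TMTL}$; both reduce to a single translation lemma. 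Everything remaining in the statement is a non-expressibility result, to be settled by EF games.

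For the translation lemma I would show that the anchored counting modality $\cnt^{\geq n}_{\langle 0,b\rangle}\varphi$ is definable using the $\mathsf{UT}$ modality. The intuition is that both operators count $\varphi$-points in a window beginning at the current position; the only difference is that $\cnt$ fixes the far endpoint by the time bound $b$, whereas $\chi \until_{I,\#\varphi \geq n}\psi$ fixes it by a witness point $\psi$. I would therefore express ``at least $n$ occurrences of $\varphi$ within time $b$'' by asserting a witness point inside $\langle 0,b\rangle$ up to which the threshold count $\#\varphi \geq n$ is already met, handling the half-open/closed boundary conventions and the fact that $\until$ counts \emph{strictly} between its endpoints. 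Substituting this translation into any $\mathsf{C^0MTL}$ (resp.\ $\mathsf{C^0TMTL}$) formula yields an equivalent $\mathsf{TMTL}$ formula, giving both $\mathsf{C^0MTL} \subseteq \mathsf{TMTL}$ and $\mathsf{TMTL} = \mathsf{C^0TMTL}$. I expect this step to be routine modulo boundary bookkeeping.

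The separations are the heart of the argument, and I would obtain each one by exhibiting, for every round budget $k$, a pair of timed words $(\rho_k,\rho'_k)$ that differ in membership in a target language but on which Duplicator wins the $k$-round EF game for the weaker logic; by the EF theorem (an Ehrenfeucht--Fra\"iss\'e characterization for each fragment, extended with counting moves, that I would set up as the enabling tool) this witnesses non-definability in that logic. Concretely, for $\mathsf{MTL} \subset \mathsf{C^{(0,1)}MTL}$ the target is ``at least two $a$'s in the next unit'', namely $\cnt^{\geq 2}_{(0,1)}a$, and the words $\rho_k,\rho'_k$ differ only in the multiplicity of $a$'s packed into a unit window, which plain $\mathsf{MTL}$ (having no counting move) cannot detect. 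For $\mathsf{C^{(0,1)}MTL} \subset \mathsf{C^0MTL}$ the target counts over a window $\langle 0,2\rangle$ in a way that cannot be recombined from unit-window counts. For $\mathsf{C^0MTL} \subset \mathsf{TMTL}$ the target is a genuine $\mathsf{UT}$ property $\chi \until_{I,\#\varphi \geq n}\psi$ whose counting window ends at a $\psi$-witness rather than at a fixed time, which anchored counting cannot emulate. For $\mathsf{TMTL} \subset \mathsf{CTMTL}$, and likewise for $\mathsf{C^0MTL} \subset \mathsf{CMTL}$, the target uses a non-anchored window such as $\cnt^{\geq 1}_{(1,2)}a$: since the $\mathsf{UT}$ count always begins at the current point, and counting after an $\until$-jump is anchored at the landing point rather than at a fixed future offset, neither $\mathsf{TMTL}$ nor $\mathsf{C^0MTL}$ can count inside a window whose near endpoint is strictly in the future.

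Finally, incomparability of $\mathsf{CMTL}$ and $\mathsf{TMTL}$ splits into two non-containments, each by an EF game of the appropriate kind: for $\mathsf{CMTL} \not\subseteq \mathsf{TMTL}$ I reuse the non-anchored witness $\cnt^{\geq 1}_{(1,2)}a$ against the $\mathsf{TMTL}$ game, and for $\mathsf{TMTL} \not\subseteq \mathsf{CMTL}$ I reuse the count-until-a-state witness against the $\mathsf{CMTL}$ game, arguing that counting confined to fixed time windows cannot track a count measured up to a data-dependent witness point. The main obstacle throughout is the design of the word families and the verification of Duplicator's strategy in the counting-extended games: Spoiler may now play a count move selecting a whole multiset of positions in a window, and Duplicator must answer with a window whose matched points agree up to the relevant threshold while preserving the ordinary temporal and metric invariants for the remaining rounds. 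Making these invariants survive across rounds, especially where the distinction is purely in window anchoring (the $\mathsf{C^{(0,1)}}$ versus $\mathsf{C^0}$ and the $\mathsf{UT}$ versus $\mathsf{C}$ separations), is the delicate part.
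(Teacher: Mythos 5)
Your overall architecture is exactly the paper's: the containments are syntactic except for the single translation $\cnt^{\geq n}_{\langle 0,b\rangle}\varphi \equiv \fut_{\langle 0,b\rangle,\#\varphi\geq n}\,true$ (which the paper uses verbatim to get $\mathsf{C^0MTL}\subseteq\mathsf{TMTL}=\mathsf{C^0TMTL}$), and every strictness or incomparability claim is discharged by an EF theorem for the counting-extended games together with a family of word pairs indexed by the round/pebble budget. Your choice of separating formulae also matches the paper's for the first two strict inclusions ($\cnt^{=2}_{(0,1)}a$ versus your $\cnt^{\geq 2}_{(0,1)}a$, and $\cnt^{\geq 2}_{(0,2)}a$), and your two halves of the incomparability claim correspond to the paper's Lemmas~\ref{tmtlless} and~\ref{cmtlless} (the latter using $\fut_{(0,1),\#a\geq 3}b$ against the full $\mathsf{CMTL}$ game, which, as you note, is also what yields strictness of $\mathsf{C^0MTL}\subset\mathsf{TMTL}$ and of $\mathsf{C^0MTL}\subset\mathsf{CMTL}$).

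There is one concrete error: the witness you propose for ``counting in a non-anchored window,'' namely $\cnt^{\geq 1}_{(1,2)}a$, does not separate anything, because $\cnt^{\geq 1}_{(1,2)}a \equiv \fut_{(1,2)}a$ is already plain $\mathsf{MTL}$ (a count of at least one is mere existence, and the interval $(1,2)$ forces the witnessing point to lie strictly in the future). With this formula you could never build the required word families --- any two words agreeing on all $\mathsf{MTL}$ formulae agree on it --- so the separation step for $\mathsf{TMTL}\subset\mathsf{CTMTL}$ and $\mathsf{C^0MTL}\subset\mathsf{CMTL}$ would fail as written. The repair is immediate and is what the paper does: take threshold at least two, i.e.\ $\cnt^{\geq 2}_{(1,2)}a$ (Lemma~\ref{tmtlless}), for which the genuinely hard part is then Duplicator's strategy maintaining a one-segment lag over sufficiently long words. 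Your stated intuition (``$\mathsf{UT}$ counts are anchored at the current point or at the landing point of an until-jump, so a count of two or more inside a strictly future window cannot be recovered'') is the right one; just make sure the threshold in the witness actually exercises it.
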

  While Theorem \ref{thm:hier} shows that there is an expressiveness  gap between classical $\mathsf{MTL}$ and $\mathsf{CTMTL}$, we show later that both these logics are   equisatisfiable. Given $\varphi \in  \mathsf{CTMTL}$, we can construct a formula $\psi \in  \mathsf{MTL}$ such that $\varphi$ is satisfiable iff $\psi$ is. 
 Note that our notion of  equisatisfiability is a special one {\it modulo temporal projections}. 
 If $\varphi$ is over an alphabet $\Sigma$, $\psi$ is constructed 
 over a suitable alphabet $\Sigma' \supseteq \Sigma$ 
 such that  $L(\psi)$, when projected over to $\Sigma$ gives  $L(\varphi)$.  

 \begin{theorem}
 \label{thm:es1}
Satisfiability Checking of $\mathsf{CTMTL}$ is decidable over finite timed words. 
 \end{theorem}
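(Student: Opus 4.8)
The plan is to establish an equisatisfiability-preserving reduction from $\mathsf{CTMTL}$ to $\mtl$ and then invoke the decidability of $\mtl$ over finite timed words (\cite{Ouaknine05}). Given $\varphi \in \mathsf{CTMTL}$ over $\Sigma$, I would construct $\psi \in \mtl$ over an extended alphabet $\Sigma' \supseteq \Sigma$ such that $\varphi$ is satisfiable iff $\psi$ is, where the fresh propositions in $\Sigma' \setminus \Sigma$ and the extra time points serve only as bookkeeping that is erased by the temporal projection. Concretely, every model of $\psi$ should project (over $\Sigma$, after deleting the oversampled points) to a model of $\varphi$, and conversely every model of $\varphi$ should be extendable, by oversampling and decoration with the fresh propositions, to a model of $\psi$. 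This is exactly the notion of equisatisfiability modulo temporal projections announced before Theorem~\ref{thm:es1}.

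First I would process $\varphi$ bottom-up, eliminating the counting modalities one at a time from the innermost occurrences outward. For an innermost occurrence $\cnt^{\geq n}_I \chi$ or $\chi_1 \until_{I, \#\kappa \sim n} \chi_2$, the argument formulae are either atomic or have already been replaced, in a previous round, by fresh witness propositions together with an $\mtl$ constraint over $\Sigma'$ that forces each witness proposition to hold exactly where its replaced subformula holds. Hence it suffices to explain how to eliminate a single counting modality whose arguments are propositions; I can then substitute it by a fresh witness throughout and recurse. The number of rounds is the nesting depth $depth(\varphi)$, and each round contributes a bounded blow-up, so the overall construction is effective.

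The heart of the reduction is oversampling. I would add fresh anchor points at all integer timestamps lying within the span of the word, mark them with fresh propositions, and then maintain, using further fresh propositions and $\mtl$ next/until formulae, a running count (saturated at the relevant threshold $n$) of the counted proposition within each unit cell delimited by consecutive anchors. Because the intervals $I = \langle a,b\rangle$ have integer endpoints, the window $t_i + I$ around any evaluation point $i$ straddles only a bounded number of unit cells (bounded by the length $b-a$); its count is the sum of the saturated counts of the fully covered cells plus the partial contributions of the two boundary cells, each of which is recoverable by an $\mtl$ timing comparison between $t_i$ and the surrounding anchors. Since every summand is bounded by $n$, the predicate ``the total is $\geq n$'' becomes a fixed finite disjunction over the admissible combinations and is therefore expressible in $\mtl$. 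The $\mathsf{UT}$ modality is treated analogously, with a running count of $\kappa$ maintained along the word so that the count strictly between $i$ and the until-witness $j$ can be read off, while the metric requirement $t_j - t_i \in I$ is enforced by the ordinary timed until over the oversampled model.

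The step I expect to be the main obstacle is the $\mathsf{UT}$ modality, where the count window $(i,j)$ is determined by the existentially chosen witness $j$ and is measured by position rather than by an a priori fixed time window, so the counting must be synchronised with the very witness that satisfies the metric and continuity constraints of the until. Coupling timing with counting in a single modality --- precisely the combination the introduction flags as new --- is the delicate part, and the oversampling must be arranged so that the running $\kappa$-count can be compared at $i$ and at $j$ without the fresh propositions disturbing the projection back to $\Sigma$. Once the reduction is verified to be equisatisfiable modulo temporal projection, decidability of $\mathsf{CTMTL}$ follows immediately from \cite{Ouaknine05}; moreover, restricting to non-punctual intervals keeps the target formula in $\mathsf{MITL}$ with only an exponential blow-up, yielding the stated elementary decidability of $\mathsf{MITL}$ extended with the $\cnt$ and $\mathsf{UT}$ modalities.
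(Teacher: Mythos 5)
Your overall skeleton is the paper's: flatten the counting modalities into temporal definitions over fresh witness propositions, eliminate each definition by an equisatisfiable $\mathsf{MTL}$ formula modulo temporal projection (oversampling at integer timestamps where needed), and conclude by the decidability of $\mathsf{MTL}$ over finite timed words. The paper, however, only oversamples for the $\until_{I,\#b\ge n}$ case (Lemma~\ref{th-os}); the $\cnt^{\ge n}_{\langle l,u\rangle}$ and $\until_{I,\#b\le n}$ cases are handled by a \emph{simple} projection (Lemma~\ref{cnt-sp}): a single counter $b_0,\dots,b_n$ incremented modulo $n+1$ at every $b$ is threaded through the whole word, and ``at least $n$ occurrences of $b$ in $t_i+I$'' becomes ``all of $b_1,\dots,b_n$ occur in $t_i+I$'', i.e.\ $\bigwedge_k \fut_I b_k$ --- no anchors, no summation over cells.

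Now the gaps. The main one: for the $\cnt$ modality your per-cell summation needs the partial count of $b$'s in the boundary cell $[t_i+a,\lceil t_i+a\rceil)$, and you assert this is ``recoverable by an $\mathsf{MTL}$ timing comparison between $t_i$ and the surrounding anchors.'' But $t_i+a$ is not a point of the (oversampled) word, and an $\mathsf{MTL}$ formula evaluated at $i$ cannot anchor a sub-count at that instant; this is exactly where the naive decomposition breaks. The paper's mod-$(n+1)$ counter sidesteps it entirely for $\cnt$; in the genuinely hard $\#b\ge n$ case the partial window conveniently begins at the current point $j$ itself, so the paper counts from $j$ up to the integer anchor $\alpha=\lceil t_j+l\rceil$ with an \emph{untimed} counting-until $[\cdots]\until_{\#b\ge n-h}c_i$ (rewritable in $\mathsf{LTL}$ by \cite{cltl}) and reads the remaining count from $\alpha$ to the witness off the label $a_h$ precomputed at $j$. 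Second, your ``bounded number of unit cells'' premise fails for $I=\langle l,\infty)$; the paper disposes of $\cnt^{\ge n}_{\langle l,\infty)}$ separately (Lemma~\ref{remove-cntinf}) by a direct nested-until $\mathsf{MTL}$ equivalence. Third, when you conjoin the constraints produced for the different witnesses, equisatisfiability modulo oversampled projections is \emph{not} closed under conjunction unless each conjunct is relativized to the action points (the $ONF_{\Sigma}$ normal form): two reductions can each be correct while their naive conjunction is unsatisfiable, because one introduces non-action points that falsify the other's box-formulas. Your remark that the fresh propositions should not ``disturb the projection'' gestures at this, but the relativization step must be made explicit for the final conjunction of the $\zeta_i$ to be sound.
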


 
The rest of this paper is devoted to the proofs of Theorems \ref{thm:hier} and \ref{thm:es1}.
We establish Theorem \ref{thm:hier} through Lemmas \ref{tmtlless} to \ref{cmtlless}. 
To prove the separation between two logics,  we define  model-theoretic games. 

\subsection{Model-Theoretic Games}
Our games are inspired from the standard model-theoretic games \cite{straubing}, \cite{concur11}. 
 The 
$\mathsf{MTL}$ games introduced in \cite{concur11}  can be found in Appendix \ref{app:mtlgame}. 
 We introduce  $\mathsf{CTMTL}$ games. 
  \paragraph*{$\mathsf{CTMTL}$ Games}
Let $(\rho_1, \rho_2)$ be a pair of  timed words.
We define a  $r$-round $k$-counting pebble $I_{\nu}$ game on  $(\rho_1, \rho_2)$.
The game is played on  $(\rho_1, \rho_2)$ by two players, the $\spl$ and 
the $\dpl$. The $\spl$ will try to show that $\rho_1$ and $\rho_2$ are $\{r,k\}$-distinguishable 
by some formula in $\mathsf{CTMTL}$\footnote{$\rho_1, \rho_2$ are $\{r,k\}$-distinguishable 
iff there exists a $\mathsf{CTMTL}$ formula $\varphi$ having $depth(\varphi) \leq r$ with
 max counting constant $\leq k$ in any threshold formula $\eta$ or $\mathsf{C}$ modality in $\varphi$ such that 
$\rho_1 \models \varphi$ and $\rho_2 \nvDash \varphi$ or vice-versa.} while the $\dpl$ will try to show that $\rho_1, \rho_2$ 
are $\{r,k\}$-indistinguishable  in $\mathsf{TMTL}$. Each player has $r$ rounds and has access to a finite set of $\leq k$ pebbles 
from a box of pebbles $\mathcal{P}$
in each round of the game.  
Let $I_{\nu}$ be the set of permissible intervals allowed in the game. 

A configuration of the game at the start of a round $p$ is a pair of points $(i_p, j_p)$ where $i_p \in dom(\rho_1)$ and $j_p \in dom(\rho_2)$. 
A configuration is called partially isomorphic, denoted $isop(i_p ,j_p)$ iff $\sigma_{i_p} = \sigma_{j_p}$.
Exactly one of the $\spl$ or the $\dpl$ eventually wins the game.  
The initial configuration is $(i_1, j_1)$, the starting positions of both the words, before the first round. 
A 0-round game is won by the $\dpl$ iff $isop( i_1, j_1)$.  The $r$ round game is played by first playing one round from the starting position. Either the $\spl$ wins the round, and the game is terminated or the $\dpl$ wins the round, and now the second round is played from this new configuration and so on. The $\dpl$ wins the game only if he wins all the rounds. 
 The following are the rules 
 of the game in any round. Assume that the current configuration is $(i_p, j_p)$.
	\begin{itemize}
		\item If $isop(i_p,j_p)$ is not true, then $\spl$ wins the game, and 
		the game is terminated. Otherwise, the game continues as follows:
\item The $\spl$ chooses one of the words by choosing $\rho_x,  x \in \{1,2\}$. 
		$\dpl$ has to play on the other word $\rho_y$, $x \neq y$.  Then $\spl$ plays either a  $\until_{I, \eta}$ round, by choosing an interval $I\in I_{\nu}$, and a number $c \leq k$ of counting pebbles to be used,  or a $\cnt^{\sim c}_I$ round 
		by choosing an interval $I\in I_{\nu}$ and a number $c \leq k$ of counting pebbles to be used. 
		The number $c$ is obtained from ${\eta=\#{\varphi} \geq c}$ or ${\eta=\neg(\#{\varphi} \geq c})$.\\
\noindent{$\until_{I, \eta}$ \emph{round}}: 		
		 Given the current configuration as $(i_p,j_p)$ with $isop(i_p,j_p)$, then  
		\begin{itemize}
			\item $\spl$ chooses a position $i'_p \in dom(\rho_x)$ such that $i_p < i'_p$  and $(t_{i'_p} - t_{i_p}) 
			\in I$.
			\item The $\dpl$ responds by choosing $j'_p \in dom(\rho_y)$ in the other word such that $j_p < j'_p$ 
			and $(t_{j'_p} - t_{j_p}) \in I$.   
			If the $\dpl$ cannot find such a position, the
			$\spl$ wins the round and the game. Otherwise, the game continues and $\spl$ chooses one of the following three options. 
			\item $\fut$ Part: The round ends with the configuration $(i_{p+1}, j_{p+1})=(i'_p, j'_p)$.
		 \item $\until$  Part:
				 $\spl$ chooses a position $j''_p$ in $\rho_y$ such that $j_p < j''_p < j'_p$.
				  The $\dpl$
			responds by choosing  a position $i''_p$ in $\rho_x$ such that $i_p < i''_p < i'_p$.
			 The round ends with the configuration $(i_{p+1}, j_{p+1})=(i''_p,j''_p)$.  
			If  $\dpl$ cannot choose an $i''_p$, the game ends with $\spl$'s win.
	\item Counting Part : 
			First, $\spl$ chooses one of the two words to play in the counting part.  
			In his chosen word, $\spl$ keeps $c \le k$ pebbles from $\mathcal{P}$
			at $c$ distinct positions between the points $j_p$ and $j'_p$ (or $i_p$ and $i'_p$ depending on the choice of the word).  In response, the $\dpl$ also keeps $c$ pebbles from $\mathcal{P}$ at $c$ distinct positions between the points $i_p$ and $i'_p$  
			  (or $j_p$ and $j'_p$) in his word. 
			  $\spl$ then chooses a pebbled position say 
			  $i''_p$  (note that $i_p < i''_p < i'_p$) 
			  in the $\dpl$'s word.   In response, $\dpl$ chooses a pebbled position 
			  $j''_p$   (note that $j_p < j''_p < j'_p$) 
			  in the $\spl$'s word, and the game continues from the configuration $(i_{p+1}, j_{p+1})=(i''_p, j''_p)$. 
			  At the end of the round, the pebbles are returned to the box of pebbles $\mathcal{P}$.
			  \end{itemize}
\noindent{$\cnt_I^{\sim c}$ \emph{round}}: Given the current configuration as $(i_p,j_p)$ with $isop(i_p,j_p)$, 
 $\spl$ chooses  an interval $I\in I_{\nu}$ as well as a number $c \leq k$.  $\spl$ then chooses one of the words to play (say $\rho_1$). 
From $i_p$, $\spl$ places $c$ pebbles from $\mathcal{P}$ in the points lying in the interval $t_{i_p}+I$. In response, 
  $\dpl$ also places $c$ pebbles from $\mathcal{P}$ in the points lying in $t_{j_p}+I$. 
  $\spl$ now picks a pebbled position $j'_p$ in the word $\rho_2$, while $\dpl$ picks a pebbled position 
 $i'_p$ in the $\spl$'s word. The round ends
 with the configuration $(i'_p,j'_p)$. At the end of the round, the pebbles are returned to the box of pebbles $\mathcal{P}$.  \\ 
			  \noindent{\it Intuition on Pebbling}:
			  To give some intuition behind the pebbling, consider $\#{\varphi} \geq c$ or $\cnt^{\geq c}_I\varphi$.
			   The idea behind $\spl$ keeping $c$ pebbles on his word in the chosen interval $I$ is to say that these are the $c$ points where $\varphi$ evaluates to true. $\dpl$ is expected to find $c$ such points in his word. If $\spl$ suspects that in the $\dpl$'s 
			  word,  there are $< c$ positions in $I$ where $\varphi$ holds good, he picks up the appropriate pebble at the position where $\varphi$ fails. However, any pebbled position in $\spl$'s word will satisfy $\varphi$.  In this case, $\dpl$ loses. Similarly, if we have $\neg(\#{\varphi} \geq c)$, 
		or $\cnt^{<c}_I\varphi$, 	  
			  then $\spl$ chooses the word (say $\rho_1$) on which $\varphi$ evaluates to true $\geq c$ times. Then $\dpl$ is on $\rho_2$.  
			  The idea is for $\spl$ to find if there exist $c$ or more positions in the interval $I$ in $\rho_1$ where $\varphi$ 
			  holds good, and if so, pebble those points. This is based on $\spl$'s suspicion that there are atleast 
			  $c$ positions in $I$ where $\varphi$ evaluates to true, violating the formula. 
			  In response, $\dpl$ does the same 
			  on $\rho_2$. $\spl$ will now pick any one of the $c$ pebbles from $\rho_2$ and check for $\neg \varphi$. This is again based on $\spl$'s belief that whichever $c$ points $\dpl$ pebbles in $\rho_2$, $\neg \varphi$ will evaluate to true in atleast one of them. If $\varphi$ holds at all the $c$ points in $\rho_1$,  then $\dpl$ will lose on picking any pebble from $\rho_1$. 
			\item We can restrict various moves according to the modalities provided by the logic. For example, in a  $\mathsf{TMTL}[\fut_I]$ game, the possible rounds are $\fut_I$ and $\fut_{I, \eta}$. A $\mathsf{CMITL}$ game has only $\until_I, \cnt^{\geq n}_I$ rounds, with $I_{\nu}$ containing only non-punctual intervals.
	\end{itemize}
 
\noindent \textbf{Game equivalence:} $(\rho_1, i_1) \approx_{r,k,I_{\nu}} (\rho_2, j_1)$ iff for every $r$-round, $k$-counting pebble $\mathsf{CTMTL}$ game over the
	words $\rho_1, \rho_2$ starting from the configuration $(i_1, j_1)$, the $\dpl$ always has a winning strategy.\\
\textbf{Formula equivalence:} $(\rho_1, i_1) \equiv^{\mathsf{CTMTL}}_{r,k,I_{\nu}} (\rho_2, j_1)$ iff for every $\mathsf{CTMTL}$ formula $\varphi$ of depth $\le r$ having max counting constant $\leq k$ in the $\mathsf{C, UT}$  modalities,
	 $\rho_1, i_1 \models \varphi \iff \rho_2, j_1\models \varphi$. The proof of Theorem \ref{ctmtl-game} can be found in Appendix \ref{app:tmtlgame}.
\begin{theorem}
	\label{ctmtl-game}
	$(\rho_1, i_1) \approx_{r,k,I_{\nu}} (\rho_2,  j_1)\ iff\ (\rho_1, i_1) \equiv^{\mathsf{CTMTL}}_{r,k,I_\nu} (\rho_2,  j_1)$
\end{theorem}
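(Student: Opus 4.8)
The plan is to establish the biconditional by induction on the round number $r$, keeping $k$ and $I_\nu$ fixed and proving both implications together, so that each direction of the inductive hypothesis is available where the other direction needs it. The base case $r=0$ is immediate from the definitions: the $0$-round game is won by the $\dpl$ exactly when $isop(i_1,j_1)$, i.e. $\sigma_{i_1}=\sigma_{j_1}$, while the depth-$0$ $\mathsf{CTMTL}$ formulae are the Boolean combinations of propositions, so two points agree on all of them precisely when they carry the same propositions. Hence $\approx_{0,k,I_\nu}$ and $\equiv^{\mathsf{CTMTL}}_{0,k,I_\nu}$ both collapse to $isop$.

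For the inductive step, assume the theorem for $r-1$ (all $k$, all pairs). I would first prove the contrapositive of ``$\approx\Rightarrow\equiv$'': a distinguishing formula lets the $\spl$ win. Writing a distinguishing $\varphi$ of depth $\le r$ as a Boolean combination of propositions and of formulae whose outermost connective is $\until_{I,\eta}$ or $\cnt^{\sim c}_I$, the two points must already disagree on one such subformula (they cannot disagree only on propositions, since that is the $isop$ check), so it suffices to treat a single distinguishing modal formula, say $\varphi=\alpha\until_{I,\#\gamma\ge c}\beta$ holding at $(\rho_1,i_1)$ and failing at $(\rho_2,j_1)$ (the cases $\#\gamma<c$ and $\cnt^{\ge c}_I\gamma$ are symmetric, played as in the \emph{Intuition on Pebbling} paragraph). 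The $\spl$ chooses $\rho_1$, the interval $I$, the constant $c$, and the witness $i'_1$ guaranteed by the semantics; whatever response $j'_1$ the $\dpl$ gives, the failure of $\varphi$ at $j_1$ forces $\beta$ to fail at $j'_1$, or some intermediate point to fail $\alpha$, or fewer than $c$ intermediate points to satisfy $\gamma$. The $\spl$ plays the matching sub-option ($\fut$, $\until$, or Counting), reaching a configuration separated by a depth-$(r-1)$ formula ($\beta$, $\alpha$, or $\gamma$), and wins the remaining $(r-1)$-round game by the inductive hypothesis.

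Conversely, I would show that a $\spl$ win yields a distinguishing formula of depth $\le r$. The engine is that, by the inductive hypothesis, $\equiv^{\mathsf{CTMTL}}_{r-1,k,I_\nu}$ has finite index: over the finite $\Sigma$, the fixed finite $I_\nu$, and counting constants bounded by $k$, there are only finitely many depth-$(r-1)$ classes, each admitting a characteristic (Hintikka) formula $\chi$ of depth $\le r-1$ holding exactly on its class. Inspecting the $\spl$'s first winning move, if it is a $\until_{I,\eta}$ round in $\rho_1$ with witness $i'_1$, let $\chi_\beta$ be the characteristic formula of $i'_1$; since the $\spl$ wins against every $\dpl$ response, each response is defeated by a depth-$(r-1)$ characteristic formula via one of the three sub-options (some $\chi_{\alpha_m}$ for an $\alpha$-mismatch, $\chi_\beta$ itself for the $\fut$ part, or a $\chi_\gamma$ for a count deficit). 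Collecting these finitely many formulae, I assemble the depth-$r$ formula $\bigl(\bigwedge_m \chi_{\alpha_m}\bigr)\until_{I,\#\chi_\gamma\ge c}\chi_\beta$, which holds at $(\rho_1,i_1)$ and fails at $(\rho_2,j_1)$ because every admissible witness in $\rho_2$ violates one conjunct; the $\cnt$-round case produces $\cnt^{\ge c}_I\chi_\gamma$ analogously.

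The main obstacle is this assembly step. Packaging the three independent failure modes of a single $\until_{I,\eta}$ round into one threshold-until formula requires, first, that the finitely many depth-$(r-1)$ characteristic formulae genuinely exist, which rests on the finite-index claim proved by an auxiliary induction on $r$ (finitely many depth-$0$ types, closed under the finitely many intervals of $I_\nu$ and the bounded counting constants); and second, that the pebble argument's ``at least one differing point'' is faithfully rendered as a strict count inequality, so that the chosen threshold $\#\chi_\gamma\ge c$ actually separates the two words rather than merely recording the $\spl$'s intent. Verifying that $\chi_{\alpha_m},\chi_\gamma,\chi_\beta$ can be chosen uniformly over all $\dpl$ responses — rather than response-by-response, which would blow up the depth — is the delicate point, and is exactly where the finiteness of the type index is indispensable.
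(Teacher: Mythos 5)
Your proposal follows essentially the same route as the paper's proof in Appendix~D: induction on the number of rounds, the base case collapsing to $isop$, the forward direction by letting the $\spl$ play the semantic witnesses of a distinguishing formula (you phrase it contrapositively, the paper phrases it directly via the $\dpl$'s strategy, but the content is identical), and the converse by assembling characteristic formulae $Q_x$ of the finitely many depth-$(r-1)$ classes into a single depth-$r$ formula whose until-semantics simultaneously encodes the three failure modes against every $\dpl$ response. The finite-index point you worry about is exactly the paper's assertion that each $Q_x$ is a bounded finite conjunction over the finite $I_\nu$.

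There is one concrete slip in the step you yourself flag as delicate. The first argument of the assembled until must be the \emph{disjunction} $\bigvee_{y} Q_y$ over the characteristic formulae of the points strictly between $i_1$ and $i'_1$ (the paper's $P$), not the conjunction $\bigwedge_m \chi_{\alpha_m}$: distinct characteristic formulae are mutually exclusive, so your conjunction is false at every intermediate point as soon as two of them realize different types, and the formula would then fail at $(\rho_1,i_1)$ rather than hold there. The same correction applies to the counted formula: it must be $\bigvee_{i\in I_1} Q_i$ over the $\spl$'s pebbled points (the paper's $P_{I_1}$), not a single $\chi_\gamma$, since the pebbled points need not share a type; with the disjunction, the count in $\rho_1$ is $\ge w$ by construction while the count in $\rho_2$ is $<w$ because the $\spl$ can always find a pebbled point in the $\dpl$'s word falsifying every disjunct. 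With these two connectives repaired (and the symmetric $\#<c$ and $\cnt$ variants spelled out, which you correctly note are analogous), your argument coincides with the paper's.
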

 We now use these games to show the separation between various logics. 
For brevity, from here on, we omit $I_{\nu}$ from the notations
$\equiv^{\mathsf{CTMTL}}_{r,k,I_{\nu}}$, 
 $\equiv^{\mathsf{CMTL}}_{r,k,I_{\nu}}$, $\equiv^{\mathsf{TMTL}}_{r,k,I_{\nu}}$ and  $\equiv^{\mathsf{MTL}}_{r,I_{\nu}}$.  
\begin{lemma}
\label{tmtlless}
$\mathsf{CMTL} - \mathsf{TMTL} \neq \emptyset$
\end{lemma}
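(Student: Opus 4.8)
The plan is to separate the two logics with the game characterisation of Theorem \ref{ctmtl-game}, restricted to the $\mathsf{TMTL}$ game (whose only rounds are the $\fut_I$, $\until_I$ and $\until_{I,\eta}$ rounds, with no $\cnt$ round available). The guiding intuition comes from the hierarchy itself: since $\mathsf{C^0MTL}\subseteq\mathsf{TMTL}$, counting inside an interval of the form $\langle 0,b\rangle$ is simulable by $\until_{I,\eta}$, because the $\mathsf{UT}$ modality also counts cumulatively from the current point. What $\mathsf{UT}$ cannot do is count inside a window that does \emph{not} start at $0$. Accordingly I would take as separating witness the formula
\[
\varphi \;=\; \cnt^{\geq 2}_{(1,2)}\, a \;\in\; \mathsf{CMTL},
\]
which asserts that $a$ holds at least twice in the window $(1,2)$ measured from the current point. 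Its $\mathsf{C}$ interval is not of the form $\langle 0,b\rangle$, so $\varphi\notin\mathsf{C^0MTL}$, consistent with its lying outside $\mathsf{TMTL}$.

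The argument follows the standard EF scheme. Suppose for contradiction that some $\psi\in\mathsf{TMTL}$ satisfies $L(\psi)=L(\varphi)$, and let $r$ be its nesting depth and $k$ its maximal counting constant. I would then construct, for these fixed $r,k$, a pair of timed words $\rho_1,\rho_2$ such that (i) $\rho_1\models\varphi$ while $\rho_2\not\models\varphi$, and yet (ii) $\dpl$ wins the $r$-round $k$-counting-pebble $\mathsf{TMTL}$ game on $(\rho_1,\rho_2)$ from their initial positions, i.e. $(\rho_1,1)\approx_{r,k,I_\nu}(\rho_2,1)$ in the $\mathsf{TMTL}$ game. By the $\mathsf{TMTL}$ instance of Theorem \ref{ctmtl-game} this gives $\rho_1,1\models\psi\iff\rho_2,1\models\psi$, contradicting $L(\psi)=L(\varphi)$ together with (i). For the models I would flood the window $(0,1)$ with a long block of identically placed, identically labelled points (padding whose length grows with $r$ and $k$), so that the cumulative count of $a$ seen from the origin up to any point is large and has the \emph{same prefix profile} in both words, and I would place the genuine difference inside $(1,2)$: $\rho_1$ carries two $a$'s there, $\rho_2$ only one, the ``missing'' $a$ of $\rho_2$ being realised as a point present in both words but relabelled and positioned near the right boundary of $(1,2)$, so that no witness allowed by $I_\nu$ can sit strictly after it with a valid timing constraint. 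The padding guarantees that $\dpl$ always has a spare matching point to answer any $\fut_I$ or $\until_I$ witness, and that $\spl$'s $\le k$ pebbles in any single $\until_{I,\eta}$ round, placed over a span that necessarily runs from the current anchor, are dominated by the common prefix in $(0,1)$.

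The crux is $\dpl$'s strategy, and in particular the Counting Part of the $\until_{I,\eta}$ round, the only place where $\mathsf{TMTL}$ can count. I would have $\dpl$ maintain the invariant that every reached configuration is partially isomorphic and that, from the current anchor, the two words have matching ``cumulative-from-anchor'' $a$-count profiles truncated at $k$; the single off-by-one difference lives only in the \emph{absolute} window $(1,2)$, and, because $\mathsf{UT}$ counting is anchored at the current point and confined to one open interval $(i,j)$ with a timing constraint on the single witness $j$, it is never exposed: the relabelled boundary point cannot serve as a counted interior point of any legal $\until_{I,\eta}$ move, and the padding lets $\dpl$ realign witnesses so that $\spl$'s pebbled interior points are answered by a partially isomorphic set. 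I expect the main obstacle to be exactly this step, namely proving that for every choice by $\spl$ of word, interval $I$, witness and $c\le k$ pebbles, $\dpl$ has a response preserving the invariant. This amounts to formalising that no anchored, continuity-constrained, single-witness count of depth $\le r$ can localise an event count to the proper sub-window $(1,2)$; the finite budget of $r$ rounds and $k$ pebbles, played against padding chosen as a function of $r$ and $k$, is what makes the $\dpl$ invariant maintainable and hence the separation provable.
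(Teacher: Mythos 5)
Your choice of witness formula ($\cnt^{\geq 2}_{(1,2)}a$, essentially the paper's $\cnt^{\geq 2}_{(1,2)}a$) and your overall framework (the $\mathsf{TMTL}$ restriction of the EF game, two words differing by a single point in the window $(1,2)$, padding to defeat $k$-bounded counting) coincide with the paper's. However, there is a genuine gap at precisely the step you flag as ``the main obstacle,'' and the construction you sketch would not close it. The problem arises in the very first $\until_{(1,2),\eta}$ round: $\spl$ plays on $\rho_1$ and lands on the second point of $\rho_1$ in $(1,2)$; $\dpl$ is forced onto the unique point of $\rho_2$ in $(1,2)$; $\spl$ then enters the counting part and places a single pebble on the \emph{other} point of $\rho_1$ in $(1,2)$, which is a legal interior point of the round. $\dpl$ must pebble some interior point of $\rho_2$, and all of those lie in $(0,1)$. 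When $\spl$ selects $\dpl$'s pebble, the resulting configuration has its two tokens in \emph{different unit intervals}. Your invariant (partially isomorphic, time-aligned configurations with matching truncated count profiles from the anchor) is therefore not maintainable, and your claim that the displaced point ``cannot serve as a counted interior point of any legal $\until_{I,\eta}$ move'' is false: any point strictly between the anchor and a valid witness is countable, and here it is exactly the point that exposes the difference.

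Surviving that forced misalignment is the real content of the proof, and it dictates a much more global construction than ``padding in $(0,1)$ plus a difference in $(1,2)$.'' In the paper, both words consist of $K=nl(k+1)$ unit segments, each containing an $\epsilon$/$\delta$-offset block of $2nk$ closely spaced points, arranged so that there is an explicit bijection $f$ mapping the suffix of $\rho_1$ after the problematic point to the suffix of $\rho_2$ shifted by one segment, with $p - q \in I$ iff $f(p)-f(q)\in I$ for all permitted intervals. $\dpl$'s strategy is then to accept the one-segment lag and play along $f$ for the remaining rounds, which is only possible because there are at least $(n-p)(l+1)$ segments to the right after round $p$. Without this long tail, $\spl$ wins immediately from the misaligned configuration (e.g.\ by comparing how many points lie within the next time unit of each token), since one token sits in $(1,2)$ and the other in $(0,1)$. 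So your proposal correctly identifies the formula and the proof technique, but the decisive combinatorial ingredients --- the length of the words, the per-segment offset structure, and the shift-by-one topological similarity that makes the lag sustainable --- are absent, and the local construction you describe in their place does not yield a winning $\dpl$ strategy.
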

\begin{proof}
Consider the formula $\varphi=\cnt_{(1,2)}^{\geq 2}a \in \mathsf{CMTL}$.  
We show that for any choice of $n$ rounds and $k$ pebbles, 
we can find two words $\rho_1, \rho_2$ such that 
 $\rho_1 \models \varphi, \rho_2 \nvDash \varphi$, but 
 $\rho_1 \equiv_{n,k}^{\mathsf{TMTL}} \rho_2$.
 Both $\rho_1, \rho_2$ are over $\Sigma=\{a\}$. 
  Let  $0< \delta<\epsilon < \frac{1}{10^{10nk}}$ and 
 $0< \kappa < \frac{\epsilon-\delta}{2nk}$.
 Let $l$ be the maximum constant in $\mathbb{N}$ appearing in the 
 permissible intervals $I_{\nu}$.  
 Consider the word $\rho_1$ with $nl(k+1)=K$ unit intervals,  with the following  time stamps as depicted pictorially (Figure \ref{mainexpress:fig1}) and in the table.

\flushleft{
\begin{figure}[t]
\begin{tikzpicture}

\foreach \x in {0.25}{
\draw  (\x + 0,0) -- (\x + 8,0);
\draw[dashed] (\x+8,0) -- (\x+10,0);
\draw (\x+10,0) -- (\x + 12,0);
\foreach \y in {0,1,2,3,4}
{
	
\draw (\x+\y*2,-0.25)--(\x+\y*2,0.25);

\node at (\x + \y*2, -0.35) {\tiny \y};

}
\foreach \y in {5,6}
{
	
	\draw (\x+\y*2,-0.25)--(\x+\y*2,0.25);

}
	\node at (\x + 10, -0.35) {\tiny $K$};
		\node at (\x + 12, -0.35) {\tiny $K+1$};



\foreach \y in {0.85,1.1,1.8,3.75,3.9,4.1,4.75,6.7,6.9,7.55,10.5,10.75,11.45}
{\node[fill = magenta,draw = blue,circle,inner sep=1pt,label=below:{}] at (\x+\y,0){};
}
\foreach \y in {1.2,1.25,..., 1.65}
{\node[fill = blue,draw = blue,circle,inner sep=0.5pt,label=below:{}] at (\x+\y,0){};
}

\foreach \y in {4.2,4.25,...,4.65}
	{\node[fill = blue,draw = blue,circle,inner sep=0.25pt,label=below:{}] at (\x+\y,0){};
}
\foreach \y in {7,7.05,...,7.45}
{\node[fill = blue,draw = blue,circle,inner sep=0.25pt,label=below:{}] at (\x+\y,0){};
}
\foreach \y in {10.85,10.9,...,11.25}
{\node[fill = blue,draw = blue,circle,inner sep=0.25pt,label=below:{}] at (\x+\y,0){};
}

\node at (\x + 0.85, -0.25) 
{\tiny {$x_1$} };
\node at (\x + 1.1, 0.15) 
{\tiny {$z_1$} };
\node at (\x + 1.8, -0.25) 
{\tiny {$y_1$} };
\node at (\x + 3.75, -0.25) 
{\tiny {$x_2$} };
\node at (\x + 3.9 , 0.15) 
{\tiny {$z_2$} };

\node at (\x + 4.1, 0.15) 
{\tiny {$e$} };

\node at (\x + 4.75, -0.25) 
{\tiny {$y_2$} };		
\node at (\x + 6.7, -0.25) 
{\tiny {$x_3$} };
\node at (\x + 6.9, 0.15) 
{\tiny {$z_3$} };
\node at (\x + 7.55, -0.25) 
{\tiny {$y_3$} };

%
%
%
%

%
\node at (\x,-0.5){};
			
		\node at (\x + 10.5 , -0.25){\tiny {$x_{K}$} };
		\node at (\x + 10.75, 0.15) 
		{\tiny {$z_{K}$} };
		\node at (\x + 11.5 , -0.25) 
		{\tiny {$y_{K}$} };
}

\end{tikzpicture}
}
\flushleft{
	\begin{tikzpicture}
	\foreach \h in {0.1}
	{
	\foreach \x in {0.25}{
		\draw (\x + 0,0) -- (\x + 8,0);
		\draw[dashed] (\x+8,0) -- (\x+10,0);
		\draw (\x+10,0) -- (\x + 12,0);
		\foreach \y in {0,1,2,3,4}
		{
			
			\draw (\x+\y*2,-0.25)--(\x+\y*2,0.25);
			
			\node at (\x + \y*2, -0.35) {\tiny \y};
			
		}
		\foreach \y in {5,6}
		{
			
			\draw (\x+\y*2,-0.25)--(\x+\y*2,0.25);

		}
		\node at (\x + 10, -0.35) {\tiny $K$};
		\node at (\x + 12, -0.35) {\tiny $K+1$};


			 {0.85,1.1,1.8,3.79,4,4.75,6.7,6.9,7.55,10.5,10.75,11.45}

			\node at (\x + 0.85 -\h, -0.25) 
				{\tiny {$x_1'$} };
			\node at (\x + 1.1 -\h, 0.15) 
			{\tiny {$z_1'$} };
			\node at (\x + 1.8 -\h, -0.25) 
			{\tiny {$y_1'$} };
			\node at (\x + 10.5, -0.25) 
				{\tiny {$x_{K}'$} };
				\node at (\x + 10.75, 0.21) 
				{\tiny {$z_{K}'$} };
				\node at (\x + 11.5 -\h, -0.25) 
				{\tiny {$y_{K}'$} };
			
			\node at (\x + 3.9 -\h, -0.25)  {\tiny {$x_2'$} };
			\node at (\x + 4.3 -\h, 0.21) 
			{\tiny {$z_2'$} };
			\node at (\x + 4.75 -\h, -0.25) 
			{\tiny {$y_2'$} };		
			\node at (\x + 6.7 -\h, -0.25) 
			{\tiny {$x_3'$} };
			\node at (\x + 6.9 -\h, 0.15) 
			{\tiny {$z_3'$} };
			\node at (\x + 7.55 -\h, -0.25) 
			{\tiny {$y_3'$} };	
						
		\foreach \y in {0.85,1.1,1.75,3.95,4.2,4.75,6.7,6.9,7.55,10.5,10.75,11.45}
		{\node[fill = magenta,draw = blue,circle,inner sep=1pt,label=below:{}] at (\x+\y - \h ,0){};
		}
		\foreach \y in {1.2,1.25,..., 1.65}
		{\node[fill = blue,draw = blue,circle,inner sep=0.5pt,label=below:{}] at (\x+\y - \h,0){};
		}
		
		\foreach \y in {4.3,4.35,...,4.65}
		{\node[fill = blue,draw = blue,circle,inner sep=0.25pt,label=below:{}] at (\x+\y-\h,0){};
		}
		\foreach \y in {7,7.05,...,7.45}
		{\node[fill = blue,draw = blue,circle,inner sep=0.25pt,label=below:{}] at (\x+\y-\h,0){};
		}
		\foreach \y in {10.95,11,...,11.45}
		{\node[fill = blue,draw = blue,circle,inner sep=0.25pt,label=below:{}] at (\x+\y-\h-\h,0){};
		}
		%
		%
		%
		%
		
		%
		}
}
	\end{tikzpicture}
\caption{Words showing $\mathsf{CMTL}-\mathsf{TMTL} \neq \emptyset$}
\label{mainexpress:fig1} 
\end{figure}
}
\vspace{-0.5cm}
\begin{table}
  \begin{tabular}{|c|c|c|}
    \hline
 Points in & $\rho_1$ & $\rho_2$\\
  \hline
  (0,1)& $x_1=0.5, z_1=0.6, y_1=0.8$ &   $x'_1=0.5, z'_1=0.6, y'_1=0.8$ \\
           & and $2nk$ points between $z_1, y_1$ & and $2nk$ points  between $z'_1, y'_1$ \\
           &that are $\kappa$ apart from each other & that are $\kappa$ apart from each other\\                                         
            \hline
  (1,2) &   $x_2=1.8-\epsilon, z_2=1.8+\epsilon$ &    $x'_2=1.8-\epsilon$\\
  \hline
  (2,3) &  $e=2.4+n\epsilon, y_2=2.7+n\epsilon$    & $z'_2=2.4+n\epsilon,   y'_2=2.7+n\epsilon$  \\
         &  and $2nk$ points between $e$ and $y_2$   & and $2nk$ points between $z'_2$ and $y'_2$ \\
         &that are $\kappa$ apart from each other & that are $\kappa$ apart from each other\\
         \hline
   $(i,i+1)$ & $x_i=i+0.4+(n-i)\epsilon$ &  $x'_i=i+0.4+(n-i)\epsilon$\\
$3 \leq i \leq K-1$  & $z_i=i+0.8+(n+i)\epsilon+\delta$ &  $z'_i=
i+0.8+(n+i)\epsilon+\delta$ \\
               & $y_i=i+0.8+(n+i+1)\epsilon$ and $2nk$points
   &  $y'_i=i+0.8+(n+i+1)\epsilon$ and $2nk$points\\
 &  between $z_i, y_i$ that are   $\kappa$ apart from each other & between $z_i, y_i$ that are   $\kappa$ apart from each other\\
\hline 
  \end{tabular}
\end{table} 
Thus, $\rho_1$ and $\rho_2$ differ only in the interval (1,2) : 
$\rho_1$ has two points in (1,2), while $\rho_2$ has only one. 
Thus, $\rho_1 \models \varphi, \rho_2 \nvDash \varphi$. 

Let $seg(i_p) \in \{0, 1, \dots, K\}$  denote the left endpoint  of the left closed, right open unit interval 
containing the point $i_p \in dom(\rho_1)$ or $dom(\rho_2)$. Our segments are [0,1), [1,2), $\dots$, $[K, K+1)$. 
    For instance,  if the configuration at the start of the $p$th round is  $(i_p, j_p)$ with time stamps (1.2, 3), then $seg(i_p)=1, seg(j_p)=3$. The following lemma says that in any round of the game, $\dpl$ can either achieve 
    the same segment in both the words, or ensure that the difference in the segments is atmost 1. 
    Moreover, by the choice of the words, there are sufficiently many segments 
    on the right of any configuration so that $\dpl$ can always duplicate $\spl$'s moves 
  for the remaining rounds, preserving the lag of one segment.   \\
    
\noindent{\it Copy-cat strategy} Consider the $p$th round of the game with configuration $(i_p, j_p)$. 
If $\dpl$ can ensure that $seg(i_{p+1}){-}seg(i_p)$=$seg(j_{p+1}){-}seg(j_p)$, then 
we say that $\dpl$ has adopted a {\it copy-cat} strategy in the $p$th round. 
We prove the following proposition to argue $\dpl$'s win.

\begin{proposition}
\label{game-details}
For an $n$ round $\mathsf{TMTL}$ game over the words $\rho_1, \rho_2$, 
the $\dpl$ always has a winning strategy such that for any $1 \leq p \leq n$, if 
$(i_p, j_p)$ is the initial configuration of the $p^{th}$ round, then 
$|seg(i_p)-seg(j_p)| \leq 1$. Moreover, when 
$|seg(i_p)-seg(j_p)| =1$, then there are atleast $(n-p)(l+1)$ segments to the right 
on each word after $p$ rounds, for all $1 \leq p \leq n$.
    \end{proposition}
\begin{proof}
The initial configuration has time stamps (0,0). We will play a $(n,k)$-$\mathsf{TMTL}$ game
on $\rho_1, \rho_2$. Assume that the $\spl$ chooses $\rho_1$ while the $\dpl$ chooses 
$\rho_2$. 
Since the interval [1,2] is the only one different in both the words, 
it is interesting to look at the moves where the $\spl$ 
chooses a point in interval (1,2). We consider the two situations
possible for $\spl$ to land up in a point in interval (1,2): 
he can enter interval (1,2) from some point in interval (0,1), 
or directly choose to enter interval (1,2) from the initial configuration with time stamps (0,0). \\
\noindent{\bf Situation 1}: 
Consider the case when from the starting configuration $(i_1, j_1)$ with time stamps (0,0), 
 $\spl$ chooses a $\until_{(1,2)\#a\sim c}$ move
 in $\rho_1$ and lands up at the point $x_2$ or $z_2$. 
 In response, $\dpl$ has to come at the point $x'_2$ 
 in $\rho'_2$.  
 If $(i'_1, j'_1)$ has time stamps $(x_2, x'_2)$
 and if $\spl$ chooses to pebble between 0 and $x_2$, then 
 $\dpl$ pebbles between 0 and $x'_2$; however, an identical configuration is obtained. 
 Note that if $\spl$ pebbles $\rho_2$, then $\dpl$
 has it easy, since he will pebble the same positions in $\rho_1$. 
  Let us hence consider obtaining the configuration 
 $(i'_1, j'_1)$ with time stamps $(z_2, x'_2)$, and let $\spl$ pebble $\rho_1$.
 $\spl$ can keep a maximum of $k$ pebbles 
 in the points $x_1, \dots, y_1, x_2$, while 
 $\dpl$ keeps the same number of  pebbles 
 on the points $x'_1, \dots, y'_1$. In this case, 
 $\spl$ has to a pick a pebbled position 
 from among $x'_1, \dots, y'_1$. 
 In response, $\dpl$ will pick the same position 
 from $\spl$'s word and achieve an identical configuration.  
 An interesting special case is  when $\spl$ keeps a single pebble at $x_2$ in $\rho_1$.
 In this case, $\dpl$'s best choice is to keep his pebble at $x'_1$, so that 
 the next configuration $(i_2, j_2)$ is one with time stamps  $(x_2,x'_1)$. 
  $x'_1$ and $x_2$ are {\it topologically similar} in the sense that 
  the distribution of points in subsequent segments 
  have some nice properties as given below.\\
\noindent \emph{Topological Similarity of Words}: Consider the $2nk+3$ points $x_j< $ $z_j <$ $p^1_j <$ $\dots <$ $p^{2nk}_j <$ $y_j$ in $\rho_1$, 
  and $x'_{j-1}$ $< z'_{j-1}$  $< q^1_{j-1} < \dots$ $< q^{2nk}_{j-1}$ $< y'_{j-1}$ in $\rho_2$, for 
  $j \in \{2,3, 4, \dots, K\}$. Define a function $f$ that maps points in $\rho_1$ to topologically similar points in $\rho_2$.
   $f:\{x_j, z_j, p^1_j, \dots, p^{2nk}_1, y_j\} \rightarrow  
  \{x'_{j-1}, z'_{j-1}, q^1_{j-1}, \dots, q^{2nk}_{j-1}, y'_{j-1}\}$ by $f(x_j)=x'_{j-1}, f(z_j)=z'_{j-1}, f(y_j)=y'_{j-1},$
  $f(p^i_j)=q^i_{j-1}$. Let $g=f^{-1}$. 
   \begin{enumerate}
 \item[(a)] The current configuration has timestamps $(x_2, x'_1)=(x_2, f(x_2))$. For $j \geq 2$, if $\spl$ chooses 
  to move to any $p \in \{z_j, y_j, x_{j+2}\}$ from $x_2$, then $\dpl$ can move to $f(p)$
  from $f(x_2)$ since, for any time interval $I$, 
  it can be seen that $p-x_2 \in I$ iff  $f(p)-f(x_1)\in I$. Moreover, if $\spl$ chooses 
  to move to  $x_3$ from $x_2$, then $\dpl$ can move to $z'_2$
  from $f(x_2)$ since, $x_3-x_2, z'_2-f(x_2) \in (0,1)$.
  \item[(b)] We can extend (a) above as follows: Let the current configuration have timestamps 
  $(p,f(p))$ or $(x_3, z'_2)$.  Then it can be seen that for any 
  $q \in \{x_j,y_j,z_j\}$ and interval $I$, $q-p \in I$ iff 
  $f(q)-f(p) \in I$, and $q-x_3 \in I$ iff $f(q)-z'_2 \in I$. 
    \end{enumerate}   
  The facts claimed in (a) and (b) are evident from the construction of the timed words. 
  They show that from a configuration $(i_p, j_p)$, 
  such that $seg(i_p)-seg(j_p) \leq 1$, 
    $\dpl$ can always achieve an intermediate configuration 
  $(i'_p, j'_p)$ in any $\until_{I, \#a \sim c}$ such that $seg(i'_p)-seg(j'_p) \leq 1$.
    If $\spl$ does not go for the until round or the counting round, then $(i_{p+1},j_{p+1})=(i'_p,j'_p)$.  If $\spl$ pebbles the points between $i_p$ and $i'_p$ (or $j_p$ and $j'_p$), then 
      $\dpl$ can always ensure that he pebbles points $f(P)$ in $\rho_2$ whenever  $\spl$ pebbles a set of points $P$ in $\rho_1$. 
  As a result, if $\spl$ chooses a point $q=f(i) \in f(P)$ in $\rho_2$, then $\dpl$ can choose
  the point $g(q)=i\in P$ achieving the configuration $(i_{p+1},j_{p+1})=(g(q),q)=(i,f(i))$. 
  By definition of $f,g$, we have $i_{p+1}-j_{p+1} \leq 1$.
  Note that $\dpl$ can also achieve an identical configuration 
  if $\spl$ moves ahead by several segments from $i_p$ (thus, $i'_p >>i_p$), and pebbles 
  a set of points that are also present between $j_p$ and $j'_p$. \\
\noindent{\bf Situation 2}: 
Starting from $(i_1, j_1)$ with time stamps (0,0), 
if the $\spl$ chooses a $\until_{(0,1),\#a \sim c}$ move and 
lands up at some point between $x_1$ and $y_1$, $\dpl$ will play copy-cat
and achieve an identical configuration.  
 Consider the case when 
$\spl$ lands up at $y_1$\footnote{The argument when $\spl$ lands up at $x_1$ or a point in between $x_1, y_1$ is exactly the same}. In response,  $\dpl$ moves to   
$y'_1$. From configuration $(i_2, j_2)$ with time stamps $(y_1, y'_1)$, 
consider the case when $\spl$ initiates a $\until_{(1,2),\#a \sim c}$ and moves 
 to $z_2=1.8+\epsilon <2$. In response, $\dpl$ 
 moves to the point $z'_2=2.1>2$. 
 A pebble is kept at the inbetween 
positions $x_2, x'_2$ respectively in $\rho_1, \rho_2$. 
When $\spl$  picks the pebble in 
$\dpl$'s word, then we obtain the configuration $(i_3, j_3)$ with time stamps  
$(x_2, x'_2)$.  If $\spl$ does not get into the counting part/until part, the 
configuration obtained has time stamps  $(z_2, z'_2)$, with the lag 
of one segment ($seg(i_3)=1$, $seg(j_3)=2$, $seg(j_3)$-$seg(i_3)$=1). 
We show in Appendix \ref{app:situation2}
that from  $(i_3,j_3)$ 
with time stamps either $(x_2, x'_2)$ or $(z_2, z'_2)$, $\dpl$ can either achieve an identical configuration, 
or achieve a configuration with a lag of one segment.
   \end{proof}
From  situations  (1), (2) in Proposition \ref{game-details}, we know that either $\dpl$ achieves an identical configuration, in which case there is no segment lag, or there is a lag of atmost one segment. The length of the words are $lnk+nl=K$. 
If $\spl$ always chooses bounded intervals (of length $\leq l$), 
then $\dpl$ respects his segment lag of 1, and the maximum number of segments that can be explored in either word 
is atmost $nl < K$. In this case, after $p$ rounds, there are atleast
$K-pl$ $\geq$ $nlk+nl-pl$ $\geq$ $(n-p)(l+1)$ segments to the right of $\rho_1$ and $K-pl+1$ segments to the right of $\rho_2$.
If $\spl$ chooses an unbounded interval in any round, then $\dpl$ 
can either enforce an identical configuration in both situations 1 and 2,   
or obtain one of the configurations with time stamps 
$(p, f(p))$, $f(p) \neq x'_2$, or $(z_2,x'_2)$ or $(x_2, x'_2)$, from where 
it is known that $\dpl$ wins.
\end{proof}
  \begin{lemma}
\label{lem1}
$\mathsf{MTL} \subset \mathsf{C^{(0,1)}MTL} \subset  \mathsf{C^0MTL}$
\end{lemma}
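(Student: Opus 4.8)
The two inclusions $\mathsf{MTL} \subseteq \mathsf{C^{(0,1)}MTL} \subseteq \mathsf{C^0MTL}$ are immediate from the syntax: $\mathsf{C^{(0,1)}MTL}$ is $\mathsf{MTL}$ enriched with the modality $\cnt^{\geq n}_{(0,1)}$, and $\mathsf{C^0MTL}$ further permits counting windows $\langle 0,b\rangle$ with $b\geq 1$. So the whole content lies in the two \emph{strict} separations, and for each I would exhibit a fixed formula $\varphi$ of the larger logic together with, for every round bound $n$ and pebble bound $k$, a pair of words $\rho_1 \models \varphi$, $\rho_2 \nvDash \varphi$ on which the $\dpl$ wins the $n$-round, $k$-pebble game of the \emph{smaller} logic. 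By Theorem \ref{ctmtl-game}, restricted to the relevant moves (only $\fut_I,\until_I$ rounds for $\mathsf{MTL}$, and additionally $\cnt^{\sim c}_{(0,1)}$ rounds for $\mathsf{C^{(0,1)}MTL}$), this gives $\rho_1 \equiv \rho_2$ in the smaller logic, hence $\varphi$ is inexpressible there. Throughout I would reuse the clustered-word template and the topological-similarity map $f$ of Proposition \ref{game-details}: points are grouped into unit segments with intra-segment spacings $\kappa \ll \delta \ll \epsilon$, and $f$ maps each point of $\rho_1$ to a relatively identical point of $\rho_2$ one segment to the left, so that $p - q \in I \iff f(p) - f(q) \in I$ for every permissible interval $I$ (punctual ones included). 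The $\dpl$ then plays copy-cat, preserving a segment lag of at most one.

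For $\mathsf{MTL} \subsetneq \mathsf{C^{(0,1)}MTL}$ I would take $\varphi = \cnt^{\geq 2}_{(0,1)} a$. The obstruction for $\mathsf{MTL}$ is that an until/future round re-anchors the interval to the newly chosen point, so $\mathsf{MTL}$ cannot fix the \emph{right} endpoint of the window measured from the origin: the formula $\fut_{(0,1)}(a \wedge \fut_{(0,1)} a)$ asserts a first $a$ in $(0,1)$ followed by a second $a$ within distance $<1$ of it, not two $a$'s both inside $(0,1)$. Accordingly $\rho_1$ carries two $a$'s strictly inside $(0,1)$, while $\rho_2$ carries one $a$ inside $(0,1)$ and a \emph{decoy} $a$ just past $1$ at distance $<1$ from it; all remaining points are clustered and padded with enough indistinguishable unit segments to survive $n$ rounds. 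Since no $\mathsf{MTL}$ round separates ``two inside $(0,1)$'' from ``one inside plus one decoy within distance $<1$'', the $\dpl$ wins exactly as in Proposition \ref{game-details}, while $\varphi$ separates the words.

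For $\mathsf{C^{(0,1)}MTL} \subsetneq \mathsf{C^0MTL}$ I would take $\varphi = \cnt^{\geq 2}_{(0,2)} a \in \mathsf{C^0MTL}$. The guiding principle is a telescoping observation: if two words had equal $a$-counts in \emph{every} unit window reachable by the $\dpl$, their counts over a length-$2$ window could differ only through an $a$ on the unit seam, since an open $(0,1)$ window omits the point at distance exactly $1$ that the $(0,2)$ window registers. I would therefore hide the count difference of $\varphi$ at such a seam, placing in $\rho_1$ an extra $a$ that a genuine $(0,2)$ count sees but that every reachable unit window either misses or has matched, one segment over, by an $a$ of $\rho_2$. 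The lag-one map $f$ simultaneously matches all $\until_I$ landings (punctual included, as $f$ preserves $I$-membership) and lets the $\dpl$ answer each $\cnt^{\sim c}_{(0,1)}$ round pebble-for-pebble, because in the unit window carrying the counting pebbles the two words are $f$-isomorphic and have identical $a$-counts. No $\mathsf{C^{(0,1)}MTL}$ play can additively combine the two adjacent unit windows across the seam, so the length-$2$ surplus stays invisible.

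The main obstacle is the second separation. For $\mathsf{MTL}$ the only moves are until/future and the decoy argument is essentially that of Proposition \ref{game-details}; but in $\mathsf{C^{(0,1)}MTL}$ the $\spl$ has \emph{both} arbitrary-interval until rounds \emph{and} unit counting rounds, and I must certify that no interleaving of the two ever exposes the seam surplus. Concretely, I would prove that from every anchor the $\dpl$ can reach under the segment-lag invariant, the forward unit window has exactly the same $a$-count in both words and is $f$-isomorphic, so that counting pebbles are always answerable, while the single extra $a$ never lies \emph{alone} in any reachable unit window in a way that breaks per-window count equality. Verifying this invariant under the combined move set — in particular that a counting round cannot be used to ``straddle'' the seam that the length-$2$ window of $\varphi$ exploits — is the delicate step; the remaining copy-cat bookkeeping is already established for Proposition \ref{game-details}.
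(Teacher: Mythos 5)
Your treatment of the first strict inclusion is essentially the paper's: the paper uses $\cnt^{=2}_{(0,1)}a$ with $\rho_1=(a,0)(a,0.5)(a,0.6)W$ and $\rho_2=(a,0)(a,0.5)W$ for a common suffix $W$ clustered in $(1.1,1.2)$, and Proposition \ref{app:exp1} is exactly your ``decoy past $1$ plus position-lag-one copy-cat'' argument, so that part of your outline is fine.

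The gap is in the second separation. You correctly pick $\varphi=\cnt^{\geq 2}_{(0,2)}a$, but you then propose segment-structured words in which \emph{every} reachable unit window has \emph{identical} $a$-counts in the two words while the $(0,2)$ window from the origin differs, and you defer the verification of this invariant under the combined $\until_I$/$\cnt_{(0,1)}$ move set to ``copy-cat bookkeeping already established for Proposition \ref{game-details}''. That proposition concerns a different game (the $\mathsf{TMTL}$ game, whose counting move pebbles between the two endpoints chosen in an until round) played on different words; it establishes nothing about $\cnt^{\sim c}_{(0,1)}$ rounds, whose pebbles live in a window anchored at the \emph{current} point. Since maintaining your invariant against arbitrary interleavings of punctual until moves and unit counting moves is precisely the content of the lemma, leaving it as ``the delicate step'' means the proof is not there; you also exhibit no concrete words against which the invariant could be checked. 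The paper's construction sidesteps the whole issue: take $\rho_1=(a,0)(a,1.8)(a,1.9)W$ and $\rho_2=(a,0)(a,1.9)W$ with $W$ a cluster of roughly $nk$ points in $(2.1,2.2)$. Then the window $(0,1)$ seen from the origin is empty in both words, so the unit counting move is vacuous there; after the forced $\until_{(1,2)}$ move the two words agree up to a position lag of one with identical time-stamped suffixes; and every nonempty reachable unit window contains more than $k$ points, so no threshold $c\le k$ can tell $nk+2$ from $nk+1$ (Proposition \ref{app:exp2}). In particular the paper does \emph{not} need your stronger ``equal counts in every reachable unit window'' property --- it only needs every reachable unit count to be either $0$ or above the pebble bound --- which is both easier to arrange and easier to verify.
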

\begin{proof}
We show that the formula $\varphi=\cnt^{=2}_{(0,1)}a \in \mathsf{C^{(0,1)}MTL}$
cannot be expressed in $\mathsf{MTL}$. Likewise, the formula 
$\varphi=\cnt_{(0,2)}^{\geq 2} a \in \mathsf{C^0MTL}$ cannot be expressed in 
$\mathsf{C^{(0,1)}MTL}$. A detailed 
proof of these are given by Propositions \ref{app:exp1} and 
\ref{app:exp2} in Appendix \ref{app:lem1}.
\end{proof}
\begin{lemma}
\label{c0intmtl}
\begin{itemize}
\item[(i)] $\mathsf{C^0MTL} \subset \mathsf{TMTL}=\mathsf{C^0TMTL}=\mathsf{C^{(0,1)}TMTL}$ and \item[(ii)]$\mathsf{C^0MTL} \subset \mathsf{CMTL}$.
\end{itemize}
\end{lemma}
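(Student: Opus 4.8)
The plan is to handle (i) in two stages—first the three equalities, which are purely a matter of expressing the $\mathsf{C^0}$ modality through $\mathsf{UT}$, and then the strict inclusion, which needs a genuine inexpressibility argument—and to derive (ii) almost for free from Lemma~\ref{tmtlless} together with the inclusion established in (i).

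For the equalities $\mathsf{TMTL}=\mathsf{C^{(0,1)}TMTL}=\mathsf{C^0TMTL}$, the chain $\mathsf{TMTL}\subseteq\mathsf{C^{(0,1)}TMTL}\subseteq\mathsf{C^0TMTL}$ is syntactic (the first adds no $\mathsf{C}$ modality, and $\mathsf{C^{(0,1)}}$ is a special case of $\mathsf{C^0}$). The content is the reverse inclusion $\mathsf{C^0TMTL}\subseteq\mathsf{TMTL}$, which I would prove by a compositional, semantics-preserving rewriting that eliminates every $\mathsf{C^0}$ subformula in favour of a $\mathsf{UT}$ modality. The key observation is that, for $n\geq 1$,
\[
\cnt^{\geq n}_{(0,b)}\varphi\ \equiv\ true\ \until_{(0,b),\,\#\varphi\geq n-1}\varphi ,
\]
since a witness $j$ with $t_j-t_i\in(0,b)$, $\rho,j\models\varphi$ and $n-1$ further $\varphi$-points strictly between $i$ and $j$ is exactly an enumeration of the $n$-th $\varphi$-point of $t_i+(0,b)$ together with its predecessors, all of which automatically lie in $t_i+(0,b)$; for $n=0$ the modality is $true$. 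Intervals of the form $[0,b)$, $(0,b]$, $[0,b]$ reduce to the open case by a local case split on the present point, e.g. $\cnt^{\geq n}_{[0,b)}\varphi\equiv(\varphi\wedge\cnt^{\geq n-1}_{(0,b)}\varphi)\vee(\neg\varphi\wedge\cnt^{\geq n}_{(0,b)}\varphi)$, with a closed right endpoint carried over into the $\until$ interval. Applying this bottom-up stays inside $\mathsf{TMTL}$ and proves the three equalities; the same rewriting, together with $\varphi\until_I\psi\equiv\varphi\until_{I,\#true\geq 0}\psi$, also yields $\mathsf{C^0MTL}\subseteq\mathsf{TMTL}$.

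For the strict inclusion $\mathsf{C^0MTL}\subsetneq\mathsf{TMTL}$ I would exhibit a $\mathsf{TMTL}$ formula whose counting is anchored to a future, time-shifted witness rather than to the present point, e.g. $\psi=true\until_{(1,2),\#a\geq 2}b$ over $\Sigma=\{a,b\}$, asserting the existence of a $b$-event in $(1,2)$ preceded by at least two $a$-events. The point is that $\psi$ is sensitive not merely to how many $a$'s fall in $t_i+(0,2)$, but to whether a suitable witness survives \emph{after} the second $a$ inside the shifted window $(1,2)$—a coupling that present-anchored $\mathsf{C^0}$ counting cannot reproduce. I would prove inexpressibility in $\mathsf{C^0MTL}$ via Theorem~\ref{ctmtl-game} specialised to $\mathsf{C^0MTL}$ moves (plain $\fut/\until$ rounds with arbitrary intervals, and counting rounds whose interval is restricted to $\langle 0,b\rangle$): for every round budget $r$ and pebble budget $k$ I would construct two words distinguished by $\psi$ and give the $\dpl$ a winning strategy against the so-restricted $\spl$. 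The hard part is precisely this game construction—designing the two families of words and the $\dpl$ strategy so that present-anchored counting and plain until can never detect the missing post-count witness, analogously to (but dual to) the construction underlying Lemma~\ref{tmtlless}.

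Finally, (ii) is immediate. The inclusion $\mathsf{C^0MTL}\subseteq\mathsf{CMTL}$ is syntactic, since $\mathsf{C^0}$ is a restriction of the $\mathsf{C}$ modality. For strictness, take $\cnt^{\geq 2}_{(1,2)}a\in\mathsf{CMTL}$. By Lemma~\ref{tmtlless} this formula is not in $\mathsf{TMTL}$, and since part (i) gives $\mathsf{C^0MTL}\subseteq\mathsf{TMTL}$, it cannot lie in $\mathsf{C^0MTL}$ either; hence $\mathsf{C^0MTL}\subsetneq\mathsf{CMTL}$.
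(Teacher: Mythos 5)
Your handling of the equalities and of part (ii) is essentially the paper's. The rewriting $\cnt^{\geq n}_{(0,b)}\varphi\equiv true\until_{(0,b),\,\#\varphi\geq n-1}\varphi$ is the same idea the paper uses (it writes $\cnt^{\geq n}_{\langle 0,j\rangle}\varphi$ as $\fut_{\langle 0,j\rangle,\#\varphi\geq n}\,true$), and your version is in fact the more careful one, since taking the $n$-th $\varphi$-point itself as the until-witness avoids having to assume a further point of the word inside the window beyond the $n$-th occurrence. Part (ii) is argued exactly as in the paper: $\cnt^{\geq 2}_{(1,2)}a\in\mathsf{CMTL}$, it is not in $\mathsf{TMTL}$ by Lemma~\ref{tmtlless}, and $\mathsf{C^0MTL}\subseteq\mathsf{TMTL}$ by part (i), so the inclusion $\mathsf{C^0MTL}\subseteq\mathsf{CMTL}$ is strict.

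The genuine gap is the strictness of $\mathsf{C^0MTL}\subset\mathsf{TMTL}$. You propose a fresh EF-game argument for $true\until_{(1,2),\#a\geq 2}b$ --- two families of words for every round/pebble budget plus a $\dpl$ strategy against $\mathsf{C^0MTL}$-restricted moves --- and then explicitly defer ``the hard part''; as written, the separation is simply not proved. The detour is also unnecessary: the paper obtains this strictness directly from Lemma~\ref{cmtlless}, which exhibits $\fut_{(0,1),\#a\geq 3}b\in\mathsf{TMTL}\setminus\mathsf{CMTL}$. Since $\mathsf{C^0MTL}\subseteq\mathsf{CMTL}$ syntactically, that same formula already witnesses $\mathsf{TMTL}\setminus\mathsf{C^0MTL}$ --- the identical ``syntactic inclusion plus an already-proved separation'' pattern you yourself deploy for part (ii). Either invoke Lemma~\ref{cmtlless} here, or actually construct the word families and the $\dpl$ strategy for your candidate formula; without one of the two the lemma is not established.
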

\begin{proof}
(i) The first containment as well as the last two equalities follows from the  fact that the counting modality $\cnt_{\langle 0,j\rangle}^{\geq n}\varphi$  of $\mathsf{C^0MTL}$
can be written in $\mathsf{TMTL}$ as $\fut_{\langle 0, j \rangle, \#{\varphi} \geq n} true$.
The strict containment of $\mathsf{C^0MTL}$ then follows from Lemma \ref{cmtlless}.  
(ii) We know that $\mathsf{C^0MTL} \subseteq \mathsf{CMTL}$. This along with 
(i) and  Lemma \ref{tmtlless} gives the strict containment.  
\end{proof}
\begin{lemma}
\label{cmtlless}
$\mathsf{TMTL} - \mathsf{CMTL} \neq \emptyset$
\end{lemma}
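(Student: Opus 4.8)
The plan is to separate $\mathsf{TMTL}$ from $\mathsf{CMTL}$ by a single formula and to certify inexpressibility with the game of Theorem~\ref{ctmtl-game} specialised to $\mathsf{CMTL}$, whose only moves are the plain $\until_I$ rounds (Future and Until parts, with trivial threshold) together with the $\cnt^{\sim c}_I$ rounds. I take the separating formula to be
\[
\varphi \;=\; \fut_{(1,2),\,\#b \geq 2}\, c \;\equiv\; true \until_{(1,2),\,\#b\geq 2} c \in \mathsf{TMTL},
\]
which asserts the existence of a $c$-witness at time distance in $(1,2)$ having at least two $b$'s strictly before it. The intuition for $\varphi\notin\mathsf{CMTL}$ is that the two abilities of $\mathsf{CMTL}$ are orthogonal to what $\varphi$ demands: the $\cnt^{\sim c}_I$ modality tallies $b$'s inside a window anchored at the \emph{current} point but cannot confine that tally to positions preceding an as-yet-unchosen $c$, while $\until_I$ can locate a $c$ in $(1,2)$ but, having no past operator and only forward-anchored counting, cannot look back from that $c$ and count the $b$'s before it. The timing constraint on the witness is essential: a nested plain until cannot bound the distance of the final $c$ from the \emph{original} point while simultaneously counting intermediate $b$'s, because each $\until_I$ resets the clock.

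To turn this into a proof, for each round bound $n$ and pebble bound $k$ I build words $\rho_1,\rho_2$ over $\Sigma=\{a,b,c\}$ with $\rho_1\models\varphi$, $\rho_2\nvDash\varphi$, yet $\rho_1\equiv^{\mathsf{CMTL}}_{n,k}\rho_2$. Both words carry one designated $c$ at time distance in $(1,2)$; in $\rho_1$ exactly two $b$'s precede this $c$ inside the window, whereas in $\rho_2$ only one does, the displaced $b$ being relocated just \emph{past} the $c$ but still inside $(1,2)$, so that the window's total $b$-count and $c$-count are unchanged. This reuses the gap-engineering of Lemma~\ref{tmtlless}: working in unit segments, choosing $0<\delta<\epsilon$ and $\kappa$ minuscule relative to $n,k$, flooding the non-critical segments with $2nk$ tightly packed $b$'s spaced $\kappa$ apart, and inserting matching ``decoy'' $c$-blocks in both words so that every \emph{untimed} ordinal pattern and every time-window count agrees across $\rho_1$ and $\rho_2$. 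The flooding guarantees that in every window $t_{i_p}+I$ reachable during an $n$-round game the number of $b$'s (and of $c$'s) in the two words either coincides exactly or both exceed $k$; the choice of $\epsilon,\delta,\kappa$ keeps the integer-endpoint boundaries $t_{i_p}+\langle a,b\rangle$ out of the tiny clusters, so no window boundary ever separates the displaced $b$ from the designated $c$.

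Duplicator's winning strategy is then the copy-cat / topological-similarity argument of Proposition~\ref{game-details}, adapted to the $\mathsf{CMTL}$ move set. I maintain the invariant that after each round the configurations are related by an order- and interval-preserving similarity map $f$ between the clusters, with segment lag at most one. For a Future or Until part, Spoiler's point is matched through $f$ (or through the analogue of the $x_i\mapsto x'_{i-1}$ shift once the lag is consumed), exactly as in Lemma~\ref{tmtlless}; this keeps the configuration partially isomorphic. For a $\cnt^{\sim c}_I$ round, Duplicator answers pebble-for-pebble through $f$ inside the common flooded region, and since every reachable window is either count-saturated beyond $k$ or exactly matched, it can always return a partially isomorphic pebbled point and continue from an $f$-related configuration.

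The main obstacle is precisely the $\cnt^{\sim c}_I$ round, which is the new feature of the $\mathsf{CMTL}$ game relative to the $\mathsf{TMTL}$ setting of Lemma~\ref{tmtlless} and which probes exactly the quantity ($b$-count in a window) along which the words are tuned to agree. The delicate part is proving that the flooding and the $\epsilon,\delta,\kappa$ gaps jointly force every permissible window, from every configuration Spoiler can reach in $n$ rounds, to be either saturated past $k$ or exactly count-matched, and that no bounded sequence of plain untils can navigate to the designated $c$ and expose the ordinal difference before the timing constraints and segment lag block it; establishing this window-by-window and segment-by-segment invariant, and checking it survives all of Spoiler's adaptive interval and reference-point choices, is where the real work lies. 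Once it is in place, Duplicator survives all $n$ rounds, so $\rho_1\equiv^{\mathsf{CMTL}}_{n,k}\rho_2$ while $\varphi$ distinguishes them, whence $\varphi\in\mathsf{TMTL}\setminus\mathsf{CMTL}$, which is the dual of Lemma~\ref{tmtlless} and completes the incomparability of $\mathsf{CMTL}$ and $\mathsf{TMTL}$.
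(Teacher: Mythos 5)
Your overall strategy---exhibit a $\mathsf{UT}$ formula, build for each $(n,k)$ a pair of words differing only in the ordinal position of a counted symbol relative to the until-witness, and let $\dpl$ win the $\mathsf{CMTL}$ game via flooding, copy-cat and a one-segment-lag similarity map---is exactly the shape of the paper's argument (which uses $\fut_{(0,1),\#a\geq 3}b$ and words whose unit intervals carry three $a$'s interleaved with three versus four $b$-blocks). However, as written the proposal has a genuine gap, in two respects.

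First, there is a concrete soundness problem with your construction. The semantics of $\varphi_1\until_{I,\#b\geq n}\varphi_2$ counts $|\rho[i,j](b)|$, i.e.\ \emph{all} positions strictly between the current point $i$ and the witness $j$, not only those whose timestamp lies in $t_i+I$. So for $\rho_2\nvDash\fut_{(1,2),\#b\geq 2}c$ you need \emph{every} $c$ at distance in $(1,2)$ to have at most one $b$ anywhere before it --- including at distances in $(0,1]$. This is in direct tension with your plan to ``flood the non-critical segments with $2nk$ tightly packed $b$'s'': any flooded $b$ occurring before the designated $c$ immediately makes $\rho_2$ satisfy $\varphi$. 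Either the flooding must avoid the symbol $b$ entirely in the prefix of the designated $c$ (which then leaves $\dpl$ exposed to $\cnt^{\sim c}_I$ rounds that count $b$'s in windows anchored at early points), or the designated $c$ must sit before all flooded regions (which constrains where Spoiler can drive the game). The paper's choice of window $(0,1)$ anchored at the first position sidesteps this because the threshold-counting range and the timing window then coincide; your choice of $(1,2)$ does not, and you never reconcile the two requirements.

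Second, the heart of the lemma --- verifying that $\dpl$ survives every $\until_I$ and every $\cnt^{\sim c}_I$ round for $n$ rounds, for every adaptive choice of interval, word, pebble placement and pebble pick by $\spl$ --- is explicitly deferred (``establishing this \dots is where the real work lies''). In the paper this verification is the entire content of the proof: an explicit block structure $A_i,B_i,C_i$ versus $A'_i,B'_i,C'_i,D'_i$ with specified timestamps, a similarity map sending $D'_iA'_{i+1}$ to $A_{i+1}$, and a case analysis of all reachable configuration types and of how pebbles placed on a $D'$-block are answered inside the first half of an $A$-block. Without carrying out that analysis for your words (which do not yet exist in enough detail to analyse), the claim $\rho_1\equiv^{\mathsf{CMTL}}_{n,k}\rho_2$ is unsupported, and the lemma is not proved.
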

\begin{proof}
 Consider the formula $\varphi=\fut_{(0,1), \#a \geq 3}b \in \mathsf{TMTL}$.
We show that for any choice of $n$ rounds and $k$ pebbles, 
we can find two words $\rho_1, \rho_2$ such that 
 $\rho_2 \models \varphi, \rho_1 \nvDash \varphi$, but 
 $\rho_1 \equiv_{n,k}^{\mathsf{CMTL}} \rho_2$.  The words can be seen  in Figure \ref{main:fig2} and the details in 
  Appendix \ref{app:cmtl-less}.
\flushleft{
\begin{figure}[th]
\begin{tikzpicture}

\foreach \x in {0.25}{
\draw  (\x + 0,0) -- (\x + 6,0);
\draw[dashed] (\x+6,0) -- (\x+8,0);
\draw (\x+8,0) -- (\x + 11,0);
\foreach \y in {0,1,2,3}
{
	
\draw (\x+\y*2,-0.25)--(\x+\y*2,0.25);

\node at (\x + \y*2, -0.3) {\tiny \y};

}
\foreach \y in {4,5}
{
	
	\draw (\x+\y*2,-0.25)--(\x+\y*2,0.25);

}
	\node at (\x + 8, -.3) {\tiny $K-1$};
		\node at (\x + 10, -.3) {\tiny $K$};

\foreach \u in {0,2,4,8}
{	

	\foreach \y in {0.55,1.15,1.75}
	\node[fill = red,draw = black,rectangle,inner sep=1.25pt,label=below:{}] at (\x+\y+\u,0){};

	\foreach \z in {0.1,0.7,1.3}
{	\foreach \y in {0.08,0.11,...,0.3}
	{\draw[blue] (\x+\y+\z+\u,-0.15)--(\x+\y+\z+\u,0.15);
	} 

}
}
}
\end{tikzpicture}
}

	\flushleft{
		\begin{tikzpicture}
	
		\foreach \x in {0.25}{
			\draw  (\x + 0,0) -- (\x + 6,0);
			\draw[dashed] (\x+6,0) -- (\x+8,0);
			\draw (\x+8,0) -- (\x + 11,0);
			\foreach \y in {0,1,2,3}
			{
				
				\draw (\x+\y*2,-0.25)--(\x+\y*2,0.25);
				
				\node at (\x + \y*2, -0.3) {\tiny \y};
				
			}
			\foreach \y in {4,5}
			{
				
				\draw (\x+\y*2,-0.25)--(\x+\y*2,0.25);

			}
			\node at (\x +8, -0.3) {\tiny $K-1$};
			\node at (\x + 10, -0.3) {\tiny $K$};

			\foreach \u in {0,4}
			{	
				
				\foreach \y in {0.53,1.04,1.54}
				\node[fill = red,draw = black,rectangle,inner sep=1.25pt,label=below:{}] at (\x+\y+\u,0){};
				
				\foreach \z in {0.08,0.61,1.12,1.65}
				{	\foreach \y in {0.07,0.11,...,0.3}
					{\draw[blue] (\x+\y+\z+\u,-0.15)--(\x+\y+\z+\u,0.15);
					} 
					
				}
			}
			\foreach \u in {8}
			{	
				
				\foreach \y in {0.53,1.04,1.54}
				\node[fill = red,draw = black,rectangle,inner sep=1.25pt,label=below:{}] at (\x+\y+\u,0){};
				
				\foreach \z in {0.08,0.61,1.12}
				{	\foreach \y in {0.07,0.11,...,0.3}
					{\draw[blue] (\x+\y+\z+\u,-0.15)--(\x+\y+\z+\u,0.15);
					} 
					
				}
			}

		\foreach \u in {1.9}
		{	
			
			\foreach \y in {0.55,1.05,1.55}
			\node[fill = red,draw = black,rectangle,inner sep=1.1pt,label=below:{}] at (\x+\y+\u,0){};
			
			\foreach \z in {0.11,0.62,1.11,1.65}
			{	\foreach \y in {0.08,0.11,...,0.3}
				{\draw[blue] (\x+\y+\z+\u,-0.15)--(\x+\y+\z+\u,0.15);
				} 
				
			}
		}
		
	}
	\end{tikzpicture}
\caption{The red square represents $a$, the bunch of blue lines represents a bunch of $b$'s. 
There are 3 $a$'s in each unit interval of both $\rho_1$ and $\rho_2$. The difference is that 
$\rho_1$ has 3 blocks of $b$'s in each unit interval, while $\rho_2$ has 4 blocks of $b$'s in each unit interval except the last.  
Clearly, $\rho_2 \models \varphi, \rho_1 \nvDash \varphi$.}
		\label{main:fig2}
				\end{figure}
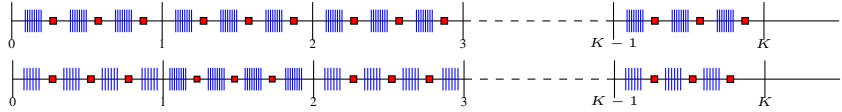
				}

  \end{proof}


 \section{Satisfiability Checking of Counting Logics}
\label{equis}
In this section, we  show that $\mathsf{CTMTL}$  has a decidable satisfiability checking. 
For this, given a formula in $\mathsf{CTMTL}$ 
 we synthesize an equisatisfiable 
formula in $\mtl$, and use the decidability of $\mtl$.
We start discussing some preliminaries.
Let $\Sigma,X$ be finite sets of propositions such that $\Sigma \cap X=\emptyset$.
\begin{enumerate}
\item {\it$(\Sigma,X)$-simple extensions}.  A $(\Sigma,X)$-simple extension is a timed word $\rho' = (\sigma',\tau')$ over $X \cup \Sigma$ such that at any point $i \in dom(\rho')$, $\sigma'_i \cap \Sigma \ne \emptyset$. 
For $\Sigma=\{a,b\}, X=\{c,d\}$,  
$(\{a\},0.2)(\{b,c,d\},0.3)(\{b,d\},1.1)$ is a  $(\Sigma,X)$-simple extension.  However, 
 $(\{a\},0.2)(\{c,d\},0.3)(\{b,d\},1.1)$ is not. 
\item {\it Simple Projections}. 
Consider a $(\Sigma,X)$-simple extension $\rho$. 
We define the {\it simple projection} of $\rho$ with respect to $X$, denoted $\rho \setminus X$ as the word obtained by erasing the symbols of $X$ from each $\sigma_i$.  Note that $dom(\rho)=dom(\rho \setminus X)$.
For example, if $\Sigma=\{a,c\}$, $X=\{b\}$, and 
$\rho=(\{a,b,c\},0.2)(\{b,c\},1)(\{c\},1.3)$, then 
$\rho \setminus X=(\{a,c\},0.2)(\{c\},1)(\{c\},1.3)$. 
$\rho \setminus X$ is thus, a timed word over $\Sigma$.
If the underlying word $\rho$ is {\it not} a $(\Sigma,X)$-simple extension, then 
$\rho \setminus X$
 is {\it undefined}.
\item {\it$(\Sigma,X)$-oversampled behaviours}.  
A $(\Sigma,X)$-oversampled behaviour is a timed word $\rho'=(\sigma', \tau')$ over $X \cup \Sigma$, such that 
$\sigma'_1 \cap \Sigma \neq \emptyset$ and  $\sigma'_{|dom(\rho')|} \cap \Sigma \neq \emptyset$. Oversampled behaviours are more general than simple extensions since they allow occurrences of new points in between the first and the last position. These new points are called {\it oversampled points}. All other points are called {\it action points}.
For $\Sigma=\{a,b\}, X=\{c,d\}$,  
$(\{a\},0.2)(\{c,d\},0.3)(\{a,b\},0.7)(\{b,d\},1.1)$ 
is a  $(\Sigma,X)$-oversampled behaviour, while 
$(\{a\},0.2)(\{c,d\},0.3)(\{c\},1.1)$ is not.  
\item {\it Oversampled Projections}.       
Given a $(\Sigma,X)$-oversampled behaviour $\rho'=(\sigma',\tau')$, 
the oversampled projection of $\rho'$ with respect to $\Sigma$, 
denoted $\rho' \downarrow X$ is defined as the timed word obtained 
by deleting the oversampled points, 
and then 
erasing the symbols of $X$  from the action points. 
 $\rho$=$\rho'\downarrow X$ is a timed word over $\Sigma$.
\end{enumerate}
A {\it temporal projection} is either a simple projection or an oversampled projection.  
 We now define \emph{equisatisfiability modulo temporal projections}. 
Given $\mathsf{MTL}$ formulae $\psi$ and $\phi$, we say that $\phi$ is equisatisfiable to $\psi$ 
{\it modulo temporal projections} iff there exist disjoint sets $X, \Sigma$ such that 
(1) $\phi$ is  over $\Sigma$, and $\psi$ over $\Sigma \cup X$,
(2) For any timed word $\rho$ over $\Sigma$ such that  $\rho \models \phi$,  there exists  a timed word $\rho'$ such that $\rho' \models \psi$, and $\rho$ is a temporal projection of $\rho'$ with respect to $X$,
(3) For any behaviour $\rho'$ over $\Sigma \cup X$,   if $\rho' \models \psi$ then  the temporal projection $\rho$ of $\rho'$ with respect to $X$  is well defined and $\rho\models \phi$.\\
\noindent If the temporal projection used above is a simple projection, we call it {\it equisatisfiability modulo simple projections} and denote it by 
$\phi=\exists X. \psi$. If the projection in the above definition is an oversampled projection, then it is called {\it equisatisfiability modulo oversampled projections} and is denoted $\phi\equiv\exists  \downarrow X. \psi$. Equisatisfiability modulo simple projections are studied extensively \cite{raskin-thesis}, \cite{deepak08}, \cite {kini08}.
 It can be seen that if $\varphi_1=\exists X_1. \psi_1$ and $\varphi_2=\exists X_2. \psi_2$, with $X_1$, $X_2$ disjoint, then 
 $\varphi_1\wedge \varphi_2=\exists (X_1 \cup X_2). (\psi_1\wedge \psi_2)$ \cite{arxiv-time14}.\\
As in the case of simple projections, equisatisfiability modulo oversampled projections are
	also closed under conjunctions when one considers the relativized formulae. 
		For example, consider a formula $\varphi = \Box_{(0,1)}a$ over $\Sigma =
		 \{a,d\}$. Let $\psi_1 = \Box_{(0,1)}(a \vee b) \wedge \fut_{(0,1)}(b \wedge \neg a)$ be a formula over the extended alphabet $\{a,b,d\} $ and $\psi_2 = \Box(c \leftrightarrow \Box_{(0,1)}a)\wedge  c $ over the extended alphabet $\{a,c,d\}$. Note that $\varphi = \exists \downarrow \{b\}. \psi_1$ and $\varphi = \exists \downarrow \{c\}. \psi_2$ but $\varphi \wedge \varphi \ne \exists \downarrow \{b,c\}. (\psi_1 \wedge \psi_2)$ as the left hand side evaluates to  $\varphi$ which is satisfiable while the right hand side is unsatisfiable. This is due to the presence of a \emph{non-action} point where only $b$ holds. But this can easily be fixed by relativizing all the formulae over their respective action points. $\psi_1$ is relativized as $\lambda_1 = \Box_{(0,1)}(act_1 \rightarrow (a \vee b)) \wedge \fut_{(0,1)}(act_1 \wedge b \wedge \neg a) $ and $\psi_2$ is relativized as $\lambda_2 = \Box(act_2 \rightarrow (c \leftrightarrow \Box_{(0,1)}(act_2 \rightarrow a)))\wedge  act _2 \wedge c$ where $act_1 = b \vee d \vee a$ and $act_2 = a \vee c \vee d$. Now, $\varphi \wedge \varphi = \exists \downarrow \{b,c\}. (\lambda_1 \wedge \lambda_2)$.
 	The relativized forms of $\psi_1, \psi_2$ are called their \emph{Oversampled Normal Forms} with respect to $\Sigma$ and denoted
	$ONF_{\Sigma}(\psi_1)$ and $ONF_{\Sigma}(\psi_2)$.
	Then it can be seen that $\varphi_1 \wedge \varphi_2=\exists \downarrow\{b,d\}. [ONF_{\Sigma}(\psi_1) \wedge ONF_{\Sigma}(\psi_2)]$, 
	and $\varphi_1=\exists \downarrow \{b\}. ONF_{\Sigma}(\psi_1)$, 
	$\varphi_2=\exists \downarrow \{d\}. ONF_{\Sigma}(\psi_2)$. The formal definition of $ONF_{\Sigma}(\varphi)$ 
	for a formula $\varphi$ over $\Sigma \cup X$ can be found in Appendix \ref{app:onf}.
	Equisatisfiability modulo oversampled projections were first studied in \cite{time14} to eliminate non-punctual past from $\mathsf{MTL}$ over timed words. 	We use equisatifiability modulo simple  projections to eliminate the $\mathsf{C}$ modality  and oversampled projections to eliminate the $\mathsf{UT}$ modality from $\mathsf{CTMTL}$.

 \subsection*{Elimination of Counting Modalities from $\mathsf{CTMTL}$}
\label{celim}
In this section, we show how to eliminate the counting constraints 
from  $\mathsf{CTMTL}$ 
over strictly monotonic timed words. 
This can be extended to weakly monotonic timed words.

Given any $\mathsf{CTMTL}$ formula $\varphi$ over $\Sigma$, we 
 ``flatten" the  $\mathsf{C, UT}$ modalities 
of $\varphi$ and obtain a flattened formula. 
 As an example, consider the formula $\varphi=a \until_{[0,3]}(c \wedge \cnt_{(2,3)}^{=1}d \until_{(0,1),\#{(d \wedge \cnt^{=1}_{(0,1)}e) \geq 1}} \cnt^{\geq 2}_{(0,1)}e])$. Replacing the counting modalities with fresh witness propositions $w_1, w_2$, we obtain 
	$\varphi_{flat}=[a \until_{[0,3]} (c \wedge w_1)] \wedge T$  where 
	$T=T_1 \wedge T_2 \wedge T_3\wedge T_4$, with 
	$T_1=\wB[w_1 \leftrightarrow \cnt_{(2,3)}^{=1}w_2]$,  
	$T_2=\wB[w_2 \leftrightarrow d \until_{(0,1), \#{w_4}\geq 1} w_3]]$,
	$T_3=\wB[w_4 \leftrightarrow (d \wedge \cnt^{=1}_{(0,1)}e)]$, and 
	$T_4=\wB[w_3 \leftrightarrow \cnt^{\geq 2}_{(0,1)}e]$. 
Each temporal projection $T_i$ obtained after flattening contains either a 
	$\mathsf{C}$ modality or a $\mathsf{UT}$ modality. 
	In the following, we now show how to 
	obtain equisatisfiable $\mtl$ formulae corresponding to each temporal projection. The proof of Lemma \ref{remove-cntinf} is in Appendix \ref{app:inf}.
	\begin{lemma}
\label{remove-cntinf}
The formula $\cnt^{\geq n}_{\langle l, \infty)}b$ 
has an equivalent formula in $\mtl$.
	\end{lemma}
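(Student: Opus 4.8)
The plan is to express the statement $\cnt^{\geq n}_{\langle l, \infty)}b$ at a point $i$ directly in $\mtl$ by reducing counting over an infinite-length interval to counting over bounded pieces, which we can then handle with ordinary $\mtl$ nesting. First I would observe that ``at least $n$ occurrences of $b$ at time distance in $\langle l, \infty)$'' is a purely qualitative statement about the suffix of the word beyond time $t_i + l$: once we are at or past that threshold, the count is simply the total number of $b$-points in the remaining (finite) suffix. Since the word is finite, this total is a finite number, and the modality holds iff there exist $n$ distinct future positions, each at distance $\geq l$, all satisfying $b$. The key simplification is that for the \emph{unbounded} interval there is no upper constraint on where these points lie, so I can chain eventualities rather than count inside a fixed window.

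The main step is to rewrite $\cnt^{\geq n}_{\langle l,\infty)}b$ as a nested future formula. I would define, inductively, a formula asserting the existence of $n$ successive $b$-witnesses. Concretely, let $\chi_1 \equiv \fut_{\langle l,\infty)} b$ assert one $b$ at distance at least $l$, and then let $\chi_{m+1} \equiv \fut_{\langle l, \infty)}\bigl(b \wedge \fut_{(0,\infty)}\,\chi'_m\bigr)$ where $\chi'_m$ counts the remaining $m$ witnesses among \emph{strictly later} points (all of which are automatically at distance $\geq l$ from $i$, since they are past a point already at distance $\geq l$). Thus once the first witness beyond distance $l$ is fixed, every subsequent $b$-point in the future also lies in $\langle l,\infty)$ relative to $i$, so the remaining $n-1$ witnesses need only be counted by the plain untimed (or $(0,\infty)$-constrained) eventuality $\fut\,b$ nested $n-1$ times. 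This collapses the interval-constrained counting into a single timed eventuality followed by $n-1$ purely ordering-based future modalities, which is the crux of why the infinite interval is easy.

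The formal rewriting I would commit to is
$$\cnt^{\geq n}_{\langle l,\infty)}b \;\equiv\; \fut_{\langle l,\infty)}\Bigl(b \wedge \underbrace{\fut_{(0,\infty)}(b \wedge \fut_{(0,\infty)}(b \wedge \cdots))}_{n-1 \text{ nested }\fut}\Bigr),$$
with the convention that for $n=0$ the formula is $true$ and for $n=1$ it is just $\fut_{\langle l,\infty)}b$. I would then prove the equivalence by a short argument in both directions: if there are $n$ distinct $b$-points at distance $\geq l$, order them by time and use the earliest to satisfy the outer $\fut_{\langle l,\infty)}$ and the rest to satisfy the chain of $\fut_{(0,\infty)}$; conversely, any satisfying run of the nested formula exhibits $n$ strictly increasing positions all carrying $b$, the first at distance $\geq l$ and hence all at distance $\geq l$, giving $|N^{\rho}[i,\langle l,\infty)](b)| \geq n$.

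I do not anticipate a deep obstacle, since the unboundedness of the interval is precisely what removes the need to count inside a fixed window; the one point requiring care is the monotonicity argument that every $b$-witness after the first (already beyond distance $l$) still lies in $\langle l,\infty)$, which must use $t_k \leq t_{k+1}$ and the open lower-boundedness away from a point at distance $\geq l$. The only mild subtlety is handling the boundary case $l \in I$ versus $l \notin I$ (closed versus open left endpoint of $\langle l,\infty)$), but this only affects the outermost $\fut_{\langle l,\infty)}$ and is absorbed by choosing the matching interval constraint; the nested inner eventualities are unaffected. The resulting $\mtl$ formula has size linear in $n$, which I would note in passing.
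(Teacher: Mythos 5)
Your proposal is correct and follows essentially the same route as the paper: anchor the first witness with a single $\fut_{\langle l,\infty)}$ and then count the remaining $n-1$ witnesses with an untimed nested chain, using monotonicity of timestamps to conclude that all later witnesses stay in $\langle l,\infty)$. The paper's only cosmetic difference is that it steps to successive $b$'s via nested $\neg b \until b$ rather than nested $\fut$, which changes nothing in the argument (though on weakly monotonic words you should prefer the plain untimed $\fut$ over $\fut_{(0,\infty)}$, as you yourself note).
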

We now outline the steps followed to obtain an equisatisfiable formula in $\mtl$, assuming $\cnt^{\geq n}_{\langle l, \infty)}b$ modalities 
have been eliminated using Lemma \ref{remove-cntinf}.
 \begin{enumerate}
 \item \textbf{\it {Flattening}} :  Flatten  $\chi$ obtaining $\chi_{flat}$ over $\Sigma \cup W$, where $W$ is the set of witness propositions 
used, $\Sigma \cap W=\emptyset$.
  \item \textbf{\it{Eliminate Counting}} :  Consider, one  by one,
each temporal definition $T_i$ of $\chi_{flat}$.
  	 Let $\Sigma_i=\Sigma\cup W \cup X_i$, where $X_i$ is 
  a set of fresh propositions,  $X_i \cap X_j=\emptyset$ for $i \neq j$. 
 \begin{itemize}
 \item  If $T_i$ is a temporal projection containing a $\mathsf{C}$ modality
of the form $\cnt^{\sim n}_{\langle l, u \rangle}$, or a $\mathsf{UT}$ modality 
of the form $x \until_{I,\#b \le n}y$, then 
Lemma \ref{cnt-sp} 
   synthesizes a   formula $\zeta_i \in \mathsf{MTL}$ over $\Sigma_i$ such that 
   $T_i \equiv \exists X_i. \zeta_i$. 
\item If $T_i$ is a temporal projection containing a $\mathsf{UT}$ modality
of the form $x \until_{I,\#b \ge n}y$, Lemma \ref{th-os}
gives  $\zeta_i \in \mtl$ over $\Sigma_i$ such that  
$ONF_{\Sigma}(T_i)\equiv \exists \downarrow X_i. \zeta_i$. 
 \end{itemize}
 \item \textbf{\it{Putting it all together}} :  The formula
     $\zeta = \bigwedge_{i=1}^k \zeta_i  \in \mtl$ is such that \\ $\bigwedge_{i=1}^kONF_{\Sigma}(T_i) \equiv \exists \downarrow X. \bigwedge_{i=1}^k \zeta_i$ where
    $X=\bigcup_{i=1}^k X_i$. 
           \end{enumerate}
 
\begin{lemma}
 \label{cnt-sp}
 \begin{enumerate}
 \item  Consider a temporal definition 
 $T=\wB[a \leftrightarrow \cnt^{\geq n}_{[l, u)}b]$,  built from $\Sigma \cup W$. 
 Then we synthesize a formula $\zeta \in \mathsf{MTL}$ 
 over  $\Sigma \cup W \cup X$ 
  such that $T  \equiv \exists  X. \zeta$.
\item 	Consider a temporal definition 
	$T=\wB[a \leftrightarrow x \until_{I,\#b \le n}y]$,  built from $\Sigma \cup W$. 
	Then we  synthesize a formula $\zeta \in \mtl$ 
	over  $\Sigma \cup W \cup X$ 
	such that $T  \equiv \exists X.\zeta$.
 \end{enumerate}
 \end{lemma}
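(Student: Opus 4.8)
The plan is to eliminate both counting modalities by a single uniform device: a \emph{cyclic phase-colouring} of the $b$-points, guessed by the existential (simple) projection over a set $X$ of fresh propositions and verified by an ordinary $\mtl$ formula. Since we work over strictly monotonic words, the $b$-points are linearly ordered by time, and the $b$-points falling inside any time interval --- in particular inside $[t_i+l,t_i+u)$, or strictly between two positions $i<j$ --- form a \emph{contiguous block} of this order. The one combinatorial fact driving everything is: if the $b$-points are coloured cyclically with residues modulo $m$, so that successive $b$-points carry successive residues, then a contiguous block of them covers all $m$ residues if and only if its length is at least $m$.

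For part (1) I take $X=\{ph_0,\dots,ph_{n-1}\}$ and colour modulo $m=n$. I would write $\zeta$ as the conjunction of (i) colouring constraints forcing each $ph_r$ to hold only at $b$-points, exactly one residue per $b$-point, with consecutive $b$-points carrying consecutive residues $ph_r \to ph_{(r+1)\bmod n}$ (the last $b$-point handled by a weak-until phrasing), and (ii) the rewritten definition $\wB\big[a \leftrightarrow \bigwedge_{r=0}^{n-1}\fut_{[l,u)}(b\wedge ph_r)\big]$, adding the trivial current-point contribution when $0\in[l,u)$. By the combinatorial fact, on any model satisfying (i) the conjunction $\bigwedge_r \fut_{[l,u)}(b\wedge ph_r)$ holds at a point exactly when the contiguous block of $b$-points inside $[t_i+l,t_i+u)$ has at least $n$ elements, i.e. exactly when $\cnt^{\geq n}_{[l,u)}b$ holds. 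Hence $\zeta$ forces $a\leftrightarrow \cnt^{\geq n}_{[l,u)}b$, giving $\exists X.\zeta \Rightarrow T$; conversely any model of $T$ is extended to a model of $\zeta$ by laying down the (always available) cyclic colouring, which is a \emph{simple} projection since no new time points are introduced.

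For part (2) the same colouring works with modulus $m=n+1$ and $X=\{ph_0,\dots,ph_n\}$; here the point is to recognise $\#b\leq n$, i.e. to certify that the block of $b$-points strictly between $i$ and the witness $j$ has length $\leq n$, equivalently that it \emph{misses} some residue $r$. The key move is that this ``missing residue'' condition fuses directly into the until, because ``no $b$ of phase $r$ occurs strictly between $i$ and $j$'' is just the demand that $\neg(b\wedge ph_r)$ hold throughout the open segment. Thus I would rewrite the modality as $\bigvee_{r=0}^{n}\big[(x\wedge\neg(b\wedge ph_r))\until_I y\big]$, which enforces the temporal window $I$, the $x$-invariant, and the count bound \emph{with a single witness} $j$. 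Taking $\zeta$ to be the colouring constraints conjoined with $\wB\big[a\leftrightarrow\bigvee_{r=0}^{n}((x\wedge\neg(b\wedge ph_r))\until_I y)\big]$ and arguing both directions as above yields $T\equiv\exists X.\zeta$.

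The main obstacle is getting the combinatorial correctness exactly right and checking that it survives both directions of the projection. Concretely I must verify that (a) the $\mtl$ colouring constraints really do pin down a cyclic-mod-$m$ colouring on \emph{every} model, so the ``contiguous block covers all residues iff length $\geq m$'' equivalence can be invoked, including the boundary behaviour at the last $b$-point; (b) fusing $\neg(b\wedge ph_r)$ into the left argument of $\until_I$ in part (2) captures $\#b\leq n$ without accidentally over-constraining the window or the $x$-invariant; and (c) the whole construction is first carried out on strictly monotonic words, after which the standard lifting to weakly monotonic words applies. Everything else --- that the projection is simple rather than oversampled, and that the rewritten definitions are genuine $\mtl$ formulae --- is then immediate.
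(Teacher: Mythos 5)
Your proposal is correct and is essentially the paper's own argument: both parts introduce a cyclic modular counter of $b$-occurrences as fresh propositions added by a simple projection, characterise $\cnt^{\geq n}_{[l,u)}b$ as ``every counter value occurs in the window'' via $\bigwedge_r \fut_{[l,u)}(\cdot)$, and characterise $\#b\le n$ as ``some counter value is avoided up to the witness,'' fused into the left argument of $\until_I$ with the disjunction over the missing residue taken outside the until. The only differences are cosmetic (you colour only the $b$-points with modulus $n$, resp.\ $n+1$, whereas the paper propagates a counter value at every point), and your explicit placement of the disjunction outside the until is the intended reading of the paper's formula $\lambda$.
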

\begin{proof}
\begin{enumerate}
\item Lets consider intervals of the form $[l,u)$. Our proof extends to all intervals $\langle l, u \rangle$.
Consider $T=\wB[a \leftrightarrow \cnt^{\geq n}_{[l, u)}b]$. 
Let $\oplus$ denote  addition modulo $n+1$.
\begin{enumerate}
\item \emph{Construction of a ($\Sigma \cup W, X)$- simple extension}. We introduce a fresh set of propositions $X = \{b_0,b_1,\ldots,b_n\}$ and construct a simple extension 
$\rho'=(\sigma', \tau')$ from $\rho=(\sigma,\tau)$ as follows:
\begin{itemize}
	\item \textbf {$C1$}: $\sigma'_1= \sigma_1 \cup \{b_0\}$. If $b_k \in \sigma'_i$   
	 and if $b \in \sigma_{i+1}$, $\sigma'_{i+1}=\sigma_{i+1} \cup \{b_{k \oplus 1}\}$. 
	\item \textbf {$C2$}:  If  $b_k \in \sigma'_i$  and $b \notin \sigma_{i+1}$, then $\sigma'_{i+1}=\sigma_{i+1} \cup \{b_k\}$.
	\item \textbf{$C3$}: $\sigma'_{i}$ has exactly one symbol from $X$ for all $1 \leq i \leq |dom(\rho)|$. 
	\end{itemize}
\item \emph{Formula specifying the above behaviour}. 
 The variables in $X$ help in counting the number of $b$'s in $\rho$. 
 $C1$ and $C2$ are written in $\mathsf{MTL}$ as follows: 
\begin{itemize}
	\item  $\delta_1 = \bigwedge \limits_{k=0}^n \wB[(\nex b\wedge b_k) \rightarrow \nex b_{k\oplus1}]$ and 
 $\delta_2 = \bigwedge \limits_{k=0}^n \wB[(\nex \neg b \wedge b_k) \rightarrow \nex b_k]$
\end{itemize}
\item  \emph{Marking the witness `$a$' correctly at points satisfying $\cnt^{\geq n}_{[l, u)}b$}.  
The index $i$ of $b_i$ at a chosen point gives the number of $b$'s seen so far
since the previous occurrence of $b_0$.
From a point $i$, if the interval $[t_i+l, t_i+u)$ 
has $k$ elements of $X$, then
there must be $k$ $b$'s in   $[t_i+l, t_i+u)$.
 To mark the witness $a$ appropriately, we need to check the number of times $b$ occurs in $[t_i+l,t_i+u]$ from the current point $i$.
 A point $i \in dom(\rho')$ is marked with witness $a$ iff all variables of $X$ are present in $[t_i+l,t_i+u)$, as 
 explained in $\mtl$ by   
  $\kappa = \wB[a \leftrightarrow (\bigwedge \limits_{k = 1}^n \fut_{[l,u)} b_k)]$.
\end{enumerate}
$\zeta = \delta_1 \wedge \delta_2 \wedge \kappa$ in $\mathsf{MTL}$ is equisatisfiable to $T$ modulo simple projections.
\item The proof is similar to the above, 
 details are in Appendix \ref{app:tmtlsp}.
\end{enumerate}

	\end{proof}

\begin{lemma}
	\label{th-os}
	Consider a temporal definition 
	$T=\wB[a \leftrightarrow x \until_{I,\#b \ge n}y]$,  built from $\Sigma \cup W$. 
	Then we synthesize a formula $\psi \in \mtl$ 
	over  $\Sigma \cup W \cup X$ 
	such that $ONF_{\Sigma}(T)\equiv \exists \downarrow X. \psi$ where $ONF_{\Sigma}(T)$ is $T$ relativized with respect to $\Sigma$. 
\end{lemma}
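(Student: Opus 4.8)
The plan is to convert the threshold until $x \until_{I,\#b \ge n} y$, with $I=\langle l,u\rangle$, into an ordinary $\mtl$ formula over an alphabet enriched by \emph{new time points} rather than by mere relabelling; this is exactly why oversampled projections are needed here whereas the $\#b \le n$ case of Lemma~\ref{cnt-sp} could make do with a simple projection. Intuitively, for ``$\le n$'' it suffices to \emph{monitor} a capped running count and forbid it from exceeding $n$ before the witness, which the existing points already support; for ``$\ge n$'' one must be able to \emph{certify} that the count reaches $n$ along the variable window $(i,j)$, and to do so uniformly for every evaluation point $i$ we introduce shared marker points.

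First I would record the semantics: $a$ holds at $i$ iff there is a $j$ with $y$ at $j$, $t_j-t_i\in I$, $x$ on all of $(i,j)$, and at least $n$ $b$-positions strictly inside $(i,j)$; equivalently, a valid $y$-witness $j$ exists strictly beyond the $n$-th $b$ following $i$. To make the count expressible in $\mtl$ I would oversample at the integer time points (marking them by a fresh proposition $\mathit{int}\in X$) and split $(t_i,t_j)$ at these integers. Its $b$-count then decomposes as the count in the initial sub-unit piece $(t_i,\maxel{t_i})$, plus the counts in the at most $u$ \emph{full} unit segments it spans, plus the count in the final sub-unit piece $(\minel{t_j},t_j)$. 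Because $u\in\N$, the number of full segments is bounded, so I can attach to each oversampled integer marker a capped counter (propositions of $X$ encoding a value in $\{0,\dots,n\}$) giving the number of $b$'s in the unit segment to its right, and these counters are shared by all $i$.

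The two sub-unit boundary pieces are handled by reusing the counting gadget of Lemma~\ref{cnt-sp}: each is a threshold count of $b$ over a window of length $<1$ anchored at $i$ (resp.\ terminating at the witness), which is expressible with the same capped $b_0,\dots,b_n$ construction. ``Total count $\ge n$'' is then a finite boolean combination of the two boundary counts and the boundedly many per-segment counters; conjoining this with a plain $\mtl$ until that enforces $y$ at the witness, $x$ continuously on $(i,j)$, and $t_j-t_i\in I$, yields a single $\mtl$ formula $\psi$ over $\Sigma\cup W\cup X$. Finally I would relativize: since the oversampled markers are non-action points carrying only symbols of $X$, every subformula speaking of $x,y,b,a$ is guarded to fire only at genuine $\Sigma$-action points, giving $ONF_\Sigma(T)$; the two directions of $ONF_\Sigma(T)\equiv \exists{\downarrow}X.\,\psi$ follow because from a model of $T$ one inserts the integer points and fills the counters deterministically, and conversely deleting the $X$-points and erasing $X$ recovers a model of $T$ since the counters were forced to hold the true capped counts.

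I expect the main obstacle to be the joint treatment of the single global timing constraint $t_j-t_i\in I$ together with the count over that \emph{same} window: the count must be read up to precisely the $j$ that also witnesses $I$, and since $I$ may be punctual in $\mathsf{CTMTL}$ the two sub-unit boundary pieces and the integer-aligned segment counters must match up exactly at the split points. Forcing the markers to sit precisely at the integers, getting the boundary sub-unit counts to recombine with the segment counters without double counting, and ensuring the added points neither interrupt $x$-continuity nor are mistaken for $b$'s, is where the real bookkeeping concentrates; it is also precisely the step that genuinely requires oversampled rather than simple projections. (The case $u=\infty$ is handled separately by a running count up to the witness, analogously to Lemma~\ref{remove-cntinf}.)
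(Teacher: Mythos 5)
Your overall strategy (oversample at the integer time points, attach capped $b$-counters to the new points, decompose the count over $(t_i,t_j)$ at integer split points, and recombine by a bounded boolean combination) is the same family of construction as the paper's, and you correctly identify both why oversampling rather than simple projection is needed and that the unbounded case reduces to an untimed threshold until. However, there is a genuine gap precisely at the point you flag as "the main obstacle" and then leave unresolved: all the pieces of the count must be certified for one and the same existentially quantified witness $j$ of the until. In your decomposition the set of "full" unit segments and the final sub-unit piece $(\lfloor t_j\rfloor,t_j)$ both depend on $j$, so writing the timing/continuity condition as one until and the count condition as separate conjuncts is unsound (different conjuncts may be satisfied by different witnesses), and the number of full segments is not a function of $\lfloor t_j-t_i\rfloor$ alone but of the fractional parts of $t_i$ and $t_j$, so even the case split over "how many full segments" cannot be phrased as a timing constraint on $j$. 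Also, your final boundary piece cannot literally reuse the gadget of Lemma~\ref{cnt-sp}, which counts over a window whose endpoints are fixed time offsets from the current point, not a window terminating at an until-witness.

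The paper closes exactly this gap with three devices you would need (or an equivalent): (a) it splits the window at a \emph{single} point $\alpha$ with $t_\alpha=\lceil t_j+l\rceil$, which is uniquely determined by the evaluation point alone, so the left piece's count is witness-independent and can be checked by an untimed threshold until running to the unique $c_i$ in $[l,l+1)$; (b) it reads the right piece's count \emph{at the witness itself}, by making the target of the until be $y\wedge b^i_h$, where the counters $b^i_h$ are reset at $c_i$ and run over the whole length-$(u-l)$ window up to $c_{i\oplus L}$ (one counter family per residue, not one per unit segment), so count, timing, $y$, and $x$-continuity are all tied to the same $j$; and (c) it uses the modular markers $c_0,\dots,c_u$ together with a precomputed label $a_h$ at each evaluation point, defined via the maximal witness $max_j$ and the monotonicity of $x\until_{[l,u)}(y\wedge b^i_h)$ in $h$, plus a separate disjunct $\lambda_3$ for witnesses occurring before $\alpha$. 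Without something playing the role of (a)--(c), your formula cannot be assembled soundly; with them, the finer per-unit-segment summation becomes unnecessary.
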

\begin{proof}
	If $I$ is of the form $\langle l,\infty )$, 
	then $x \until_{\langle l, \infty) ,\#b \ge n}y \equiv 
	x \until_{\langle l, \infty)}y \wedge x \until_{\#b \ge n}y$. 
	The untimed threshold formula $x \until_{\#b \ge n}y$ can be rewritten in $\mathsf{LTL}$ \cite{cltl}.
		
		The next case is when the interval $I$ is bounded of the form  $[l,u)$. Our reduction below can be adapted to other 
	kinds of bounded intervals.  
	Let $j$ be any point. 
Let $far_j$ be the farthest point in the $[l,u)$ future of $j$ such that $y$ is true at $far_j$, and  $x$ continuously holds 
at all the intermediate points between $j$ and $far_j$. To check the truth of $x \until_{I,\#b \ge n}y$ at $j$, we need to assert that the number of $b$'s from $j$ to $far_j$ is $\ge n$. We first count the number of $b$'s from the first integer point in the $[l,u)$ future of $j$ (let this point be $\alpha$) to $far_j$ and 
add this to the number of $b$'s between $j$ and $\alpha$. In case $far_j$ lies before $\alpha$, then we simply count the number of $b$'s 
between $j$ and $far_j$. Since we may not have all integer points at our disposal, we 
oversample the model by adding extra points at all integer time stamps.

			Let $L = u-l$. Define $s \boxplus t = min(s+t,n)$, and $s \oplus t=(s+t)~\mbox{mod}~(u+1)$.
		\\1) {\it{Construction of a $(\Sigma \cup W, X)$-oversampled behaviour}}. We introduce a fresh set of propositions $X = C \cup A \cup B$ where $C,B,A$ are  defined below. Given any timed word $\rho$, we then construct a  $(\Sigma \cup W, X)$-oversampled behaviour $\rho'=(\sigma', \tau')$ from $\rho=(\sigma, \tau)$.	\begin{itemize}
		\item \textbf{$O1$:} $C = \{c_0,c_1,\ldots,c_u\}$. A point $i$ of $\rho$ is marked $c_g$ iff $t_i~\mbox{mod}~u = g$. 
		In the absence of such a point $i$  
		 (such that $t_i$ is an integer value $k < t_{|dom(\rho)|}$), we add a new point $i$ to $dom(\rho)$ with time stamp $t'_i$ and mark it with $c_g$ iff $t'_i~\mbox{mod}~u = g$. 
		 Let $\rho_c=(\sigma^c, \tau^c)$ denote the word obtained from $\rho$ after this marking. 
		\item \textbf{$O2$:} $B = \cup_{i=0}^{u}B^i$, where $B^i=\{b^i_0,b^i_1,\dots b^i_n\}$. 
		 All the points of $\rho_c$ marked $c_i$ are marked as $b^i_0$. Let $p,q$ be two integer points such that $p$ is marked  
		   $c_i$, $q$ is marked $c_{i \oplus L}$, and no point between $p, q$ is marked  $c_{i \oplus L}$. 
		   $p, q$ are $L$ apart from each other. 		   		  Let $p < r <  q$ be  such that  $b^i_g \in \sigma^c_r$ for some $g$. If 
		  $c_{i \oplus L} \notin \sigma^c_{r+1}$ and $b \in  \sigma^c_{r+1}$, then 
		  the point $r+1$ is marked $b^i_{g \boxplus 1}$. 
		  If $c_{i \oplus L}, b \notin \sigma^c_{r+1}$, then 
		  the point $r+1$ is marked $b^i_g$.
		  Each $B^i$ is a set of counters which are reset at $c_i$ and counts the number of occurrences of $b$ upto the threshold $n$
		between a $c_i$  and the next occurrence of $c_{i \oplus L}$. 
		Starting at a point marked $c_i$ with counter $b^i_0$, the counter increments upto $n$ on encountering a $b$, until the next $c_{i \oplus L}$. 
		Further, we ensure that the counter $B^i$ does not appear anywhere from $c_{i\oplus L}$ to the next $c_i$.  
		 Let the resultant word be $\rho_b$.
	\item \textbf{$O3$:} $A = \{a_0,a_1,\ldots,a_n\}$.  Consider any point $j$ in $\rho_b$ with time stamp $t_j$. 
	Let $\alpha$ be a point with time stamp $\lceil t_j+l \rceil$. 
	Let $max_j$ represent a point satisfying the following conditions: (a) 
 $y$ is true at $max_j$ and  $t_{max_j} \in [t_j+l, t_j+u)$, (b) $x$ is true at all points between $j$ and $max_j$, and (c) the number of occurrences of $b$ 
 from $\alpha$ to $max_j$ is either $\geq n$, or is the maximum amongst all points which satisfy 
 (a) and (b).   The point $j$ is marked $a_h$ iff $h<n$ is the number 
 of occurrences of $b$'s from $\alpha$ to $max_j$. If the count of $b$'s from $\alpha$ to $max_j$ is $\geq n$, then 
 $j$ is marked $a_n$.   Note that whenever $max_j$ exists, it will be at or after $\alpha$. 
 $max_j$ need not always exist; we could have a point 
      $\beta$ with time stamp $t_j \leq t_{\beta} \leq t_{\alpha}$ such that   
         $y$ is true at $\beta$, $x$ holds continuously between $j$ and $\beta$, and the number of occurrences of $b$ in between $j$ and $\beta$ is $\ge n$.
  Let $\rho'$ be the word obtained after all the markings.
	\end{itemize}
2) {\it{Formula for specifying above behaviour}}. We give following $\mathsf{MTL}$ formulae to specify $O2$ and $O3$. 
 $\delta_2=\bigwedge \limits_{i=0}^u (\delta_{2i}(1) \wedge \delta_{2i}(2) \wedge \delta_{2i}(3))$ encodes $O2$ where \\
$ \delta_{2i}(1) = \wB(c_i \rightarrow b^i_0) \wedge \bigwedge \limits_{k=0}^n \wB[(\nex (b\wedge \neg c_{i \oplus L})\wedge b^i_k) \rightarrow \nex b^i_{k\boxplus1}]$,\\
 $\delta_{2i}(2) = \bigwedge \limits_{k=0}^n \wB[(\nex (\neg b \wedge \neg c_{i \oplus L}) \wedge b^i_k) \rightarrow \nex b^i_{k}]$ and \\
$\delta_{2i}(3)=\bigwedge \limits_{i=0}^u \wB[c_{i \oplus L} \rightarrow (\neg c_i \wedge \neg b^i) \until c_i]$, where 
$b^i=\bigvee \limits_{k=0}^u b^i_k$.   
$O3$ is encoded by $\delta_3 = \bigvee\limits_{i=0}^{i=u}(\wB[a_h \leftrightarrow (\neg act \vee x) \until_{[l,u)} (y \wedge b^i_h) \wedge  \neg \{(\neg act \vee x) \until_{[l,u)} (y \wedge b^i_{h+1})\}] \wedge \fut_{\mathcal{I}}c_i)$ where $\mathcal{I}=[l,l+1)$.   The truth of $\delta_3$ relies on the fact that if $x \until_{[l,u)} (y \wedge b^1_h)$ and $x \until_{[l,u)} (y \wedge b^i_{h+2})$ are both  true at a point, then $x \until_{[l,u)} (y \wedge b^i_{h+1})$ is also true at the same point. 	 Hence,  if $x \until_{[l,u)} (y \wedge b^i_h)$ is true,  and $x \until_{[l,u)} (y \wedge b^i_{h+1})$ is not true at some point, then $h$ is the largest number such that $x \until_{[l,u)} (y \wedge b^i_h)$ is true. Let $act=\bigvee(\Sigma \cup W)$. 
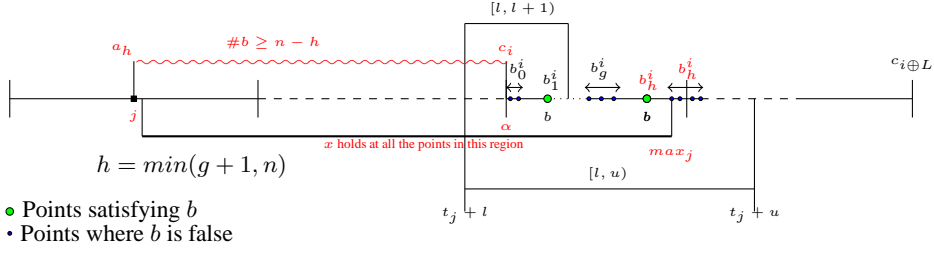
\begin{figure}[t]
\begin{tikzpicture}
\foreach \sc in {1.1}
\foreach \x in {0.25}{
\draw (\x,0) -- (\x+3*\sc,0);
\draw[dashed] (\x+3*\sc,0)--(\x+6*\sc,0);
\draw (\x+6*\sc,0)--(\x+6.5*\sc,0);
\draw[dotted] (\x+6.5*\sc,0)--(\x +7*\sc,0);
\draw (\x+7*\sc,0)--(\x+8.35*\sc,0);
\draw[dashed] (\x+8.35*\sc,0) -- (\x+9.5*\sc,0);
\draw (\x + 9.5*\sc,0) -- (\x + 12,0);
\foreach \y in {0*\sc,3*\sc,6*\sc,9,12}
{
\draw (\x+\y,-0.25)--(\x+\y,0.25);
}


%
%
%
%
%

%
\draw (\x+5.5*\sc,1)--(\x+5.5*\sc,-1.5);
\node at (\x + 6.2*\sc,1.2) {\tiny{$[l,l+1)$}};
\draw (\x+6.75*\sc,1)--(\x+6.75*\sc,0);
\draw (\x+5.5*\sc,1)--(\x+6.75*\sc,1);
\draw (\x+9*\sc,0)--(\x+9*\sc,-1.5);

\draw (\x + 5.5*\sc, -1.2) -- (\x + 9*\sc, -1.2);
\draw (\x + 1.6*\sc, 0) -- (\x + 1.6*\sc, -0.5);
\draw (\x + 8*\sc, 0) -- (\x + 8*\sc, -0.5);
\node at (\x + 8*\sc,-0.75) {\tiny {\textcolor{red}{$max_j$}}};
\draw [thick] (\x + 1.6*\sc, -0.5) -- (\x + 8*\sc, -0.5);
\node at (\x + 6*\sc,-0.38) {\tiny {\textcolor{red}{$\alpha$}}};
\node at (\x + 6*\sc,0.65) {\tiny {\textcolor{red}{$c_i$}}};
\node at (\x + 12,0.45) {\tiny{$c_{i  \oplus L}$}};
\node at (\x+ 5*\sc,-0.61) {\tiny{\textcolor{red}{$x$ holds at all the points in this region}}};
\node at (\x + 7.23*\sc,-1) {\tiny {$[l,u)$}};
\node at (\x+ 5.5*\sc,-1.55) {\tiny{$t_j+l$}};

\node at (\x+ 9*\sc,-1.55) {\tiny{$t_j+u$}};


\foreach \y in {1.5*\sc,6*\sc}
{
	\draw (\x+\y,0)-- (\x+\y,0.5);
}
\node at (\x+3.81*\sc,0.5) {\textcolor{red}{\uwave{\hspace{5cm}}}};
\foreach \y in {1.5*\sc}
{
\node[fill = black,draw = black,rectangle,inner sep=1pt,label=below:{\tiny{\textcolor{red}{$j$}}}] at (\x+\y,0){};
}

\foreach \w in {0*\sc}
{

\foreach \y in {7*\sc,7.15*\sc,7.3*\sc, 8*\sc,8.1*\sc,8.25*\sc,8.35*\sc}
{
	\node[fill =blue,draw = black,circle,inner sep=0.5pt] at (\x+\y+\w ,0){};
	
}

\draw[<->] (\x+6.95*\sc+\w*\sc,0.15)--(\x+7.38*\sc+\w*\sc,0.15);
\node at (\x + 7.13*\sc,0.4) {\tiny{$b^i_g$}}; 
\draw[<->] (\x+7.95*\sc+\w*\sc,0.15)--(\x+8.38*\sc+\w*\sc,0.15);
\node at (\x + 8.2*\sc,0.4) {\tiny{\textcolor{red}{$b^i_h$}}};
\node at (7.95*\sc,0.25) {\tiny{\textcolor{red}{$b^i_h$}}};



}
\foreach \y in {7.7}
{
	\node[fill = green,draw = black,circle,inner sep=1pt,label=below:{\tiny{$b$}}] at (\x+\y*\sc,0){};

	\node[fill = green,draw = black,circle,inner sep=1pt,label=below:{\tiny{$b$}}] at (\x+\y*\sc,0){};
}

\foreach \y in {6.05*\sc,6.15*\sc}
{
	\node[fill = blue,draw = black,circle,inner sep=0.5pt] at (\x+\y ,0){};
	
}
\draw[<->] (\x+6*\sc,0.15)--(\x+6.2*\sc,0.15);
\node at (\x + 6.15*\sc,0.4) {\tiny{$b^i_{\tiny{0}}$}};
\node at (\x + 6.55*\sc,0.25) {\tiny{$b^i_{\tiny{1}}$}};
	\node[fill = green,draw = black,circle,inner sep=1pt,label=below:{\tiny{$b$}}] at (\x+6.5*\sc,0){};

\node at (\x+2.2*\sc,-.9) {$h = min(g+1,n)$}; 

\node at (3.75,0.75){\tiny{\textcolor{red}{$\#b \ge n-h$}}};
	\node[fill = green,draw = black,circle,inner sep=1pt,label=right:{Points satisfying $b$}] at (\x+0,-1.5){};
	\node[fill = blue,draw = black,circle,inner sep=0.5pt,label=right:{Points where $b$ is false}] at (\x+0,-1.791){};
\node at (1.75,0.65) {\tiny {\textcolor{red}{$a_h$}}};

}

\end{tikzpicture}
\caption{Illustration of point $j$, $max_j$ and the point $\alpha$ such that $t_{\alpha}=\lceil t_j+l \rceil$.
$\alpha$ is marked with some $c_i$ since it is an integer time point. 
The counting of $b$'s is reset at $c_i$, starting with $b^i_0$, and continues till $c_{i \oplus L}$.
Since $max_j$ is marked $b_h^i$, $j$ is marked $a_h$. $h$ is the count of $b$'s between $\alpha$ and $max_j$.
To satisfy $\Box(a \leftrightarrow x \mathsf{U}_{[l,u),\#b \ge n} y)$ at $j$, we check that the number of $b$'s 
between $j$ and $\alpha$ is $\geq n-h$ when $b$ is not true at $\alpha$, and is $\geq n-h-1$ when $b$ is true at $\alpha$.}  
\label{tmtl-fig}
\end{figure}

\noindent 3) \emph{Marking the witness `a' correctly at points satisfying} $x \until_{I,\#b \ge n}y$. 
Let $j$ be any point in $\rho'$, such that $max_j$ exists. 
 We first count the number of $b$'s from  $j$ to the farthest integer point $\alpha$ (recall that $t_{\alpha}= \lceil t_j + l\rceil$), followed by counting 
 the number of $b$'s from $\alpha$ to $max_j$. 
  Note that the index $h$ of $a_h$ marked at $j$ gives the count (upto $n$) of $b$'s from $\alpha$ to $max_j$. 
  We check the count of $b$'s between $j$ and $\alpha$ is $\geq n-h$. Let $\mathcal{I}=[l,l+1)$.\\
\noindent	$\lambda_1 = \bigvee \limits_{h=0,i=0}^{h=n,i=u} [(a_h \wedge \fut_{\mathcal{I}}(c_i \wedge \neg b) \wedge [(\neg act \vee x) \wedge \neg c_i]
	\until_{\#b\ge n-h} c_i)]$\\
\noindent	$\lambda_2 = \bigvee \limits_{h=0,i=0}^{h=n,i=u} [(a_h \wedge \fut_{\mathcal{I}}(c_i \wedge b) \wedge [(\neg act \vee x) \wedge \neg c_i]
	\until_{\#b\ge n-h-1} c_i)]$\\
If $max_j$ does not exist, then we characterize the point  $\beta$  by the truth of the formula $\lambda_3=((x \vee \neg act) \wedge \neg c) \until_{\#_b \ge n} y$, where $c=\bigvee c_k$. The formula $\lambda=\wB[a \leftrightarrow (\lambda_1 \vee \lambda_2 \vee \lambda_3)]$
	captures marking point $j$ correctly with $a$.
    Thus we obtain the $\mathsf{MTL}$ formula
    $\zeta = \delta_2 \wedge \delta_3 \wedge \lambda$.\\
    \end{proof} 
\section{Discussion and Related Work}
Within temporal and real time logics, the notion of counting has attracted considerable interest.
Laroussini {\em et al} extended untimed $\mathsf{LTL}$ with threshold counting constrained until operator. They showed that the expressiveness of 
$\mathsf{LTL}$ is not increased by adding
threshold counting but the logic become exponentially more succinct. 
Hirshfeld and Rabinovich introduced $\cnt_{(0,1)}$ operator in continuous timed $\mathsf{QTL}$ and showed that it added expressive power. They also showed that in 
continuous time, more general $\cnt_{\langle l,u \rangle}$ operator can be expressed with just $\cnt_{(0,1)}$. Building upon this, Hunter showed that $\mathsf{MTL}$ with $\cnt_{(0,1)}$ operator
is expressively complete w.r.t. $\mathsf{FO}[+,1]$. Thus it can also express $\mathsf{UT}$ operator which is straightforwardly modelled in $\mathsf{FO}[+,1]$.

In this paper, we have explored the case of $\mathsf{MTL}$ with counting operators over timed words interpreted in pointwise manner. 
We have shown that both $\cnt_I$ and $\mathsf{UT}$  operators add expressive power to $\mathsf{MTL}$. Moreover, the two operators are independent in the sense 
that neither can be expressed in terms of the other and $\mathsf{MTL}$. (We use prefixes $\cnt$ and $\mathsf{T}$ to denote a logic extended with $\mathsf{C}$ and $\mathsf{UT}$ operators respectively).
It is easy to show (see Appendix \ref{app:tptl}) that $\mathsf{CTMTL} \subset \mathsf{TPTL}^1$.
All these expressiveness results straightforwardly carry over to $\mtl$ over infinite timed words.
Thus, pointwise semantics exhibits considerable complexity in expressiveness of operators as compared to continuous time semantics where all these logics
are equally expressive. While this may arguably be considered a shortcoming of the pointwise models of timed behaviours, the pointwise models have superior 
decidability properties making them more amenable to algorithmic analysis. $\mathsf{MTL}$ already has undecidable satisfiability in continuous time whereas it
has decidable satisfiability over finite timed words in pointwise semantics.

In this paper, we have shown that $\mtl$ extended with $\cnt$ and $\mathsf{UT}$ operators also has decidable satisfiability. The result is proved by giving an 
equisatisfiable reduction from $\mathsf{CTMTL}$ to $\mathsf{MTL}$ using the technique of oversampling projections. This technique was introduced earlier \cite{time14} and used to show that $\mathsf{MTL}[\until_I,\since_{\mathsf{np}}]$  with non-punctual past operator is also decidable in pointwise semantics. 
Current paper marks one more use of the technique of oversampling projections. 
A closer examination of our reduction from $\mathsf{CTMTL}$ to $\mathsf{MTL}$ shows that it can be used in presence of any other operator. Also, it does not introduce any 
punctual use of $\until_I$ in reduced formula. The reduced formula is exponentially larger than the original formula 
(assuming binary encoding of integer constants). All this implies that $\mathsf{CTMTL}[\until_I,\since_{\mathsf{np}}]$ is also decidable over finite timed words.
Moreover, $\mathsf{CTMITL}[\until_{\mathsf{NS}},\since_{\mathsf{NS}}]$ can be equisatisfiably reduced to $\mathsf{MITL}[\until_{\mathsf{np}},\since_{\mathsf{np}}]$ and it is decidable with at most 2-$\mathsf{EXPSPACE}$ complexity. 
The exact complexity of satisfiability checking of $\mathsf{CTMITL}$ is open although $\mathsf{EXPSPACE}$ lowerbound trivially follows from $\mathsf{MITL}$ and counting $\mathsf{LTL}$ which 
are syntactic subsets.

In another line of work involving counting and projection,
Raskin \cite{raskin-thesis} extended $\mathsf{MITL}$ and event clock logic with ability to count by extending these logics with automaton operators and adding second order quantification. The expressiveness was shown to be that of  recursive event clock automaton. These logics were able to count over 
the whole model rather than a particular timed interval. The resultant logic cannot specify constraints like within a time unit $(0,1)$ the number of 
occurrence of a particular formula is $k$ but can also incorporate mod counting. Thus Raskin's logics and the $\mathsf{CTMTL}$ are expressively independent.

\newpage
\bibliography{papers}

\newpage
\appendix
\centerline{\Large{Appendix}}
\section{Motivation}

\begin{example}
Our first example is motivated from medical devices used in monitoring foetal heart rate.
 In neo-natal care, the use  
  of external and 
internal foetal heart rate monitoring devices is well-known. 
The average foetal heart rate is between 110 and 160 beats per minute, and can vary 5 to 25 beats per minute. 
 An abnormal foetal heart rate ($<$ 100 beats per minute or $>180$ beats per minute) may indicate that the foetus is not getting enough oxygen or that there are other problems.  Current techniques rely predominantly on the use of electronic foetal monitoring through the use of cardiotocography (CTG). This technique records changes in the foetal heart rate (FHR) (via Doppler ultrasound or direct foetal ECG measurement with a foetal scalp electrode) and their temporal relationship to myometrial activity and uterine contractions. In  high risk cases, the electronid foetal monitoring 
 is combined with checking the mother's blood oxygen saturation levels.
 Normal blood oxygen levels are considered 95-100 percent. 
 These are specialised real-time properties that need to be formally specified 
  in order to model check important safety properties in medical devices.  
 These properties are not only time critical, but also need to measure the 
 number of times an event occurs in a given interval of time, to 
  ensure safety.  
  Let the proposition $\mathsf{fhb}$ denote a foetal heart beat, and 
 let the  proposition $\mathsf{sp-ok}$ denote normal blood oxygen levels of the mother. 
 The $\mathsf{CMTL}$ formula   $\Box_{[0,60]}\mathsf{spo{-}ok} \wedge \cnt^{\geq 110}_{[0,60]}(\mathsf{fhb}) \wedge 
  \cnt^{\leq 160}_{[0,60]}(\mathsf{fhb})$ specifies that in a duration of 60 seconds, the mother's blood oxygen levels are normal, while the foetal heart beats  in the range of [110,160].

    \end{example}
  
  \begin{example}
  Our second example is motivated from the problem of energy peak reduction 
  in large organisations using HVAC systems. 
  The problem of energy peak demand reduction within a large organization by synchronizing switching decisions of various ``heating, ventilation, and air conditioning'' (HVAC) systems is one of the 
  most practically relevant ones. The relationship between energy demand peaks and extreme climatic conditions 
  has been studied in the literature; hence, reducing the energy peak demand of HVAC systems can significantly 
   reduce the power peak demand.  Nghiem et al. considered the model of an organization 
   divided into various zones,  where at any given point of time, the HVAC system of a zone can be switched off or switched on  to ensure that the zone stays in a comfortable temperature range.          
  Several scheduling algorithms  for the same have been proposed so far in the literature, with the restriction that  simultaneously a bounded number of HVAC systems are switched on at any point in time.
Also, the number of times a HVAC unit oscillates between the on and off mode should be 
minimal, while respecting the comfortable temperature range in each zone. 
 Synthesizing the optimal number of HVAC units that have to remain switched on 
 to maintain the comfort level in any zone is an important research problem. 
We motivate the use of our counting logics to  specify the number of times an HVAC unit switches between the on and the off  mode. Let 
$\mathsf{hvac}^i_1$ be a proposition that evaluates to true when a HVAC in zone $i$  has just been switched on, and let $\mathsf{hvac}^i_1$ be a proposition that evaluates to true when a HVAC in zone $i$  has just been switched off. Let $I$ be the set of zones and 
  $\mathsf{zone}^i_{\mathsf{high\_temp}}$  be a proposition which evaluates to true when a zone $i$ is not in its comfortable 
 temperature zone, and let $\mathsf{zone}^i_{\mathsf{cz}}$
 be a proposition which evaluates to true when a zone $i$ is in its comfortable 
 temperature zone. Let $\psi_x= \mathsf{hvac}^i_{1-x} \wedge \neg \mathsf{hvac}^i_{x}~ \until ~\mathsf{hvac}^i_{x}$ and let  $\eta=\bigvee \limits_{x \in \{0,1\}} \psi_x$.   
  The $\mathsf{TMTL}$ 
 formula  $\bigwedge \limits_{i \in I}\wB(\mathsf{zone}^i_{\mathsf{high\_temp}} \rightarrow \fut_{\{
  [0,u],\#_\eta \le n\}} (\mathsf{zone}^i_{\mathsf{cz}}))$  specifies that any zone which is not in the comfortable range should reach the comfort zone in no more than $u$ time units,  and while reaching there the number of switches from on to off or off to on of any HVAC in the zone is at most $n$ times. One may also want to control the average number of times the switching happens between on and off. The $\mathsf{CMTL}$ formula  $\wB[\cnt^{<c}_{[0,1]} \bigvee \limits_{i \in I}\mathsf{hvac}^i]$ where $\mathsf{hvac}^i = \mathsf{hvac}^i_1 \vee \mathsf{hvac}^i_1$ specifies that from any event within $[0,1]$ the number of times any HVAC is switched on or off  is $<c$.
These counting logics can be used to model check the HVAC scheduling algorithms;  
it is also possible to rewrite these algorithms in the counting logics. 
Satisfiability checking of these logics can then be used to find 
the optimal number $k$ of HVAC systems that are required to be on to ensure a comfort temperature range in any zone
for a given time interval.  
Assuming that the environment behaviour and the scheduling algorithm is given in some declarative form, 
satisfiability checking of the formula 
$\fut_{\{[l,u],\#\mbox{switches}<w\}}(\bigwedge \limits_{i\in I} \neg \mathsf{zone}^i_{\mathsf{high\_temp}} \rightarrow \mathsf{Algo}_k \wedge \mathsf{environment\_parameter})$
for various values of $k$ and finding the minimal 
such $k$ tells the optimal number of HVAC units that should remain switched on per zone.
 \end{example}

\section{$\mathsf{CTMTL}$ with General Threshold Formulae}
\label{app:tmtl}
We now generalize the threshold modality used in $\mathsf{CTMTL}$ as follows: For $\varphi \in \mathsf{CTMTL}$, and $\sim \in \{<, \leq, \geq, >\}$
$$\eta:=\#\varphi \sim c~|~\eta \wedge \eta~|~\eta \vee \eta~|\neg\eta$$
We show in this section that any formula in $\mathsf{CTMTL}$ that is written with a 
complex threshold formula can be rewritten in terms of simple threshold formulae as introduced in Section \ref{sec:ctmtl}
while introducing $\mathsf{CTMTL}$.
\begin{lemma}
\label{cnf-open}
Let $\varphi \until_{I, \eta} \psi \in \mathsf{CTMTL}$ with 
 $\eta=\eta_1 \vee \dots \vee \eta_n$. Then 
 $\varphi \until_{I, \eta} \psi $ is equivalent to 
  $\varphi \until_{I, \eta_1} \psi \vee  \varphi \until_{I, \eta_2} \psi \vee \dots \vee  \varphi \until_{I, \eta_n} \psi$.  
\end{lemma}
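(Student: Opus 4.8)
The plan is to unfold the (generalized) semantics of the $\mathsf{UT}$ modality and then commute the existential quantifier over the witness position with the disjunction $\eta_1 \vee \dots \vee \eta_n$. Fix a timed word $\rho = (\sigma, \tau)$ and a position $i \in dom(\rho)$. By the semantics extended to boolean threshold formulae, $\rho, i \models \varphi \until_{I, \eta} \psi$ holds iff there exists a witness $j > i$ with $\rho, j \models \psi$, $t_j - t_i \in I$, and $\rho, k \models \varphi$ for all $i < k < j$ (call these the \emph{until conditions} $\mathrm{UC}(i,j)$), such that in addition $\eta$ evaluates to true when each atom $\#\kappa \sim c$ occurring in it is interpreted as the numerical constraint $|\rho[i,j](\kappa)| \sim c$.

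First I would observe that, once the witness $j$ is fixed, every count $|\rho[i,j](\kappa)|$ appearing in $\eta$ is a determined natural number, so the truth value of $\eta$ at $j$ is obtained by ordinary propositional evaluation of the boolean combination $\eta_1 \vee \dots \vee \eta_n$ over these numerical atoms. Consequently $\eta$ is true at $j$ if and only if at least one disjunct $\eta_m$ is true at $j$. Writing $\eta_m(i,j)$ for the statement that $\eta_m$ evaluates to true against the counts determined by the pair $(i,j)$, this gives, for each fixed $j$, the purely propositional equivalence $\eta(i,j) \iff \bigvee_{m=1}^{n} \eta_m(i,j)$.

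The only remaining step is to move the existential quantifier over $j$ inside the disjunction over $m$. Concretely, $\rho, i \models \varphi \until_{I,\eta} \psi$ iff $\exists j\,(\mathrm{UC}(i,j) \wedge \bigvee_m \eta_m(i,j))$ iff $\exists j\, \bigvee_m (\mathrm{UC}(i,j) \wedge \eta_m(i,j))$ iff $\bigvee_m \exists j\,(\mathrm{UC}(i,j) \wedge \eta_m(i,j))$ iff $\bigvee_m (\rho, i \models \varphi \until_{I, \eta_m} \psi)$, which is exactly $\rho, i \models \varphi \until_{I, \eta_1} \psi \vee \dots \vee \varphi \until_{I, \eta_n} \psi$. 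Since $\rho$ and $i$ were arbitrary, the two formulae are equivalent.

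There is essentially no hard part: the argument is a propositional commutation of $\exists$ with $\vee$. The one point that must be checked with care --- and the place where the argument would break down for a \emph{conjunctive} $\eta$ --- is that the witness $j$ is \textbf{shared} across all disjuncts. Because the counts $|\rho[i,j](\kappa)|$ depend only on the common interval $(i,j)$, a single witness $j$ simultaneously fixes the values of all atoms in every $\eta_m$, which is precisely what licenses pulling the disjunction out of the existential. For a conjunction $\eta_1 \wedge \dots \wedge \eta_n$ no such commutation is available, since distinct conjuncts might demand incompatible witnesses; this is why the conjunctive case requires the separate treatment of Lemma \ref{final}.
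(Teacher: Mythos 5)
Your proof is correct and follows essentially the same route as the paper: both arguments unfold the semantics of $\until_{I,\eta}$, observe that for a fixed witness position $j$ the disjunctive threshold formula holds iff some disjunct $\eta_m$ holds against the counts determined by $(i,j)$, and then commute the existential over $j$ with the disjunction (the paper phrases this informally as ``there is at least one formula $\varphi_i$ such that\dots'' and notes the converse is similar). Your added remark that the shared witness is exactly what licenses the commutation, and fails for conjunctions, matches the paper's reason for handling the conjunctive case separately in Lemma \ref{final}.
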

 \begin{proof}
 Let $\eta_i=\#{\varphi_i}\sim m_i$ for $1 \leq i \leq n$.
 Given a timed word $\rho$ and a point $i \in dom(\rho)$,
 $\rho, i \models \varphi \until_{I, \eta} \psi$ iff there is a point $j >i$ such that
 $\rho, j \models \psi$,  and $\varphi$ evaluates to true at all the in between points 
 $i < k < j$, and there is atleast one formula $\varphi_i$ such that the number of points between $i$ and $j$ 
 where $\varphi_i$ evaluates to true is $\sim m_i$. 
 Hence we obtain $\rho, i \models \varphi \until_{I, \eta_r} \psi$, for some $1 \leq r \leq n$.
  The converse is similar.
 \end{proof}                                                         
    A threshold formula $\eta$ 
 is called atomic iff all the threshold formulae
 $\eta_j$ occurring in $\eta$ 
 cannot be written as
the conjunction or disjunction of two threshold formulae. 
Thus, the threshold formula  $\eta=\#(a \until_{J, \#b=5 \wedge \#c < 3}) \geq 5$  is not atomic, since 
it involves a conjunction of two threshold formulae. 

It can be easily seen that every threshold formula $\eta$ is equivalent to some threshold formula 
 $\eta_1$ in disjunctive normal form. 
 The formula $\eta_1$ in DNF 
    is  obtained by recursively replacing 
 all the threshold formulae occurring in  $\eta$ in DNF.

   For instance, $\eta=[\#b=5] \wedge [\#(a \until_I b) < 7] \vee
[\#(a \until_{\{J, \#d<12 \wedge \#e<6\}}c) \geq 2]$ can be expressed  as
  [$\#b=5 \wedge \#(a \until_I b) < 7] \vee 
 [\#[(a \until_{\{J, \#d <12\}} c)  \wedge (a \until_{\{J, \#e < 6\}}c)] \geq 2].$\footnote{see the proof in the next paragraph,  Lemma \ref{final}} 
    Without loss of generality, we assume henceforth that 
 every $\mathsf{UT}$ modality   $\varphi \until_{I, \eta} \psi$ we encounter in $\mathsf{CTMTL}$ formulae
 has $\eta$ in DNF.  The following two lemmas on monotonicity 
 of counting with respect to time are easy to follow.

\begin{lemma}
\label{mon1}
Let $\delta = \#{\delta_1} <  n_1 \wedge \dots \#{\delta_m} < n_m$ be a 
threshold formula. Let $\rho$ be a timed word  and let $x < y$ be two points
in $dom(\rho)$.  Assume that  
$|\rho[x,y](\delta_i)| < n_i$ for all $1 \leq i \leq m$. 
Then  for any $y' \in dom(\rho)$, with $x<y'<y$, we have 
$|\rho[x,y'](\delta_i)| < n_i$ for all $1 \leq i \leq m$.
 \end{lemma}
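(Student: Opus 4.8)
The plan is to reduce the claim to a routine monotonicity observation on cardinalities of position sets, handled one conjunct at a time. First I would fix an arbitrary index $i$ with $1 \leq i \leq m$ and unfold the definitions: by definition $\rho[x,y'](\delta_i) = \{k \in dom(\rho) \mid x < k < y' \wedge \rho, k \models \delta_i\}$, and likewise $\rho[x,y](\delta_i) = \{k \in dom(\rho) \mid x < k < y \wedge \rho, k \models \delta_i\}$.

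The key step is to observe that, since $x < y' < y$, every position $k$ satisfying $x < k < y'$ also satisfies $x < k < y$. Hence the defining condition for membership in $\rho[x,y'](\delta_i)$ entails the one for $\rho[x,y](\delta_i)$, which gives the inclusion $\rho[x,y'](\delta_i) \subseteq \rho[x,y](\delta_i)$. Monotonicity of cardinality under set inclusion then yields $|\rho[x,y'](\delta_i)| \leq |\rho[x,y](\delta_i)|$, and combining this with the hypothesis $|\rho[x,y](\delta_i)| < n_i$ I obtain $|\rho[x,y'](\delta_i)| < n_i$. Since $i$ was an arbitrary element of $\{1,\dots,m\}$, the bound holds for every conjunct, which is exactly the conclusion.

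There is no genuine obstacle here: the statement is purely a monotonicity property of counting over nested intervals, and the argument uses nothing about the internal structure of the subformulae $\delta_i$ or about the time stamps beyond the positional ordering $x < y' < y$. The only point requiring a moment's care is that both endpoints are \emph{strict}, so shrinking the right endpoint from $y$ to $y'$ can only delete positions (namely those $k$ with $y' \leq k < y$) and can never introduce new ones; the inclusion above records precisely this, and the remaining inference is immediate.
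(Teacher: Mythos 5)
Your proof is correct, and it is exactly the intended argument: the paper gives no explicit proof of this lemma, remarking only that it is ``easy to follow,'' and the set-inclusion plus cardinality-monotonicity reasoning you spell out is the routine justification being assumed. Nothing is missing.
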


\begin{lemma}
\label{mon2}
Let $\delta = \#{\delta_1} \geq  n_1 \wedge \dots \#{\delta_m} \geq  n_m$ be a 
threshold formula. Let $\rho$ be a timed word  and let $x < y$ be two points
in $dom(\rho)$. 
Assume that  
$|\rho[x,y](\delta_i)| \geq  n_i$ for all $1 \leq i \leq m$. 
Then  for any $y' \in dom(\rho)$, with $y'>y$, we have 
$|\rho[x,y'](\delta_i)| \geq n_i$ for all $1 \leq i \leq m$.
 \end{lemma}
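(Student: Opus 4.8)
The plan is to peel off the conjunction and reduce the whole statement to a one-line monotonicity property of the counting sets $\rho[x,\cdot](\delta_i)$. Because the hypothesis and the conclusion both assert a bound $(\geq n_i)$ for every conjunct separately, I would fix an arbitrary index $i$ with $1 \leq i \leq m$ and show that replacing the right endpoint $y$ by a larger point $y'$ cannot decrease the number of witnesses of $\delta_i$. Once this is established for a single $i$, the full claim follows at once by ranging over all conjuncts.

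The key step is the set inclusion $\rho[x,y](\delta_i) \subseteq \rho[x,y'](\delta_i)$. Recall that $\rho[x,y](\delta_i) = \{k \in dom(\rho) \mid x < k < y \wedge \rho, k \models \delta_i\}$. If $k$ belongs to this set, then $x < k < y$; since $y' > y$ we obtain $k < y < y'$, hence $x < k < y'$, while the satisfaction $\rho, k \models \delta_i$ depends only on the position $k$ and not on the chosen window. Therefore $k \in \rho[x,y'](\delta_i)$, which proves the inclusion. Passing to cardinalities gives $|\rho[x,y'](\delta_i)| \geq |\rho[x,y](\delta_i)| \geq n_i$, and as $i$ was arbitrary this yields the conclusion for all $1 \leq i \leq m$.

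There is no genuine obstacle here: the entire content is that the witness set $\rho[x,\cdot](\delta_i)$ grows monotonically as its right endpoint moves rightward, a direct consequence of the purely positional (rather than interval-relative) definition of $\rho[x,y](\cdot)$. The only subtlety worth flagging is that the argument relies essentially on the constraints being of the ``$\geq$'' form, which is why one enlarges the window; this is exactly dual to the companion Lemma \ref{mon1}, where the ``$<$'' constraints force the window to shrink instead.
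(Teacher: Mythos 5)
Your proof is correct and matches the paper's intent: the paper states this lemma without proof, calling it ``easy to follow,'' and the intended justification is exactly the set inclusion $\rho[x,y](\delta_i) \subseteq \rho[x,y'](\delta_i)$ you spell out, followed by monotonicity of cardinality. Nothing is missing.
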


                     
\begin{lemma}
\label{final}
Let $\varphi \until_{I, \eta} \psi \in \mathsf{CTMTL}$.  
Let $\eta=\alpha \wedge \beta$,
where $\alpha =\#{\alpha_1} \geq  n_1 \wedge \dots \#{\alpha_m} \geq n_m$ and 
$\beta= \#{\beta_1} <  k_1 \wedge \dots \#{\beta_p} < k_p$.
Then  $\varphi \until_{I, \eta} \psi$ is equivalent to 
$\varphi \until_{[0, \infty), \eta} \psi \wedge \bigwedge_{i=1}^m\varphi \until_{I, \#{\alpha_i} \geq n_i} \psi
\wedge \bigwedge_{i=1}^p\varphi \until_{I, \#{\beta_i} < k_i} \psi$.
\end{lemma}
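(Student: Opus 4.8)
The plan is to prove the two inclusions of the stated equivalence separately, with the forward direction being immediate and the backward direction carrying all the content. For the forward direction I would observe that if $\rho, i \models \varphi \until_{I,\eta}\psi$ via a single witness $j$ (so $t_j - t_i \in I$, $\psi$ holds at $j$, $\varphi$ holds continuously between $i$ and $j$, and both $\alpha$ and $\beta$ are counted between $i$ and $j$), then this very same $j$ witnesses every conjunct on the right: it lies in $I \subseteq [0,\infty)$ and satisfies the full $\eta$, and it individually satisfies each $\#\alpha_\ell \geq n_\ell$ and each $\#\beta_\ell < k_\ell$. Hence the right-hand side holds.

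For the backward direction I would exploit the two monotonicity lemmas (Lemmas \ref{mon1} and \ref{mon2}): fixing the left endpoint $i$, the $\alpha$-part $\bigwedge_\ell \#\alpha_\ell \geq n_\ell$ holds on an upward-closed set of right endpoints, while the $\beta$-part $\bigwedge_\ell \#\beta_\ell < k_\ell$ holds on a downward-closed (prefix) set of right endpoints. Assuming the right-hand side, let $a_1,\dots,a_m$ and $b_1,\dots,b_p$ be the witnesses supplied by the conjuncts $\varphi \until_{I,\#\alpha_\ell \geq n_\ell}\psi$ and $\varphi \until_{I,\#\beta_\ell < k_\ell}\psi$, and set $M = \max_\ell a_\ell$ and $m = \min_\ell b_\ell$. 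Each of $M, m$ is a genuine until-witness lying in the window $t_i + I$ (it equals some $a_\ell$ or $b_\ell$, so $\psi$ holds there and $\varphi$ holds continuously up to it); by Lemma \ref{mon2} all of $\alpha$ holds at $M$, and by Lemma \ref{mon1} all of $\beta$ holds at $m$.

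I would then split on the order of $m$ and $M$. If $m \geq M$, then $m$ already works: $\beta$ holds at $m$ by construction, and $\alpha$ holds at $m$ by upward monotonicity from $M$, so $m$ is a single witness in $I$. The interesting case is $m < M$; here I would first dispose of the easy sub-cases (if $\alpha$ happens to hold at $m$, then $m$ is a witness; if $\beta$ happens to hold at $M$, then $M$ is a witness). In the remaining sub-case, $\alpha$ fails at $m$ and $\beta$ fails at $M$, which forces $m$ to lie strictly before the first position $M_0$ at which $\alpha$ becomes true and $M$ to lie strictly after the last position $m_1$ at which $\beta$ is still true. Now I bring in the global witness $j^*$ coming from the conjunct $\varphi \until_{[0,\infty),\eta}\psi$: since $j^*$ satisfies all of $\alpha$ and all of $\beta$ at once, we get $M_0 \leq j^* \leq m_1$, hence $m < j^* < M$ as positions. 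Because $m$ and $M$ both lie in the convex time window $t_i + I$ and times are monotone in position, $t_{j^*} - t_i$ lies between $t_m - t_i$ and $t_M - t_i$ and therefore in $I$; thus $j^*$ is a single witness inside the $I$-window with $\psi$, continuous $\varphi$, and the full $\eta$, establishing the left-hand side.

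The main obstacle is precisely this last sub-case: neither the ``latest $\alpha$-forcing'' witness $M$ nor the ``earliest $\beta$-preserving'' witness $m$ need individually satisfy both halves of $\eta$, and their relative order can be adverse. The purpose of the extra conjunct $\varphi \until_{[0,\infty),\eta}\psi$ is exactly to guarantee the existence of a position where $\alpha$ and $\beta$ hold simultaneously; the technical crux is then to squeeze that position strictly between $m$ and $M$ and invoke \emph{convexity of $I$} to conclude it falls inside the timing window --- a step that would fail if $I$ were not an interval. I would also dispatch the degenerate cases (no $\alpha$-conjuncts or no $\beta$-conjuncts), where $M$ alone or $m$ alone suffices by monotonicity and the global conjunct is not required.
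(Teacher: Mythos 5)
Your proposal is correct and follows essentially the same route as the paper: forward direction by a single shared witness, backward direction via the monotonicity lemmas (Lemmas \ref{mon1} and \ref{mon2}) applied to the extremal witnesses, with the $[0,\infty)$-conjunct's witness sandwiched strictly between the earliest $\beta$-witness and the latest $\alpha$-witness and convexity of $I$ used to place it in the timing window. The only difference is organizational: the paper first reduces to the three aggregated conjuncts $\varphi \until_{[0,\infty),\eta}\psi \wedge \varphi\until_{I,\alpha}\psi \wedge \varphi\until_{I,\beta}\psi$ and then decomposes $\alpha$ and $\beta$ separately, whereas you take the max/min over the individual witnesses in one pass.
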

\begin{proof}
Let $\eta=\alpha \wedge \beta$,
where $\alpha$ is the conjunction 
of all the threshold formulae with comparison operator $\geq$ occurring 
in $\eta$ and $\beta$ 
is the conjunction 
of all the threshold formulae with comparison operator $<$ occurring 
in $\eta$.
Let $\alpha=\#_{\alpha_1} \geq  n_1 \wedge \dots \#_{\alpha_m} \geq n_m$ and 
$\beta= \#_{\beta_1} <  k_1 \wedge \dots \#_{\beta_p} < k_p$.
Let $\rho$ be a timed word and let $i \in dom(\rho)$.
$\rho, i \models \varphi \until_{I, \eta} \psi$ iff there is a point $j >i$ with 
$t_j \in t_i+I$, and the number of points in between $i$ and $j$ 
where $\alpha_r$ evaluates to true is $\geq n_r$ 
for $1 \leq r \leq m$ and
 the number 
of points between $i$ and $j$ where $\beta_l$ evaluates to true is $<k_l$ for $1 \leq l \leq p$.

It is easy to see that $\varphi \until_{I, \eta} \psi \rightarrow 
\varphi \until_{[0, \infty),\eta} \psi \wedge \varphi \until_{I, \alpha} \psi
\wedge \varphi \until_{I, \beta} \psi$. 
 
   Conversely, assume that
$\rho, i \models  \varphi \until_{\eta} \psi \wedge \varphi \until_{I, \alpha} \psi
\wedge \varphi \until_{I, \beta} \psi$.
\begin{enumerate}
\item Since $\rho, i \models \varphi \until_{\eta} \psi$,  
there is a point $j_1 > i$ such that 
$\psi$ evaluates to true at $j_1$, $\varphi$ evaluates to true at all points between 
$i$ and $j_1$,
and $|\rho[i,j_1](\alpha_r)| \geq n_r$ for all $1 \leq r \leq m$ and 
$|\rho[i,j_1](\beta_l)| < k_l$ for all $1 \leq l \leq p$. 
\item Since $\rho, i \models  \varphi \until_{I, \alpha} \psi$, 
there is a point $j_2 > i$ such that 
$\psi$ evaluates to true at $j_2$, $\varphi$ evaluates to true at all points between 
$i$ and $j_2$, 
$t_{j_2}-t_i \in I$, 
and $|\rho[i,j_2](\alpha_r)| \geq n_r$ for all $1 \leq r \leq m$.
\item  Since $\rho, i \models  \varphi \until_{I, \beta} \psi$, there is a point $j_3>i$ such that
 $\psi$ evaluates to true at $j_3$, $\varphi$ evaluates to true at all points between 
$i$ and $j_3$, 
$t_{j_3}-t_i \in I$, 
and 
$|\rho[i,j_3](\beta_l)| < k_l$ for all $1 \leq l \leq p$.
\end{enumerate}
Assume $j_2 \leq j_3$. 
We will check whether $j_2 > i$ is the point 
which satisfies all the conditions required 
with respect to $I, \alpha$ and $\beta$. 
Since the number of points between $i$ and $j_3$ 
where $\beta_l$ evaluates to true is $<k_l$ 
for all $1 \leq l \leq p$, and $j_2 \leq j_3$, 
by monotonicity of time (Lemma \ref{mon1}),   
the
number of points between $i$ and $j_2$ where $\beta_l$ evaluates to true is $<k_l$
for all $1 \leq l \leq p$. Also, we know that  
the number of points between 
$i$ and $j_2$ 
where $\alpha_r$ evaluates to true is $\geq n_r$ for all 
$1 \leq r \leq m$. 
  Then indeed we have $\rho, i \models 
\varphi \until_{I, \eta} \psi$. 


Consider the case when  $j_3 < j_2$.  
Assume that there is some $\beta_l$ such that 
$|\rho[i,j_2](\beta_l)| \geq k_l$.  Since we know that $j_1 > i$ is a point
such that $|\rho[i,j_1](\beta_l)| < k_l$ for all $1 \leq l \leq p$, 
it must be that $j_1 < j_2$.
If there is some $\alpha_h$ such that 
$|\rho[i,j_3](\alpha_h)| < n_h$, then again by monotonicity of time (Lemma \ref{mon2}), we know that $j_1 \geq j_3$. So we have 
$j_3 \leq j_1 < j_2$. Hence, $t_{j_1} \in t_i+I$ since 
$t_{j_3}-t_i \in I$ and
$t_{j_2}-t_i \in I$. Thus, we have a point $j_1 > i$ such that 
$t_{j_1}-t_i \in I$, satisfying all the conditions. 
Hence, $\rho,i \models \varphi \until_{I, \eta} \psi$.

Now we show that $\rho,i \models \varphi \until_{I, \alpha} \psi$ iff 
$\rho, i \models \bigwedge_{i=1}^m\varphi \until_{I, \#_{\alpha_i} \geq n_i} \psi$.
The equivalence of $\varphi \until_{I, \beta} \psi$ and 
$\bigwedge_{i=1}^p\varphi \until_{I, \#_{\beta_i} < k_i} \psi$
is similar.

Assume $\rho, i \models \varphi \until_{I, \alpha} \psi$. 
Then it is easy to see that $\rho, i \models \bigwedge_{i=1}^m\varphi \until_{I, \#_{\alpha_i} \geq n_i} \psi$. Conversely, assume that
$\rho, i \models \bigwedge_{i=1}^m\varphi \until_{I, \#_{\alpha_i} \geq n_i} \psi$. 
Then there are points $j_1, \dots, j_m > i$ such that 
$t_{j_i}-t_i \in I$, $\psi$ evaluates to true at $j_i$, 
$\varphi$ evaluates to true at all points between $i$ and $j_i$, and
$|\rho[i,j_i](\alpha_i)| \geq n_i$ for all $1 \leq i \leq m$.
Let $j_k$ be the point among $j_1, \dots, j_m$ that is farthest from $i$. Then 
clearly,  by monotonicity of time (Lemma \ref{mon2}), 
$|\rho[i,j_k](\alpha_i)| \geq n_i$ for all $1 \leq i \leq m$. Hence, 
$j_k >i$ is the point which satisfies all the conditions 
required of  $\varphi \until_{I, \alpha} \psi$, and hence,
$\rho,i \models \varphi \until_{I, \alpha} \psi$.
\end{proof}                    

\section{Recalling $\mathsf{MTL}$ games from \cite{concur11}}
\label{app:mtlgame}

An $r$-round $I_{\nu}$ $\mathsf{MTL}$ game is played between two players ($\spl$ and  $\dpl$) on  a pair of timed words $(\rho_1,\rho_2)$, where $I_{\nu}$ is the set of intervals allowed in the game.  A configuration of the game is a pair of points $i_p, j_p$ where $i_p \in dom(\rho_1)$ and $j_p \in dom(\rho_2)$. A configuration is called partially isomorphic, denoted $isop(i_p, j_p)$ iff $\sigma_{i_p} = \sigma_{j_p}$.
The starting configuration is $(i_1, j_1)$.    Either $\spl$ or $\dpl$ eventually wins the game. 
 A 0-round game is won by the $\dpl$ iff $isop(i_1,j_1)$. 
	 The $r$ round game is played by first playing one round from the starting position. Either the $\spl$ wins the round, and the game is terminated or the $\dpl$ wins the round, and now the second round is played from this new configuration and so on. The $\dpl$ wins the game only if he wins all the rounds. 
	 The following are the rules of the game in any round. Assume that the configuration  at the start of the $p$th round is $(i_p, j_p)$.
	\begin{itemize}
		\item If $isop(i_p,j_p)$ is not true, then $\spl$ wins the game, and the game is terminated.  Otherwise, the game continues as follows: 
		\item The $\spl$ chooses one of the words by choosing $\rho_x,  x \in \{1,2\}$. 
		$\dpl$ has to play on the other word $\rho_y$, $x \neq y$.  Then $\spl$ chooses  the  $\until_I$  move, along with the interval $I \in I_{\nu}$ (such that the end points of the intervals are non-negative integers). Given the current configuration as $(i_p,j_p)$, the rest of the $\until_I$ round is played as follows:
		\begin{itemize}
			\item $\spl$ chooses a position $i'_p \in dom(\rho_x)$ such that $i_p < i'_p$  and $(t_{i'_p} - t_{i_p}) \in I$.
			\item The $\dpl$ responds to the $\until_I$ move by choosing $j'_p\in dom(\rho_y)$ in the other word such that $j_p < j'_p$ 
			and $(t_{j'_p} - t_{j_p}) \in I$.   
			If the $\dpl$ cannot find such a position, the
			$\spl$ wins the round and the game. Otherwise, the game continues and $\spl$ chooses one of the following options.
			\item $\fut$ Part: The round ends with the configuration $(i'_p, j'_p)$.
			\item $\until$ Part: $\spl$ chooses a position $j''_p$ in $\rho_y$ such that $j_p < j''_p < j'_p$. The $\dpl$
			responds by choosing  a position $i''_p$ in $\rho_x$ such that $i_p < i''_p < i'_p$. The round ends with the configuration $(i''_p , j''_p)$.  
			If the $\dpl$ cannot choose an $i''_p$ , the game ends and the $\spl$ wins.
		\end{itemize} 	
\end{itemize}
\begin{itemize}
	\item \textbf{Game equivalence:} $(\rho_1, i_1) \approx_{r,I_{\nu}}(\rho_2, j_1)$ iff for every $r$-round 
	$\mathsf{MTL}$ game over the
	words $\rho_1, \rho_2$ starting from the configuration $(i_1, j_1)$, the $\dpl$ always has a winning strategy.
	\item \textbf{Formula equivalence:} $(\rho_1, i_1) \equiv^{\mathsf{MTL}}_{r,I_{\nu}} (\rho_2, j_1)$ iff for every $\mathsf{MTL}$ formula $\phi$ of modal depth $\le r, \rho_1,i_1\models \phi \iff \rho_2,j_1\models \phi$
\end{itemize}
\begin{theorem}
	\label{concur11}
	$(\rho_1,i_1) \approx_{r,I_{\nu}} (\rho_2, j_1)\ iff\ (\rho_1, i_1) \equiv^{\mathsf{MTL}}_{r,I_{\nu}} (\rho_2, j_1)$ \cite{concur11}.
\end{theorem}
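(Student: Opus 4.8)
This is the standard Ehrenfeucht--Fra\"{\i}ss\'{e} correspondence for $\mathsf{MTL}$, and I would prove the two directions simultaneously by induction on the number of rounds $r$, which coincides with the modal depth. For the base case $r=0$, both sides reduce to $\sigma_{i_1}=\sigma_{j_1}$: a $0$-round game is won by $\dpl$ exactly when $isop(i_1,j_1)$ holds, and the $\mathsf{MTL}$ formulas of depth $0$ are Boolean combinations of propositions, so $(\rho_1,i_1)\equiv^{\mathsf{MTL}}_{0,I_\nu}(\rho_2,j_1)$ again amounts to the two points carrying the same label. The inductive hypothesis is the full biconditional at depth $r-1$, and in each direction of the step I use it to convert information about $(r-1)$-round subgames into facts about depth-$(r-1)$ formulas, or conversely.

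\textbf{Game $\Rightarrow$ formula.} Assuming $\dpl$ wins every $r$-round game, I would prove $\rho_1,i_1\models\phi \iff \rho_2,j_1\models\phi$ for each $\phi$ of depth $\le r$ by structural induction on $\phi$; the atomic, $\neg$ and $\wedge$ cases are immediate, so the only real case is $\phi=\psi_1\until_I\psi_2$ with $\psi_1,\psi_2$ of depth $\le r-1$. Suppose $\rho_1,i_1\models\phi$ with witness $i'$ and all of $(i_1,i')$ satisfying $\psi_1$. Let $\spl$ choose $\rho_1$, the interval $I$, and the position $i'_1:=i'$ (legal since $t_{i'}-t_{i_1}\in I$); $\dpl$'s winning strategy fixes a response $j'_1$ with $t_{j'_1}-t_{j_1}\in I$ \emph{before} $\spl$ selects the $\fut$ or $\until$ part, and the crux is that this single $j'_1$ must work for both. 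Via the $\fut$ part $\dpl$ wins the $(r-1)$-round game at $(i',j'_1)$, so by the inductive hypothesis $(\rho_1,i')\equiv^{\mathsf{MTL}}_{r-1}(\rho_2,j'_1)$ and hence $\rho_2,j'_1\models\psi_2$; via the $\until$ part, for every $j''\in(j_1,j'_1)$ picked by $\spl$, $\dpl$ answers with some $i''\in(i_1,i')$ and wins from $(i'',j'')$, so $(\rho_1,i'')\equiv^{\mathsf{MTL}}_{r-1}(\rho_2,j'')$ and thus $\rho_2,j''\models\psi_1$. Hence $j'_1$ witnesses $\rho_2,j_1\models\psi_1\until_I\psi_2$. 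The reverse implication is symmetric, with $\spl$ choosing $\rho_2$.

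\textbf{Formula $\Rightarrow$ game.} Assuming $(\rho_1,i_1)\equiv^{\mathsf{MTL}}_{r,I_\nu}(\rho_2,j_1)$, I would exhibit one round of $\dpl$'s strategy and reduce to depth $r-1$. Say $\spl$ picks $\rho_1$, an interval $I$, and a point $i'>i_1$ with $t_{i'}-t_{i_1}\in I$; $\dpl$ must supply $j'>j_1$ with $t_{j'}-t_{j_1}\in I$ such that (a) $(\rho_1,i')\equiv^{\mathsf{MTL}}_{r-1}(\rho_2,j')$ and (b) every $j''\in(j_1,j')$ admits some $i''\in(i_1,i')$ with $(\rho_1,i'')\equiv^{\mathsf{MTL}}_{r-1}(\rho_2,j'')$, after which the inductive hypothesis hands $\dpl$ the remaining $(r-1)$ rounds whichever part $\spl$ chooses. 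To produce such a $j'$, I use depth-$(r-1)$ \emph{characteristic formulas}: set $\psi_2:=\chi^{r-1}_{(\rho_1,i')}$ and $\psi_1:=\bigvee_{i_1<k<i'}\chi^{r-1}_{(\rho_1,k)}$, the latter a finite disjunction since $\rho_1$ is a finite word. Then $\rho_1,i_1\models\psi_1\until_I\psi_2$, a formula of depth $\le r$, so by $\equiv^{\mathsf{MTL}}_{r,I_\nu}$ there is a witness $j'$ on $\rho_2$, whose defining properties give exactly (a) and (b).

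\textbf{Main obstacle.} The technical heart is the existence of the characteristic formulas $\chi^{r-1}_{(\rho,i)}$, i.e.\ single $\mathsf{MTL}$ formulas of depth $\le r-1$ whose models are precisely the $\equiv^{\mathsf{MTL}}_{r-1}$-class of $(\rho,i)$. This rests on showing that over the fixed finite alphabet $\Sigma$ and finite interval set $I_\nu$ there are, up to logical equivalence, only finitely many $\mathsf{MTL}$ formulas of depth $\le r-1$, which I would establish by induction on $r$: each depth-$\le(r-1)$ until-formula $\psi_1\until_I\psi_2$ draws $\psi_1,\psi_2$ from the finite (by the inductive hypothesis) stock of depth-$\le(r-2)$ formulas and $I$ from the finite $I_\nu$, and an arbitrary depth-$\le(r-1)$ formula is a Boolean combination of these finitely many generators, hence there are finitely many such formulas up to equivalence. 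Finiteness of the induced $(r-1)$-types then makes every type definable by a characteristic formula, completing the argument.
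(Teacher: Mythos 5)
Your proof is correct and follows essentially the same route as the paper: the paper itself only cites this theorem from \cite{concur11}, but its proof of the generalized Theorem~\ref{ctmtl-game} in Appendix~\ref{app:tmtlgame} uses exactly your induction on rounds, the same analysis of the $\fut$/$\until$ parts for the game-to-formula direction, and the same characteristic-formula machinery (the formulas $Q_x$ and $P=\bigvee_{y} Q_y$, whose existence rests on there being only finitely many depth-bounded formulas up to equivalence over a finite $I_\nu$) for the converse, which the paper merely phrases contrapositively by building a distinguishing formula $P \until_I Q_{i_1'}$. I see no gaps.
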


\section{Proof of Theorem \ref{ctmtl-game}}
\label{app:tmtlgame}
We prove the result for theorem \ref{ctmtl-game} in this section using structural induction on the number $r$ of rounds. 
We first observe that in the base case $r=0$, the theorem holds: 
If $(\rho_1, i_1) \approx_{0,k,I_{\nu}} (\rho_2, j_1)$, $\dpl$ wins the zero round game.
This is possible iff $isop(i_1,j_1)$. It is then clear that both words satisfy the same formulae of depth 0. The converse is similar.

  Assume the theorem holds for $r=n$ rounds. We will prove the theorem for $n+1$ rounds.

\begin{enumerate}
\item  Assume $(\rho_1, i_1) \approx_{n+1,k,I_{\nu}} (\rho_2, j_1)$. Let us 
consider  $\varphi = \psi \until_{I,\#_\delta \ge w} \phi$.
Assume further that  $\rho_1,i_1\models \varphi$. We need to prove that $\rho_2, j_1 \models\varphi$. 

 \begin{enumerate}
 \item Let us first consider the case when $\spl$ initiates a $\fut_I$ 
 move on $\rho_1$.  Let  $i_1'$ be the point chosen.
 $\dpl$ has to mimic the move by choosing a point $j'_1$.
If $\spl$ ends the round at this point, then by assumption, we know that $\dpl$ 
wins from $(i_2, j_2)=(i'_1, j'_1)$ in an $n$ round game. 
   By induction hypothesis,  we know that  $i_2$ and $j_2$ satisfy the same set of formulae with depth $\le n$. 
   Thus $\rho_2, j_2 \models \phi$. 
\item  Now consider the case that $\spl$ plays a full until round 
 from $(i_1, j_1)$.  Then he chooses a point  $j''_1$ ($j_1< j''_1<j'_1$) in $\dpl$'s word.
  By assumption, duplicator will be able to choose a point $i_1''$ ($i_1< i_1'' <i_1'$) such that he wins 
  the game from $(i_2, j_2)=(i''_1,j''_1)$ in the next $n$ rounds.  By induction hypothesis, all  points 
  between $i_1, i'_1$ as well as between $j_1, j'_1$ satisfy the same set of  formulae of  depth $\leq n$.
   We  know that the depth  of $\psi$ is $\le n$, and 
   all points between $i_1$ and $i'_1$ satisfy $\psi$. 
    Thus all the points strictly between $j_1$ and $j_1'$ satisfy $\psi$.
  Hence, $\rho_2, j_2 \models \psi$.  
  
  \item The  third choice of the $\spl$ is to invoke the $\mathsf{UT}$ move. 
   $\spl$ keeps $w<k$ pebbles  between $i_1$ and $i'_1$. 
   Let $I_1$  be the set of pebbled points in $\spl$'s word. 
   In response,    $\dpl$ also keeps $w$ pebbles in his word between $j_1$ and $j'_1$.
   Let $I_2$ be the set of pebbled positions in $\dpl$'s word. 
   For any choice of a pebbled point $j_2 \in I_2$, $\dpl$ picks some point   
   $i_2 \in I_1$. By assumption, $\dpl$ wins an $n$ round game from 
   this configuration. Hence, all the pebbled positions 
   in both words satisfy the same set of formulae of depth $\leq n$, and 
   in particular $\delta$. Hence, $\rho_2, j_2 \models \delta$.  
\end{enumerate}

Thus, by semantics, points (a), (b) and (c) above give us $\rho_2,j_1 \models  \psi \until_{\{I,\#_\delta \ge w\}} \phi$. 
\item 
We now consider the case 
when the outer most connective is a $\mathsf{C}$  modality. Let $\varphi = \cnt^{\ge w}_{\langle l,u\rangle} \delta$. 
Assume  that  $\rho_1,i_1\models \varphi$. We need to prove that $\rho_2, j_1 \models\varphi$. 
  $\spl$ selects  $w\le k$ points with timestamps in $\langle t_{i_1}+l,t_{i_1}+u \rangle$ that satisfies $\delta$ and keep his pebbles. Let $I_1$ be the set of points pebbled by $\spl$. 
 In response, the $\dpl$ chooses $w$ points with timestamps $\langle t_{j_1}+l,t_{j_1}+u \rangle$. 
 Let $I_2$ be the set of points pebbled by $\dpl$. 
 $\spl$ chooses a point from $I_2$; the duplicator responds with a point in $I_1$. 
 By assumption, for any point  $e_2 \in I_2$ chosen by $\spl$, the $\dpl$ can pick a point in $e_1\in  I_1$ such that 
 from $(e_1, e_2)$, $\dpl$ 
    wins in the next $n$ rounds. By induction hypothesis, $\forall e_2 \in I_2, \exists e_1 \in I_1, (\rho_1,e_1)  \equiv_{n,k} (\rho_2, e_2)$. Note that all the points in $I_1$ satisfy $\delta$. Since  $\delta$ has depth $n$, all the points in $I_2$ also satisfy $\delta$. Thus $\rho_2,j_1$ also satisfies $\cnt^{\ge w}_{\langle l,u\rangle}  \delta$. Hence,  $\rho_2, j_1\models \varphi$. 
\end{enumerate}

 We will now prove the contrapositive. 
If $(\rho_1, i_1) \not\approx_{0,k,I_{\nu}} (\rho_2, j_1)$, then $\neg isop(i_1, j_1)$. 
Then clearly, there is a depth 0 formula that distinguishes $\rho_1, \rho_2$. 
Let us assume the result for  $r=n$
and let  
$(\rho_1, i_1) \not\approx_{n+1,k,I_{\nu}} (\rho_2, j_1)$. 
We construct a  formula of depth $n+1$ that separates $\rho_1, i_1$ and $\rho_2, j_1$.
 Given $\rho_1, \rho_2$ of finite length say $n, m$ respectively, the choice 
  of intervals $I_{\nu}=\{\langle i, j \rangle \mid 0 \leq i \leq max(n,m), 0 \leq j \leq max(n,m)$ or $j=\infty$ and $i \leq j\}$.
  \begin{enumerate}
 \item 
 Assume without loss of generality that $\spl$ chooses $\rho_1,i_1$  to start with and plays 
 a $\fut_I$ by choosing $i'_1>i_1$.  $\dpl$ chooses a point $j'_1>j_1$ in $\rho_2$. 
 If $\spl$ wins from $(i'_1, j'_1)=(i_2,j_2)$,  
  then by induction hypothesis, there is an $n$ depth formula which evaluates to true at $\rho_1,i'_1$ 
  but not at  $\rho_2,j'_1$. Let $Q_x$ be the conjunction of  all depth $n$ formulae that evaluate to true at $\rho_1,x$. 
  For a given $n, k$ and permitted intervals $I_{\nu}$, this conjunction is bounded and finite :
        Thus, if $\spl$ wins after the $\fut_I$ round,  the formula $\fut_I(Q_{j'_1})$ of depth $n+1$ distinguishes the words. 
  
  \item   Suppose that $\spl$ has to play an $\until_I$ round to win the $n+1$ round game. 
  Then $\spl$ picks a point $j''_1$ between $j_1$ and $j'_1$ in $\rho_2$. 
  For any point $i''_1$ between $i_1$ and $i'_1$ picked by the $\dpl$,   the $n$ round game is won by $\spl$. 
   Thus, there exists a point $j_1 < x < j'_1$ 
  and some formula of depth $n$ which distinguishes  
    $x$ from all the points between $i_1$ and $i'_1$.
      Consider the formula  $P = \bigvee \limits_{y\in \{i_1,\ldots i'_1\}} Q_y$. 
 The size of $P$ is bounded since the size of each $Q_y$ is bounded, 
and the number of disjuncts is finite.  
     Hence,  there is a point $j_1''$ ($j_1 <j''_1<j'_1$) such that $\rho_2,j''_1 \nvDash P$. Thus the formula distinguishing $\rho_1, \rho_2$ is  $P \until_I Q_{i_1'}$. 
    
  \item   Suppose that $\spl$  has to play the $\mathsf{UT}$ round  to win the game.
  Assume $\spl$ chose the word  $\rho_1$ 
  and places his $w\leq k$ pebbles at a set of points $I_1$ between $i_1$ and $i'_1$. 
  In response, $\dpl$ keeps his $w\leq k$ pebbles at a set of points $I_2$ 
  between $j_1$ and $j'_1$. 
   $\spl$ picks a point $j''_1\in I_2$, to which $\dpl$  replies by picking  $i''_1\in I_1$. 
  Since $\spl$ wins by assumption, there is a formula of  depth $\leq n$ that distinguishes 
   $j''_1$ from all the points in $I_1$.  
    Now consider the formula $P_{I_1} = \bigvee \limits_{i \in I_1} Q_i$, where
    $Q_i$ be the conjunction of  all depth $n$ formulae that evaluate to true at $\rho_1,i$. 
    For a given $n, k$ and permitted intervals $I_{\nu}$,
    there are a bounded number of $n$ depth formulae; hence the number of different formulae $P_{I_1}$ is bounded. 
      Since $\spl$ wins the game in the next $n$ rounds, $P_{I_1}$
  is true for at least $w$ number of times between $i_1$ and $i'_1$ since 
  it evaluates to true at all points between $i_1$ and $i'_1$. However,  
the number of times   $P_{I_1}$ evaluates to true between 
  $j_1$ and $j'_1$ is $<w$, since it does not evaluate to true at $j''_1$.   
    Hence, $\rho_2,j_1\nvDash P \until_{I,\# P_{I_1} \ge w}Q_{i_1'}$
    where $P = \bigvee \limits_{y\in \{i_1,\ldots i'_1\}} Q_y$ is as defined above. \\

  Similarly, if $\spl$ had pebbled the points $I_2$ between $j_1$ and $j'_1$ in the counting part, 
  then $\dpl$ pebbles the set $I_1$ of points between $i_1$ and $i'_1$. 
  Then $P_{I_2} = \bigvee \limits_{i \in I_2} Q_i$ evaluates to true atleast $w$ times 
  between $j_1$ and $j'_1$, but there is some point $i''_1 \in I_1$ chosen by $\spl$ 
    where $P_{I_2}$ is false. Then 
       the number of times  
    $P_{I_2}$ evaluates to true is $<w$ between $i_1$ and $i'_1$. In this case, 
    $\rho_2, j_1 \nvDash P \until_{I,\# P_{I_2} < w} Q_{i_1'}$.
    
  \item Suppose now that $\spl$ has to play a $\mathsf{C}$ move
  to win the game.   Assume without loss of generality that $\spl$ chooses to play from $\rho_1$.  Let $Q_x$  be the conjunction of all the $n$ depth  formulae having $k$ as the maximum counting constant in the $\mathsf{C,UT}$ modalities that evaluate to true 
at a point $x$. Given that $n, k$ and the possible intervals $I_{\nu}$ are finite, the number of 
formulae $Q_x$ is bounded. Let us consider the case that $\spl$'s first move is a  $\cnt^{\ge k}_{\langle l,u \rangle}$ move. 
$\spl$ pebbles the set $I_1$ of $k$ points in $\langle t_{i_1} + l, {t_{i_1}}+u \rangle$. In response, 
$\dpl$ pebbles the set $I_2$ of $k$ points in $\langle t_{j_1} + l, {t_{j_1}}+u \rangle$.
  $\spl$ picks a point $e_2 \in I_2$, and $\dpl$ replies by choosing $e_1 \in I_1$.
  By assumption, $\dpl$ loses an $n$ round game from $(e_1, e_2)$.
  Hence, by induction, there is a formula $\varphi$ of depth $n$ 
  which will evaluate to false at $e_2$. 
  Consider the formula $Q = \bigvee \limits_{x \in I_1} Q_x$. 
  $Q$ is a formula of depth $n$ having $k$ as the maximum counting constant in its counting modalities 
  since each $Q_x$ is one such. Clearly, $Q$ evaluates to true at all $k$ points 
  of $I_1$; however, the number of points where $Q$ evaluates to true is $<k$ 
  in $I_2$. Hence, $\rho_1, i_1 \models \cnt^{\geq k}_{\langle l, u \rangle}Q$, while 
  $\rho_2, j_1 \models  \cnt^{<k}_{\langle l, u \rangle}Q$.  The formula $\cnt^{\geq k}_{\langle l, u \rangle}Q$
  has depth $n+1$ with max constant $k$ in its counting modalities and distinguishes 
  the two words.

    \end{enumerate}
    Hence, we can show that formula equivalence holds iff $\dpl$ wins in the 
    associated game.

\section{Details of Situation 2 in Proposition \ref{game-details}}
\label{app:situation2}

\flushleft{
\begin{figure}[t]
\begin{tikzpicture}

\foreach \x in {0.25}{
\draw  (\x + 0,0) -- (\x + 8,0);
\draw[dashed] (\x+8,0) -- (\x+10,0);
\draw (\x+10,0) -- (\x + 12,0);
\foreach \y in {0,1,2,3,4}
{
	
\draw (\x+\y*2,-0.25)--(\x+\y*2,0.25);

\node at (\x + \y*2, -0.35) {\tiny \y};

}
\foreach \y in {5,6}
{
	
	\draw (\x+\y*2,-0.25)--(\x+\y*2,0.25);

}
	\node at (\x + 10, -0.35) {\tiny $K$};
		\node at (\x + 12, -0.35) {\tiny $K+1$};



\foreach \y in {0.85,1.1,1.8,3.75,3.9,4.1,4.75,6.7,6.9,7.55,10.5,10.75,11.45}
{\node[fill = magenta,draw = blue,circle,inner sep=1pt,label=below:{}] at (\x+\y,0){};
}
\foreach \y in {1.2,1.25,..., 1.65}
{\node[fill = blue,draw = blue,circle,inner sep=0.5pt,label=below:{}] at (\x+\y,0){};
}

\foreach \y in {4.2,4.25,...,4.65}
	{\node[fill = blue,draw = blue,circle,inner sep=0.25pt,label=below:{}] at (\x+\y,0){};
}
\foreach \y in {7,7.05,...,7.45}
{\node[fill = blue,draw = blue,circle,inner sep=0.25pt,label=below:{}] at (\x+\y,0){};
}
\foreach \y in {10.85,10.9,...,11.25}
{\node[fill = blue,draw = blue,circle,inner sep=0.25pt,label=below:{}] at (\x+\y,0){};
}

\node at (\x + 0.85, -0.25) 
{\tiny {$x_1$} };
\node at (\x + 1.1, 0.15) 
{\tiny {$z_1$} };
\node at (\x + 1.8, -0.25) 
{\tiny {$y_1$} };
\node at (\x + 3.75, -0.25) 
{\tiny {$x_2$} };
\node at (\x + 3.9 , 0.15) 
{\tiny {$z_2$} };

\node at (\x + 4.1, 0.15) 
{\tiny {$e$} };

\node at (\x + 4.75, -0.25) 
{\tiny {$y_2$} };		
\node at (\x + 6.7, -0.25) 
{\tiny {$x_3$} };
\node at (\x + 6.9, 0.15) 
{\tiny {$z_3$} };
\node at (\x + 7.55, -0.25) 
{\tiny {$y_3$} };

%
%
%
%

%
\node at (\x,-0.5){};
			
		\node at (\x + 10.5 , -0.25){\tiny {$x_{K}$} };
		\node at (\x + 10.75, 0.15) 
		{\tiny {$z_{K}$} };
		\node at (\x + 11.5 , -0.25) 
		{\tiny {$y_{K}$} };
}

\end{tikzpicture}
}
\flushleft{
	\begin{tikzpicture}
	\foreach \h in {0.1}
	{
	\foreach \x in {0.25}{
		\draw (\x + 0,0) -- (\x + 8,0);
		\draw[dashed] (\x+8,0) -- (\x+10,0);
		\draw (\x+10,0) -- (\x + 12,0);
		\foreach \y in {0,1,2,3,4}
		{
			
			\draw (\x+\y*2,-0.25)--(\x+\y*2,0.25);
			
			\node at (\x + \y*2, -0.35) {\tiny \y};
			
		}
		\foreach \y in {5,6}
		{
			
			\draw (\x+\y*2,-0.25)--(\x+\y*2,0.25);

		}
		\node at (\x + 10, -0.35) {\tiny $K$};
		\node at (\x + 12, -0.35) {\tiny $K+1$};


			 {0.85,1.1,1.8,3.79,4,4.75,6.7,6.9,7.55,10.5,10.75,11.45}

			\node at (\x + 0.85 -\h, -0.25) 
				{\tiny {$x_1'$} };
			\node at (\x + 1.1 -\h, 0.15) 
			{\tiny {$z_1'$} };
			\node at (\x + 1.8 -\h, -0.25) 
			{\tiny {$y_1'$} };
			\node at (\x + 10.5, -0.25) 
				{\tiny {$x_{K}'$} };
				\node at (\x + 10.75, 0.21) 
				{\tiny {$z_{K}'$} };
				\node at (\x + 11.5 -\h, -0.25) 
				{\tiny {$y_{K}'$} };
			
			\node at (\x + 3.9 -\h, -0.25)  {\tiny {$x_2'$} };
			\node at (\x + 4.3 -\h, 0.21) 
			{\tiny {$z_2'$} };
			\node at (\x + 4.75 -\h, -0.25) 
			{\tiny {$y_2'$} };		
			\node at (\x + 6.7 -\h, -0.25) 
			{\tiny {$x_3'$} };
			\node at (\x + 6.9 -\h, 0.15) 
			{\tiny {$z_3'$} };
			\node at (\x + 7.55 -\h, -0.25) 
			{\tiny {$y_3'$} };	
						
		\foreach \y in {0.85,1.1,1.75,3.95,4.2,4.75,6.7,6.9,7.55,10.5,10.75,11.45}
		{\node[fill = magenta,draw = blue,circle,inner sep=1pt,label=below:{}] at (\x+\y - \h ,0){};
		}
		\foreach \y in {1.2,1.25,..., 1.65}
		{\node[fill = blue,draw = blue,circle,inner sep=0.5pt,label=below:{}] at (\x+\y - \h,0){};
		}
		
		\foreach \y in {4.3,4.35,...,4.65}
		{\node[fill = blue,draw = blue,circle,inner sep=0.25pt,label=below:{}] at (\x+\y-\h,0){};
		}
		\foreach \y in {7,7.05,...,7.45}
		{\node[fill = blue,draw = blue,circle,inner sep=0.25pt,label=below:{}] at (\x+\y-\h,0){};
		}
		\foreach \y in {10.95,11,...,11.45}
		{\node[fill = blue,draw = blue,circle,inner sep=0.25pt,label=below:{}] at (\x+\y-\h-\h,0){};
		}
		%
		%
		%
		%
		
		%
		}
}
	\end{tikzpicture}
\caption{Words showing $\mathsf{CMTL}-\mathsf{TMTL} \neq \emptyset$}
\label{express:fig1} 
\end{figure}
}
\noindent{\bf Situation 2}: 
Starting from $(i_1, j_1)$ with time stamps (0,0), 
if the $\spl$ chooses a $\until_{(0,1)\#_a \sim c}$ move and 
lands up at some point between $x_1$ and $y_1$, $\dpl$ will play copy-cat
and achieve an identical configuration.  
 Consider the case when 
$\spl$ lands up at $y_1$\footnote{The argument when $\spl$ lands up at $x_1$ or a point in between $x_1, y_1$ is exactly the same}. In response,  $\dpl$ moves to   
$y'_1$. From configuration $(i_2, j_2)$ with time stamps $(y_1, y'_1)$, 
consider the case when $\spl$ initiates a $\until_{(1,2)\#_a \sim c}$ and moves 
 to $z_2=1.8+\epsilon <2$. In response, $\dpl$ 
 moves to the point $z'_2=2.1>2$. 
 A pebble is kept at the inbetween 
positions $x_2, x'_2$ respectively. 
If $\spl$ chooses to pick the pebble in 
$\dpl$'s word, then we obtain the configuration $(i_3, j_3)$ with time stamps  
$(x_2, x'_2)$.  If $\spl$ does not get into the counting part/until part, the 
configuration obtained has time stamps  $(z_2, z'_2)$, with the lag 
of one segment ($seg(i_3)=1$, $seg(j_3)=2$). 
\begin{itemize}
\item  Assume we have the  configuration $(i_3,j_3)$ with time stamps $(x_2, x'_2)$. 
We know that  $y_j-x_2, y'_j-x'_2, z'_j-x'_2, z_j-x_2  \in (j-1, j)$  
and $x'_j-x'_2, x_j-x_2 \in (j-2, j-1)$ for $j \geq 3$, and 
$y'_2-x'_2, y_2-x_2, z_2-x_2, z'_2-x'_2 \in (0,1)$. 
   Thus,  from $(x_2, x'_2)$, the possible configuration 
   obtained is $(i'_3, j'_3)=(k_j, k'_j)$ with $k \in \{x,y,z\}$ and $j \geq 3$ or 
   $(z_2, z'_2)$ or $(y_2, y'_2)$. 
  In the case of $(z_2, z'_2)$, there are no inbetween positions for pebbling. 
  In all the other cases, as long as $\spl$ does not keep a pebble 
  on $z_2$, we will either obtain $(i_4, j_4)=  (k_j, k'_j)$ with $k \in \{x,y,z\}$ and $j \geq 3$
or $(y_2, y'_2)$. If $\spl$ keeps just one pebble, and that too on $z_2$, then $\dpl$ 
will keep his only pebble at $z'_2$ obtaining $(i_4, j_4)=(z_2, z'_2)$.  In all the cases other than obtaining 
$(i_4, j_4)$ with time stamps $(z_2, z'_2)$, there is no segment lag. In fact, all these cases give an identical configuration with same time stamps, from where $\dpl$ can easily win. Lets hence look at 
the case of $(z_2, z'_2)$.  
 \item Consider the configuration $(i_3,j_3)$ with time stamps $(z_2, z'_2)$. 
 In this case, there is a lag of one segment.
 \begin{itemize}
 \item[(a)]  If $\spl$ chooses to move from $z'_2$ to 
  $x'_3$, then $\dpl$ can move to $x_3$ from $z_2$, since $x_3-z_2, x'_3-z'_2 \in (1,2)$ and 
  obtain a configuration $(i'_3, j'_3)$ with same time stamps $(x_3, x'_3)$. 
  If $\spl$ does not pebble the points in between, we obtain an identical configuration 
  with time stamps $(x_3, x'_3)$, from where it is easy to see that $\dpl$ wins. If $\spl$ pebbles 
  points between $z_2$ and $x_3$, the interesting situation is when he pebbles only $e$; in this case, $\dpl$'s best choice is to pebble the point (say $e'$) right after $z'_2$ since $e'-e=\kappa$. The configuration with time stamps $(e,e')$ is as good as an  identical configuration.
  
 \item[(b)]   Lets see the case when  $\spl$ moves to $z'_3$ or $y'_3$ from $z'_2$.  
 Then $\dpl$'s best choice is to move to $x_3$ from $z_2$, since he cannot move to 
 $z_3, y_3$(
 $z'_3-z'_2  \in (1,2), y'_3-z'_2 \in 
  (1,2)$, but $z_3-z_2, y_3-z_2  \in (2,3)$).   This gives the configuration $(i'_3, j'_3)$ with time stamps 
  $(x_3, y'_3)$ or $(x_3, z'_3)$, with no lag in the segments of $\rho_1, \rho_2$.
  If $\spl$ pebbles the positions inbetween $z'_2$ and $y'_3$ (or $z'_3$), 
  then $\dpl$ places his pebbles among the bunch of points between $e$ and $y_2$. The resultant 
  configuration is $(i_4, j_4)$ with the following interesting possibilties:
  \begin{enumerate}
 \item[(c)]  $e< t_{i_4} < y_2$   and 
  $z'_3 < t_{j_4} < y'_3$. From 
  $(i_4, j_4)$, if $\spl$ moves to any point $k_j$ (or $k'_{j+1}$)
  for $k \in \{x,y,z\}$ and $j \geq 3$, $\dpl$ can move into $k'_{j+1}$ (or $k_j$) since 
  for any interval $I$, $k_j-t_{i_4} \in I$ iff $k'_{j+1}-t_{j_4} \in I$. This 
  results in future configurations of the kind having 
  time stamps $(k_j, k'_{j+1})$ for $j \geq 3, k \in \{x,y,z\}$, and 
  a segment lag of 1.
    \item[(d)] $e< t_{i_4} < y_2$   and 
  $t_{j_4}=x'_3$. From $(i_4, j_4)$, 
  the  reachable configurations $(i'_4, j'_4)$ are those where 
  $e < t_{i'_4}<y_2$, $t_{j'_4}=z'_3$, when both players move to the next point ($i'_4=i_4+1, j'_4=j_4+1$) 
  or  $e < t_{i'_4}<y_2$, $z'_3 < t_{j'_4}<y'_3$ (case above) or 
  with time stamps $(y_2, y'_3)$, or     
  $(k_j, k'_{j+1})$ for $j \geq 3, k \in \{x,y,z\}$.
  All these  
  result in future configurations of the kind having 
  time stamps $(k_j, k'_{j+1})$ for $j \geq 3, k \in \{x,y,z\}$, and 
  a segment lag of 1. 
  
  \item[(e)] $e<t_{i_4}<y_2$ and 
  $z'_2<t_{j_4}<x'_3$. This is like an identical configuration, and 
  from here, $\dpl$ can stay in the same segment as $\spl$ in all future moves, obtaining 
  almost identical configurations. 
      \end{enumerate}
  \end{itemize}
      \end{itemize}

\section{Proof of Lemma \ref{lem1}}
\label{app:lem1}
\subsection*{$\mathsf{MTL} \subseteq \mathsf{C^{(0,1)}MTL}$}
The containment of $\mtl$ in $\mathsf{C^{(0,1)}MTL}$ is clear since $\mathsf{C^{(0,1)}MTL}$ has all the modalities of 
 $\mathsf{MTL}$. 
 We show strict containment  by considering the formula $\varphi=\cnt^{=2}_{(0,1)}a \in \mathsf{C^{(0,1)}MTL}$.
 We show that for any choice $n$ of rounds, we can find two timed words $\rho_1, \rho_2$ such that 
 $\rho_1 \models \varphi, \rho_2 \nvDash \varphi$, but 
 $\rho_1 \equiv_n^{\mathsf{MTL}} \rho_2$. 
 
 Consider the timed words  $\rho_1=(a,0)(a,0.5)(a,0.6)W$ and $\rho_2=(a,0)(a,0.5)W$ where $W$ is  $(a,1.1)(a,1.1+\delta)(a,1.1+2 \delta) \dots (a,1.1+n \delta)$, where $\delta < < \frac{1}{n}$  
 is some small constant such that $1.1+n \delta < 1.2$.
    Clearly, $\rho_1 \models  \varphi$ and $\rho_2 \nvDash \varphi$.
 Since the words are identical from time 1.1 onwards, the interesting parts of the game are in the interval (0,1).
 
 \begin{proposition}
 \label{app:exp1}
 In any round $p$ of the $\mathsf{MTL}$ game, 
 $\dpl$ can always ensure an identical configuration $(i_p, j_p)$ ($i_p=j_p$)
  or ensure that $|i_p-j_p| \leq 1$.
 If $i_p-j_p=1$ and $i_p \geq 3$, then 
 for all $q > p$, $\dpl$ can ensure that $0\leq i_q-j_q \leq 1$.
Further, the number of positions to the right of any word 
during the $p$th round will be either same, 
or $n+3-p$ and $n+2-p$ respectively for $\rho_1, \rho_2$.
 \end{proposition}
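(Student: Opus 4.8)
The plan is to exhibit an explicit winning strategy for $\dpl$ and verify the stated invariant by induction on the round number $p$, in the spirit of the copy-cat/topological-similarity argument of Proposition~\ref{game-details} but for the plain (counting-free) $\mathsf{MTL}$ game. First I would fix the natural time-stamp correspondence between the two words: positions $1$ and $2$ carry identical time stamps $0$ and $0.5$ in both words; the extra action point of $\rho_1$ at time $0.6$ (position $3$) has no time-match in $\rho_2$; and every point $4{+}m$ of $\rho_1$ (time $1.1+m\delta$) shares its time stamp with point $3{+}m$ of $\rho_2$. Thus the only discrepancy sits in the open interval $(0,1)$, where $\rho_1$ has two $a$-points and $\rho_2$ only one; from every other vantage point the words are order-isomorphic with matching time stamps and a constant index lag of one inside $W$. $\dpl$'s guiding principle will be to answer each $\fut_I$ or $\until$ move so that the two pebbles land on points with equal time stamp whenever such a point exists, and otherwise to match $\rho_1$'s extra point $0.6$ to $\rho_2$'s point $0.5$. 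Under this principle the reachable configurations are exactly $(1,1)$, $(2,2)$, the ``off'' configuration $(3,2)$ with time stamps $(0.6,0.5)$, the transitional $(3,3)$ with time stamps $(0.6,1.1)$, the shifted pairs $(4{+}s,4{+}s)$ with time stamps $(1.1+s\delta,1.1+(s{+}1)\delta)$, and the aligned pairs $(4{+}m,3{+}m)$; in each of them $|i_p-j_p|\le 1$, and once $i_p\ge 3$ with lag one the pebbles have ``passed'' the extra point and the suffixes are order-isomorphic, so the lag never leaves $\{0,1\}$.

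The heart of the argument is the case analysis showing that no $\spl$ move from a safe configuration breaks the invariant, and the crux is the extra point. Because all $W$-points lie within $(0,1)$ both of $0.6$ and of the first $W$-point $1.1$, and because permissible intervals have integer endpoints, the integer-endpoint interval moves available to $\spl$ from $\rho_1$'s point $0.6$ coincide with those from $\rho_2$'s matching point: $\spl$ can neither isolate the single extra point nor separate the clustered points of $W$. I would therefore check, for each configuration type and each of $\spl$'s choices (word, interval $I$, then the $\fut$ or $\until$ continuation), that $\dpl$'s time-matching response lands in the safe set; for instance, from $(2,2)$ a move onto $0.6$ forces $\dpl$ into $W$, producing $(3,3)$, from which $\spl$'s next move re-synchronises the time stamps into some $(4{+}m,3{+}m)$, while from $(3,2)$ and from the aligned $W$-pairs every move of $\spl$ is answered by the equal-time-stamp point, preserving lag $\le 1$. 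The $\until$ part never helps $\spl$ near the prefix, since $\rho_2$ has no point strictly between the relevant consecutive positions, and inside $W$ any intermediate point $\spl$ names is matched by the time-equal point.

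Finally I would track the resource that actually guarantees survival: the number of points strictly to the right of the current pebble in each word, decreasing across rounds. The point of the stated count (the two values being equal in the aligned $W$-configurations, and differing by one otherwise, matching $n{+}3{-}p$ versus $n{+}2{-}p$ up to a fixed shift in the counting convention) is that this resource stays large compared with the rounds still to be played. The decisive observation is that $\mathsf{MTL}$ on a uniform cluster such as $W$ can count points to the right only up to its nesting depth, not exponentially: a depth-$r$ formula can assert ``at least $k$ points remain'' only for $k\le r$. Hence whenever the two residual counts differ by exactly one, both still exceed the number of remaining rounds, so by the induction hypothesis $\dpl$ survives; and whenever they are equal the residual games are literally isomorphic. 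Concretely, the only danger is the shifted regime $(4{+}s,4{+}s)$, where $\rho_2$ would exhaust its points first, but reaching $\rho_2$'s last point there costs more than $n$ rounds, so it never materialises in the $n$-round game. Combining the three ingredients yields the invariant and hence $\rho_1\equiv_n^{\mathsf{MTL}}\rho_2$.

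I expect the main obstacle to be the transitional configuration $(3,3)$ with unequal time stamps $(0.6,1.1)$, where the two pebbles sit on structurally different points (the extra prefix point versus the start of $W$). The delicate step is to argue that this asymmetry is invisible to $\mathsf{MTL}$: one must verify that the integer boundary at $1$ does not hand $\spl$ a separating interval from this configuration, and that the one-point surplus of $\rho_1$ inside $W$ cannot be converted into a distinguishing formula within the remaining rounds. Both reduce to the linear-counting bound together with the resource inequality, but making the bookkeeping of ``points to the right'' precise enough to close the induction — and to reconcile it exactly with the stated values $n{+}3{-}p$ and $n{+}2{-}p$ — is where the real care is needed.
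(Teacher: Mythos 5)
Your overall architecture --- an explicit Duplicator strategy maintaining the lag-at-most-one invariant, an enumeration of reachable configurations, a count of positions remaining to the right, and the observation that inside the cluster $W$ the game degenerates to an untimed one in which depth-$r$ formulae can count positions only up to $r$ --- is essentially the paper's. However, the strategy you actually prescribe fails at a concrete point, and your justification for why the $\until$ sub-move is harmless has the asymmetry backwards. Consider the configuration $(2,2)$ with both pebbles at time $0.5$, and let $\spl$ play an $\until_{(0,1)}$ move \emph{on $\rho_2$}, landing on the first point of $W$ (time $1.1$, position $3$ of $\rho_2$). A legal point of equal time stamp exists in $\rho_1$ (position $4$), so your rule sends $\dpl$ there. $\spl$ now invokes the $\until$ part: he names an intermediate point in $\dpl$'s word $\rho_1$, namely the point at time $0.6$, and $\dpl$ must answer with a point of $\rho_2$ strictly between positions $2$ and $3$ --- there is none, so he loses the game outright. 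The correct response is to answer $1.1$ with $\rho_1$'s point at $0.6$, reaching your transitional configuration $(3,3)$ from the other direction; that is, $\dpl$ must \emph{refuse} the time-matched point precisely when $\spl$ plays on the shorter word from the prefix. Your claim that ``the $\until$ part never helps $\spl$ near the prefix, since $\rho_2$ has no point strictly between the relevant consecutive positions'' is inverted in exactly this situation: when $\spl$ is the one playing on $\rho_2$, it is $\dpl$ who must produce the intermediate point in $\rho_2$, and its absence is what kills him. The remainder of your analysis (the $(3,2)$ and aligned $W$-configurations, the resource bookkeeping, the untimed tail argument) goes through, so the flaw is local and repairable, but as written the case analysis that is supposed to close the induction does not close it.
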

 \begin{proof}
  The starting configuration is $(i_1,j_1)$,  the starting positions 
  of the two words.   Assume $\spl$ chooses the word 
 $\rho_1$, while $\dpl$ chooses $\rho_2$. Choosing the interval $I=(0,1)$, $\spl$ invokes 
 a $\until_I$ move and chooses one of the $a$'s in $(0,1)$. In response, 
 $\dpl$ chooses the only $a$ at 0.5 in $(0,1)$ in $\rho_2$. 
 The possible configurations are those with time stamps (0.5,0.5) or (0.6, 0.5). 
 The configuration with time stamps (0.6, 0.5) is such that $i_2-j_2=3-2=1$, 
    both words have exactly the same 
 symbols in the future, at the same time points. 
 Thus, $\dpl$ can achieve a configuration with identical time stamps, 
 preserving the lag of one position. 
  
  Let us now look at the configuration $(i_2, j_2)$ with time stamps (0.5, 0.5).  
Assume $\spl$ continues to play in $\rho_1$, and 
   chooses the $a$ at 0.6 by a $\until_{(0,1)}$ move. 
 In this case, $\dpl$ will choose the $a$ at 1.1, obtaining the configuration 
with time stamps  (0.6,1.1). The configuration $(i_3,j_3)$ with time stamps (0.6,1.1) is such that
  $i_3=j_3$.   Note that from (0.6,1.1), $\dpl$ can always ensure 
 an identical configuration $i_p=j_p, p \geq 3$ ($\dpl$ always moves the same number of positions 
 as the $\spl$) 
 or ensure a lag of one position (in this case, $\spl$ 
moves ahead by more than one position  and $\dpl$ also chooses the position with the same time stamp).
Since the number of positions in $\rho_1$ is $n+3$ and that in $\rho_1$ is $n+2$,
the number of positions to the right of any word 
during the $p$th round will be either same, 
or $n+3-p$ and $n+2-p$ respectively.

If $\spl$ starts playing from $\rho_2$, and 
chooses the $a$ at 0.5 using a $\until_{(0,1)}$ move,
 then $\dpl$ also chooses the $a$ at 0.5 in $\rho_1$. 
  If $\spl$ swaps the words at the end of this move, then $\dpl$ can achieve identical configurations 
  for the rest of the game; otherwise, he can ensure a lag of atmost one position as seen above. 
   \end{proof}
 
\subsection*{$\mathsf{C^{(0,1)}MTL} \subseteq \mathsf{C^{0}MTL}$}

The containment of $\mathsf{C^{(0,1)}MTL}$ in $\mathsf{C^0MTL}$ follows from the fact that $\mathsf{C^{(0,1)}MTL} \subseteq \mathsf{C^0MTL}$.
To show the strict containment, 
consider the formula $\varphi=\cnt_{(0,2)}^{\geq 2} a \in \mathsf{C^0MTL}$. 
We show that for any choice of $n$ rounds and $k$ pebbles, 
we can find two words $\rho_1, \rho_2$ such that 
 $\rho_1 \models \varphi, \rho_2 \nvDash \varphi$, but 
 $\rho_1 \equiv_{n,k}^{\mathsf{C^{(0,1)}MTL}} \rho_2$. 
 
Consider the words 
$\rho_1=(a,0)(a,1.8)(a,1.9)W$ and $\rho_2=(a,0)(a,1.9)W$ where $W$ 
is $(a,2.1)(a,2.1+\delta) \dots, (a, 2.1+nk \delta)$ where $\delta < < \frac{1}{2nk}$ 
such that 2.1+$nk \delta < 2.2$. Clearly, $\rho_1 \models \varphi$ while $\rho_2 \nvDash \varphi$. 

 \begin{proposition}
 \label{app:exp2}
 In any round $p$ of the $\mathsf{C^{(0,1)}MTL}$ game, 
 $\dpl$ can always ensure an identical configuration $(i_p, j_p)$
  or ensure that $0 \leq i_p-j_p \leq 1$.
 If $i_p-j_p=1$ and $j_p \geq 2$, then 
 for all $q > p$, $\dpl$ can ensure that $0 \leq i_q-j_q \leq 1$.
Further, the number of positions to the right of any word 
during the $p$th round will be either same, 
or $nk+4-p$ and $nk+3-p$ respectively.
     \end{proposition}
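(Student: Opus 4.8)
The plan is to exhibit a winning strategy for $\dpl$ in the $n$-round, $k$-pebble $\mathsf{C^{(0,1)}MTL}$ game on $\rho_1,\rho_2$ and to verify the three assertions by induction on the number of rounds, following the template of Proposition~\ref{app:exp1}. Label the positions of $\rho_1$ as $1,2,3,\dots$ (timestamps $0,1.8,1.9,w_0,w_1,\dots$ with $w_m=2.1+m\delta$) and those of $\rho_2$ as $1,2,3,\dots$ (timestamps $0,1.9,w_0,w_1,\dots$). Two geometric facts drive everything. First, $1.8$ and $1.9$ lie in the same unit interval $(1,2)$ and every permissible interval has integer endpoints, so no $\until_I$ move can contain $1.8$ without also containing $1.9$. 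Second, the block $W$ is a cluster of $nk+1$ points packed inside a single unit interval (indeed inside $(2.1,2.2)$), so no integer-endpoint interval separates two of them and every relative $(0,1)$-window either misses $W$ entirely or contains all $W$-points visible from the current anchor. Consequently the map $f$ sending position $m$ of $\rho_1$ to position $m-1$ of $\rho_2$ (for $m\ge 3$) is timestamp- and suffix-preserving, and its only unmatched point is position $2$ of $\rho_1$, the extra point $1.8$.

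First I record the safe positions. A \emph{lag configuration} has $i_p-j_p=1$ and $j_p\ge 2$, i.e.\ $\rho_1$ at some $m\ge 3$ and $\rho_2$ at $m-1$; here the two suffixes are identical under $f$, so $\dpl$ wins every remaining round by mirroring $\spl$ through $f,f^{-1}$. This is exactly the second assertion. All other reachable configurations are of the form $i_p=j_p$ with both anchors inside the $W$-cluster (timestamps differing by a small multiple of $\delta$), or the transient $(2,2)$ with timestamps $(1.8,1.9)$; in each of these $0\le i_p-j_p\le 1$ holds, giving the first assertion. Since $W$ lies in one unit interval, an in-cluster same-index configuration is as good as an identical one: neither an $\until_I$ move nor a $\cnt_{(0,1)}^{\sim c}$ move can tell the $\delta$-shift apart, so $\dpl$ survives by copying index increments, toggling to an $f$-matched lag configuration whenever $\spl$ would otherwise push $\dpl$ off the far end of the cluster.

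Next comes the case analysis on $\spl$'s moves, where the work lies. A $\cnt_{(0,1)}$ move from the start $(1,1)$ is vacuous, as $(0,1)$ contains no point of either word. For an $\until_I$ move from $(1,1)$: if $\spl$ reaches position $3$ or a $W$-point of $\rho_1$, $\dpl$ answers with its $f$-image and enters a lag configuration; if $\spl$ reaches $1.8$, then since $I$ must also contain $1.9$, $\dpl$ is forced to position $2$ of $\rho_2$, giving the transient $(2,2)$ with $i_p-j_p=0$. The genuinely delicate step, which I expect to be the \textbf{main obstacle}, is a $\cnt_{(0,1)}^{\sim c}$ move from the transient $(2,2)$: the relative $(0,1)$-window of $\rho_1$ (absolute $(1.8,2.8)$) contains $\{1.9\}\cup W$, exactly one point more than the window of $\rho_2$ (absolute $(1.9,2.9)$), which contains only $W$. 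The plan is to absorb this surplus using the size of the cluster: since $\spl$ may place at most $c\le k$ pebbles while each window holds at least $nk+1>k$ points, $\dpl$ replies by an \emph{index-preserving} match inside the cluster, pairing $\spl$'s $W$-pebbles with the positionally aligned $W$-points of the other word and assigning any pebble on $1.9$ to a still-unused cluster point. Whatever pebble $\spl$ then selects, $\dpl$ answers with the index-aligned $W$-point (or with $1.9$), yielding a configuration with $0\le i_{p+1}-j_{p+1}\le 1$ that is again a lag or an in-cluster same-index configuration. The one thing to check is that this shift never demands the nonexistent $(nk+4)$-th position of $\rho_2$, and that is precisely what the $nk+1$ densely packed points of $W$ guard against.

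Finally I would carry out the bookkeeping of the last assertion. Along the line of play in which $\spl$ keeps probing (advancing by at most one segment per round), a lag configuration leaves $\rho_1$ and $\rho_2$ with $nk+4-p$ and $nk+3-p$ positions after $p$ rounds (and equally many in a same-index configuration), so for every $p\le n$ at least $nk+4-n=n(k-1)+4>k$ points survive on each side; any line in which $\spl$ instead jumps far ahead only locks $\dpl$ earlier into a trivially safe lag configuration. This quantitative guarantee is exactly what makes the index-preserving shift of the previous paragraph always realizable, sustaining the invariant through all $n$ rounds. Combined with Theorem~\ref{ctmtl-game} restricted to $\mathsf{C^{(0,1)}MTL}$, the resulting $\dpl$-win gives $\rho_1\equiv^{\mathsf{C^{(0,1)}MTL}}_{n,k}\rho_2$ for every $n,k$, which is how the proposition drives the strict containment. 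The argument is lengthy but routine; the sole real subtlety is confirming that the one-point surplus created by $1.8$ is always swallowed inside the dense cluster, which is secured by the deliberately large choice of $nk+1$ points in $W$.
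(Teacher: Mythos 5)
Your strategy is essentially the paper's: the same two invariant configuration types (index-equal, and timestamp-equal with an index lag of one along the suffix-preserving map $f$), the same copy-cat play on the identical suffixes in the lag case, and the same use of the $(nk+1)$-point cluster $W$ to absorb the single surplus point created by $1.8$. Two small corrections are needed. First, your taxonomy omits the reachable index-equal configuration $(3,3)$ with timestamps $(1.9,2.1)$, which is forced when $\spl$ steps from $1.8$ to $1.9$ in $\rho_1$ and $\dpl$ must answer with $2.1$ in $\rho_2$; it is not "in the cluster'' on the $\rho_1$ side, but it is handled by exactly your in-cluster argument, since all remaining points of both words lie within relative distance $(0,1)$ of both anchors. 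Second, your final bookkeeping swaps the two cases: it is the \emph{lag} configuration $(m,m-1)$ that leaves \emph{equal} numbers of positions on the right ($nk+4-m$ on each side, precisely because $f$ is a timestamp-preserving bijection of the suffixes), and the \emph{index-equal} configuration $(p,p)$ that leaves $nk+4-p$ and $nk+3-p$ respectively; stated your way the last assertion of the proposition would not follow, though all the facts needed for the correct association are already present in your own argument.
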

\begin{proof}
The initial configuration is $(i_1,j_1)$ with time stamps (0,0). Assume $\spl$ 
 picks $\rho_2$ while $\dpl$ chooses $\rho_1$.
 The first move cannot be a counting move, since 
 there are no points in (0,1) in both the words.
$\spl$ invokes an $\until_{(1,2)}$ move and comes to  1.9 
in $\rho_2$, while duplicator comes to 1.8 in $\rho_1$ (note that if $\spl$ comes to a point $>$1.9, $\dpl$ 
comes to the point with the same time stamp).
There are no inbetween points to be chosen in an until part, so the 
configuration is $(i_2,j_2)$ with time stamps (1.8, 1.9). This configuration is such that 
$i_2=j_2$, and the number of positions 
on the right are respectively $nk+2$ and $nk+1$.
Lets consider the 2cnd round starting with $(i_2, j_2)$
having time stamps (1.8, 1.9). 
    Assume $\spl$ chose $\rho_1$
 and a point $i'_2$ with  $t_{i'_2}>1.8$. 
 In response, $\dpl$ will choose 
 a point $j'_2$ with   $t_{j'_2}> 1.9$. 
If $t_{i'_2}$ is 1.9, then  $t_{j'_2}$ is 2.1 
with $i'_2=j'_2$.  If  $t_{i'_2}>1.9$, 
  then  $t_{j'_2}> 2.1$ and $\dpl$ can ensure  
  $t_{i'_2}=t_{j'_2}$. 
   When $\spl$ pebbles positions between $i_2$ and $i'_2$, 
$\dpl$ can pebble in such a way that either $i_3=j_3$, or 
$t_{i_3}=t_{j_3}$, and $i_3-j_3=1$. 
When $t_{i_3}=t_{j_3}$, the number of points to the right is the same in both the words.
In the case $t_{i'_2}$ =1.9, and  $t_{j'_2}$= 2.1,  
we obtain  $i_3=j_3$ with  $nk+1$ and $nk$ positions respectively  
on the right in $\rho_1, \rho_2$.

Assume that at the start of the $p$th round, we have 
the configuration $(i_p, j_p)$
with $i_p=j_p$, and having respectively $nk+4-p$ and 
$nk+3-p$ positions to the right in $\rho_1, \rho_2$. 
 Assume further that $\spl$ chooses $i'_p > i_p$, and $\dpl$ chooses $j'_p > j_p$ 
as part of the $p$th round's play.
If $i'_p > i_p+1$, then $\dpl$ chooses 
$j'_p$ such that $t_{i'_p}=t_{j'_p}$, 
from where the number of positions on the right 
is the same in both words. 
Even if $\spl$ decides to play a full until round by choosing a point $j''_p$ 
in between $j_p$ and $j'_p$, $\dpl$ can always choose $i''_p$ having the same time stamp 
as $i''_p$. If $i'_p=i_p+1$, then $\dpl$ has to choose 
$j'_p=j_p+1$, and in this case, the lag of one position continues to the next configuration
with $i_{p+1}=i'_p$ and $j_{p+1}=j'_p$. 
We then have respectively $nk+3-p$ and 
$nk+4-p$ positions to the right in $\rho_1, \rho_2$.
In the case  $i_p-j_p=1$, we have  $t_{i_p}=t_{j_p}$. In this case, $\dpl$ can always ensure 
$i_{p+1}-j_{p+1}=1$, and 
the number of positions to the right 
are same in both the words.   
   \end{proof}

  \section{Proof of Lemma \ref{cmtlless}}
  \label{app:cmtl-less}
  Consider the formula $\varphi=\fut_{(0,1) \#_a \geq 3}b \in \mathsf{TMTL}$.
We show that for any choice of $n$ rounds and $k$ pebbles, 
we can find two words $\rho_1, \rho_2$ such that 
 $\rho_2 \models \varphi, \rho_1 \nvDash \varphi$, but 
 $\rho_1 \equiv_{n,k}^{\mathsf{CMTL}} \rho_2$.

Let $l \in \mathbb{N}$ be the maximum constant used by $\spl$ 
in the set $I_{\nu}$ of permissible intervals. Let $K=nlk+nl$, $\epsilon < \frac{1}{(10)^{10nk}}$ and 
$\kappa=nk\epsilon$.  
Let $0.1>>nk\delta$ and $\delta > \kappa$. \\

\noindent{\it Design of Words}
\begin{enumerate}
\item 
Consider the word $\rho_1$ of length $K+1$. 
Each unit interval $(i, i+1)$ in $\rho_1$, 
$0 \leq i \leq K$ is composed of 3 blocks $A_i, B_i, C_i$, one after the other. 
\begin{itemize}
 \item Block $A_i$ 
has the points  $x_{i1}=i+0.1+\epsilon+i\delta,  y_{i1}=i+0.1+ \kappa+i\delta,  z_{i1}=i+ 0.2+i\delta$
 \item Block $B_i$ has the points  $x_{i2}=i+ 0.3+\epsilon+i\delta, y_{i2}=i+0.3+ \kappa+i\delta,  z_{i2}=i+0.4+i\delta$, and 
 \item  Block $C_i$ has the points $x_{i3}=i+0.5+\epsilon+i\delta, y_{i3}=i+0.5+ \kappa+i\delta, z_{i3}=0.9+i\delta$. 
 \item  Moreover, there are $p>>2nlk$ points in between $x_{ij}$ and $y_{ij}$
  for $1 \leq j \leq 3$.    $\sigma_{z_{ij}}=a$ for all $i, j$, 
the  points  $x_{ij}, y_{ij}$ as well as all points between them are marked $b$.
\item It can be seen that the blocks $A_i, B_i, C_i$ 
shift to the right by $\delta$, as $i$ increases from 0 to $K$. 
\end{itemize}
\item 
The word $\rho_2$ has also length $K+1$. 
Each unit interval 
$(i, i+1)$  (except the last one) in $\rho_2$ is composed of 4 blocks 
$A'_i, B'_i, C'_i, D'_i$ one after the other. 
\begin{itemize}
\item Block $A'_i$ 
has the points  
 $x'_{i1}=i+0.1+\epsilon+i\delta,  y'_{i1}= i+0.1+\kappa+i\delta, z'_{i1}=i+0.2+i\delta$, 
 \item Block $B'_i$ has the points  
 $x'_{i2} =i+0.3+\epsilon+i\delta, y'_{i2}=i+0.3+\kappa+i\delta,   z'_{i2}=i+0.4+i\delta$ 
 \item Block $C'_i$  has the points $x'_{i3}=i+0.5+\epsilon+i\delta$, 
  $y'_{i3}= i+0.5+\kappa+i\delta, z'_{i3}=0.9+i\delta$,  
 \item Block $D'_i$ has the points  
 $x'_{i4}=i+0.99+\epsilon+i\delta, y'_{i4}=i+0.99+\kappa+i\delta$. 
 \item    Moreover, there are $p>>2nlk$ points in between $x_{ij}$ and $y_{ij}$
  for $1 \leq j \leq 4$.
 $\sigma_{z'_{ij}}=a$ for all $i, j$, and
the  points  $x_{ij}, y_{ij}$ as well as all points between them are marked $b$.
\item It can be seen that the blocks $A'_i, B'_i, C'_i$ and $D'_i$ 
shift to the right by $\delta$, as $i$ increases from 0 to $K$.
\item The last unit interval $(K-1, K)$ has only 3 blocks of $b$'s 
starting respectively at  $x'_{K-1~1}, x'_{K-1~2}$ and
$x'_{K-1~3}$ and ending at  $y'_{K-1~1}, y'_{K-1~2}$ and
$y'_{K-1~3}$. The 3 $a$'s occur at  $z_{K-1~1},z_{K-1~2}$ and
$z_{K-1~3}$.
\end{itemize}
\end{enumerate}
\flushleft{
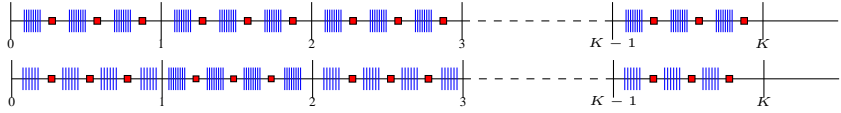
\begin{figure}[th]
\begin{tikzpicture}

\foreach \x in {0.25}{
\draw  (\x + 0,0) -- (\x + 6,0);
\draw[dashed] (\x+6,0) -- (\x+8,0);
\draw (\x+8,0) -- (\x + 11,0);
\foreach \y in {0,1,2,3}
{
	
\draw (\x+\y*2,-0.25)--(\x+\y*2,0.25);

\node at (\x + \y*2, -0.3) {\tiny \y};

}
\foreach \y in {4,5}
{
	
	\draw (\x+\y*2,-0.25)--(\x+\y*2,0.25);

}
	\node at (\x + 8, -.3) {\tiny $K-1$};
		\node at (\x + 10, -.3) {\tiny $K$};

\foreach \u in {0,2,4,8}
{	

	\foreach \y in {0.55,1.15,1.75}
	\node[fill = red,draw = black,rectangle,inner sep=1.25pt,label=below:{}] at (\x+\y+\u,0){};

	\foreach \z in {0.1,0.7,1.3}
{	\foreach \y in {0.08,0.11,...,0.3}
	{\draw[blue] (\x+\y+\z+\u,-0.15)--(\x+\y+\z+\u,0.15);
	} 

}
}
}
\end{tikzpicture}
}

	\flushleft{
		\begin{tikzpicture}
	
		\foreach \x in {0.25}{
			\draw  (\x + 0,0) -- (\x + 6,0);
			\draw[dashed] (\x+6,0) -- (\x+8,0);
			\draw (\x+8,0) -- (\x + 11,0);
			\foreach \y in {0,1,2,3}
			{
				
				\draw (\x+\y*2,-0.25)--(\x+\y*2,0.25);
				
				\node at (\x + \y*2, -0.3) {\tiny \y};
				
			}
			\foreach \y in {4,5}
			{
				
				\draw (\x+\y*2,-0.25)--(\x+\y*2,0.25);

			}
			\node at (\x +8, -0.3) {\tiny $K-1$};
			\node at (\x + 10, -0.3) {\tiny $K$};

			\foreach \u in {0,4}
			{	
				
				\foreach \y in {0.53,1.04,1.54}
				\node[fill = red,draw = black,rectangle,inner sep=1.25pt,label=below:{}] at (\x+\y+\u,0){};
				
				\foreach \z in {0.08,0.61,1.12,1.65}
				{	\foreach \y in {0.07,0.11,...,0.3}
					{\draw[blue] (\x+\y+\z+\u,-0.15)--(\x+\y+\z+\u,0.15);
					} 
					
				}
			}
			\foreach \u in {8}
			{	
				
				\foreach \y in {0.53,1.04,1.54}
				\node[fill = red,draw = black,rectangle,inner sep=1.25pt,label=below:{}] at (\x+\y+\u,0){};
				
				\foreach \z in {0.08,0.61,1.12}
				{	\foreach \y in {0.07,0.11,...,0.3}
					{\draw[blue] (\x+\y+\z+\u,-0.15)--(\x+\y+\z+\u,0.15);
					} 
					
				}
			}

		\foreach \u in {1.9}
		{	
			
			\foreach \y in {0.55,1.05,1.55}
			\node[fill = red,draw = black,rectangle,inner sep=1.1pt,label=below:{}] at (\x+\y+\u,0){};
			
			\foreach \z in {0.11,0.62,1.11,1.65}
			{	\foreach \y in {0.08,0.11,...,0.3}
				{\draw[blue] (\x+\y+\z+\u,-0.15)--(\x+\y+\z+\u,0.15);
				} 
				
			}
		}
		
	}
	\end{tikzpicture}
\caption{The red square represents $a$, the bunch of blue lines represents a bunch of $b$'s. 
There are 3 $a$'s in each unit interval of both $\rho_1$ and $\rho_2$. The difference is that 
$\rho_1$ has 3 blocks of $b$'s in each unit interval, while $\rho_2$ has 4 blocks of $b$'s in each unit interval except the last.  
Clearly, $\rho_2 \models \varphi, \rho_1 \nvDash \varphi$. The time stamp of the three $a$'s are $z_{i1}$, $z_{i2}$ and $z_{i3}$ respectively
 in the $i$th unit interval of $\rho_1$. Likewise, the $j$th bunch of $b$'s in the $i$th unit interval begins 
 with time stamp $x_{ij}$ and ends with time stamp $y_{ij}$. In $\rho_2$ we have $z'_{ij}, x'_{ij}$ and $y'_{ij}$. }
		\label{app:fig2}
				\end{figure}
				}

Clearly, $\rho_1, \rho_2$ have three occurrences of $a$ in each unit interval; however
while $\rho_1$ has 3 blocks of $b$'s in each unit interval with an $a$ after each $b$ block, $\rho_2$ has 
4 blocks of $b$'s in each unit interval, with the 3 $a$'s 
in between the $b$ blocks. Thus, $\rho_2 \models \varphi$ while 
$\rho_1 \nvDash \varphi$. \\

\noindent{\it Topological Similarity of $\rho_1, \rho_2$}: 
Note that for any $i$, 
the catenation of the blocks $D'_i$ and $A'_{i+1}$
is topologically similar to the block $A_{i+1}$:
(i) both have a sufficiently long sequence of $b$'s followed by an $a$; $D'_iA'_{i+1}$ has
$2p+2$ $b$'s followed by an $a$, while $A_{i+1}$ has $p+2$ $b$'s followed by an $a$. Since $p>>2nlk$, and 
the number of rounds is $n$, a bunch of $2p+2$ $b$'s is as good as a bunch of $p+2$ $b$'s. 
(ii) Map $A'_1$, $B'_1$ and $C'_1$ respectively to $A_1, B_1$ and $C_1$; 
map $D'_iA'_{i+1}$ to $A_{i+1}$, $B'_{i+1}$ to $B_{i+1}$ and $C'_{i+1}$ to $C_{i+1}$ for 
$i \geq 1$.\\
\noindent{\it Segmented View of $\rho_1, \rho_2$}: 
We will refer to the unit interval $(i, i+1)$ for $i \geq 0$ in either word
as the $(i+1)$th segment. 
Thus, both the words have $K$ segments numbered $1, \dots, K$. 
For a position $i_p \in dom(\rho_1) \cup dom(\rho_2),$
$seg(i_p)$ represents the segment containing $t_{i_p}$. For instance, 
if $t_{i_p}=5.3$, then the position $i_p$ is contained in segment 6, or $seg(i_p)=6$.\\

\noindent{\it Copy-cat strategy}: Consider the $p$th round of the game with initial configuration $(i_p, j_p)$. 
If $\dpl$ can ensure that $seg(i_{p+1}){-}seg(i_p)$=$seg(j_{p+1}){-}seg(j_p)$, then 
we say that $\dpl$ has adopted a {\it copy-cat} strategy in the $p$th round.

We will now play a $(n,k)$-$\mathsf{CMTL}$ game and show that 
$\dpl$ wins. It is easy to see that $\dpl$ can respond to any of the 
$\until_I$ moves of $\spl$ by the choice  of the words. 
\begin{proposition}
For an $n$ round $\mathsf{CMTL}$ game over the words $\rho_1, \rho_2$, 
the $\dpl$ always has a winning strategy such that for any $1 \leq p \leq n$, if 
$(i_p, j_p)$ is the initial configuration of the $p$th round, then 
$|seg(i_p)-seg(j_p)| \leq 1$. Moreover, when 
$|seg(i_p)-seg(j_p)| =1$, then there are atleast $(n-p)(l+1)$ segments to the right 
on each word after $p$ rounds, for all $1 \leq p \leq n$.
    \end{proposition}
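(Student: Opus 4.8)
The plan is to give $\dpl$ a copy-cat strategy driven by the topological similarity between $\rho_1$ and $\rho_2$ established above, exactly mirroring the proof of Proposition~\ref{game-details}. Recall the mapping that sends $A'_1,B'_1,C'_1$ to $A_1,B_1,C_1$ and, for $i\geq 1$, folds the catenation $D'_iA'_{i+1}$ onto $A_{i+1}$, with $B'_{i+1}\mapsto B_{i+1}$ and $C'_{i+1}\mapsto C_{i+1}$. This induces a position map $f$ from the action points of $\rho_2$ to those of $\rho_1$, with inverse $g=f^{-1}$, and the first step is to verify directly from the time stamps that $f$ preserves, for every permissible interval $I\in I_\nu$, the predicate $t_q-t_p\in I$ up to a single-segment slack. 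Within a matched pair of segments the three $a$'s and the long $b$-bunches sit at the same offsets up to the negligible perturbations $\epsilon,\kappa,\delta$, so inter-point distances land in the same interval on both sides; the only discrepancy arises when $\spl$ exploits the extra fourth block $D'_i$ of a $\rho_2$-segment, which $\dpl$ absorbs by advancing one segment in $\rho_1$, thereby creating a lag of at most one segment.

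Next I would walk through the moves available in a $\mathsf{CMTL}$ game, namely $\fut_I$ rounds, full $\until_I$ rounds, and $\cnt^{\sim c}_I$ rounds. For a $\fut_I$ or full $\until_I$ round, $\dpl$ answers a move to $p$ by moving to $f(p)$ or $g(p)$; the interval-preservation property guarantees the response is legal, and for the intermediate point of an $\until_I$ round $\dpl$ uses the same map, so the continuous-$\psi$ requirement is met because corresponding points carry identical propositional labels. Since each matched step changes $seg$ by the same amount on both words except for the single unit absorbed at a $D'$-block, the invariant $|seg(i_p)-seg(j_p)|\le 1$ is maintained round by round.

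The main obstacle is the counting round $\cnt^{\sim c}_I$, and this is where the construction is tailored. When $\spl$ pebbles $c\le k$ points in its chosen word inside the interval, $\dpl$ must pebble $c$ points in the other word so that each pebbled position can be answered by a propositionally indistinguishable one, for the remaining rounds. The two decisive facts are that (i) every segment of both words contains exactly three $a$'s, so counts of $a$ over any window agree up to the one-segment slack, and (ii) every $b$-bunch contains $p\gg 2nlk$ points, so a bunch of $2p+2$ $b$'s coming from $D'_iA'_{i+1}$ is indistinguishable from a bunch of $p+2$ $b$'s coming from $A_{i+1}$ for any play using at most $k$ pebbles over $n$ rounds. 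Hence for any pebbled set $P$, $\dpl$ can pebble $f(P)$ (or $g(P)$), and upon $\spl$'s choice of a pebbled point $\dpl$ answers with its image; by the inductive content of Theorem~\ref{ctmtl-game} the resulting configuration is $\{n-1,k\}$-indistinguishable. I expect the delicate bookkeeping to be in confirming that the four-versus-three block mismatch never lets $\spl$ force a count $\dpl$ cannot match, which holds precisely because $\mathsf{CMTL}$ can only count points inside a fixed interval and can never count $b$-blocks relative to the witnessing $a$, the very ability that the separating formula $\varphi=\fut_{(0,1),\#_a\ge 3}b$ requires.

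Finally I would discharge the segment-budget claim. Since the maximal constant in $I_\nu$ is $l$, any round over a bounded interval advances $seg$ by at most $l$, while an unbounded interval is treated separately by letting $\dpl$ force an identical configuration and win outright. Thus after $p$ rounds at most $pl$ segments are consumed, and because the words carry $K=nlk+nl$ segments there remain at least $K-pl\ge (n-p)(l+1)$ segments to the right of each word; whenever the lag equals one, this surplus is exactly what lets $\dpl$ continue duplicating for the remaining $n-p$ rounds while preserving the single-segment lag, completing the induction and giving $\dpl$ the win.
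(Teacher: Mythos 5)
Your overall plan --- exploit the topological similarity of the two words, let $\dpl$ copy-cat with at most one segment of lag, answer counting rounds by pebbling the image of $\spl$'s pebbled set, and close with a segment budget --- is the same as the paper's. However, there are two concrete gaps.

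First, the \emph{static} fold $f$ sending $D'_i$ into $A_{i+1}$ does not preserve interval membership, because it moves points across the integer boundary at $i+1$. For example, from an identical configuration at $x_{g1}\in A_g$ and $x'_{g1}\in A'_g$, a $\spl$ move to $x'_{g4}\in D'_g$ covers a distance of $0.89\in(0,1)$, while your prescribed response to $x_{g+1\,1}\in A_{g+1}$ covers $1+\delta\in(1,2)$; with $I=(0,1)$ the response is simply illegal. The paper avoids this by making the correspondence configuration-dependent rather than a fixed map: from an identical configuration, pebbles that $\spl$ drops on $D'_g$ are answered in the first half of $A_g$ of the \emph{same} unit interval, producing the anchored lag-zero state $(i\in A_g,\ j\in D'_g)$; only from that state is a subsequent move into $A'_h,B'_h,C'_h$ answered one segment behind (and a move into $D'_h$ answered in $A_h$), each case justified by an explicit timestamp identity such as $y'_{h1}-x'_{g4}\in I$ iff $y_{h-1\,1}-x_{g1}\in I$. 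Your ``single-segment slack'' gloss hides precisely the cases where the slack falls on the wrong side of an integer.

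Second, the unbounded-interval case. You claim $\dpl$ ``forces an identical configuration and wins outright,'' but $\spl$ can use an unbounded interval to land in a $D'$ block, for which no identically placed point of $\rho_1$ exists, and in any case reaching an identical configuration does not end the game here because the two words still differ globally to the right. More importantly, the ``Moreover'' clause requires a bound on how far the tokens advance in \emph{every} round; your budget $K-pl$ accounts only for bounded intervals. The paper instead has $\dpl$ advance at most $l+1$ segments in an unbounded round (parking $\spl$'s $D'$-pebbles in the roomy $A$ blocks of those $l+1$ segments), so that at most $p(l+1)$ segments are consumed after $p$ rounds and at least $(n-p)(l+1)$ remain, which is exactly what the statement asserts.
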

\begin{proof}
Assume that $\spl$ initiates a $\cnt_I^{\geq k}$ move on $\rho_2$. 
Then $\spl$ places $k$ pebbles on $k$ positions of $\rho_2$ 
in the interval $I$ and in response, $\dpl$ pebbles $k$ positions 
in the same interval $I$ of $\rho_1$. 
\begin{enumerate}
\item If $\spl$  does not keep any pebble 
on the last $b$ block in any of the unit intervals spanning $I$,  then $\dpl$
puts his pebbles exactly at the same positions as $\spl$, and obtains an identical 
configuration.
\item \emph{Choice of Pebbling}: Assume 
that we have an identical configuration $(i_p, j_p)$. 
Let us look at $\spl$'s placement of pebbles on some unit interval say $(g,g+1)$. 
Assume that  $\spl$ keeps some (say $l$) pebbles on the last $b$ block (say $D'_g$), and 
$l'$ pebbles on the remaining 3 blocks $A'_g, B'_g$ and $C'_g$ 
of the unit interval  $(g,g+1)$.  In response, 
$\dpl$ places $l'$ of his pebbles at identical positions on $A_g, B_g$ and $C_g$, and 
places $l$ pebbles on $A_g$.
$\dpl$ will place these $l$ pebbles in the first half of $A_g$. 
  Note that since the number of positions in each block 
is $2nk >>l$, this is possible. This way, $\dpl$ keeps his $k$ pebbles on the same unit intervals as $\spl$.
If $\spl$ picks a pebble in $\dpl$'s word  from any $B$ or $C$ block, then $\dpl$ will pick the same pebble 
from $\spl$'s word. If $\spl$ picks a pebble from  $A_g$, then 
there are two possibilities: (i) either this pebble corresponds 
to a pebble kept by $\spl$ on $A'_g$, or (ii) this is one of the $l'$ pebbles kept by $\dpl$  
on $A'_g$ in response to $\spl$'s $l'$ pebbles on $D'_g$. In case of (i), $\dpl$ simply picks the corresponding pebble 
from $A'_g$, obtaining an identical configuration, while in case (ii), $\dpl$ picks the corresponding pebble from $D'_g$.
 This gives a configuration $(i_{p+1}, j_{p+1})$ with $i_{p+1}$ being a position in $A_g$, and 
 $j_{p+1}$ in $D'_g$. So far, there is no lag in the segments, $seg(i_{p+1})=g+1=seg(j_{p+1})$.  
 
   
 \item  Consider  any $\until_I$ move 
or $\cnt_I^{\geq k}$ move that $\spl$ launches on either 
of the words from $(i_{p+1}, j_{p+1})$. 
Recall that $seg(i_{p+1})=g+1=seg(j_{p+1})$, $i_{p+1} \in A_g$, $j_{p+1} \in D'_g$.
\begin{itemize}
\item[(a)]
If $\spl$  moves to some point in  
$A'_h$  (in segment $h+1$), then $\dpl$ will move to some point 
in $A_{h-1}$ (in segment $h$). This is possible since for any interval $I$,  $y'_{h1}-x'_{g4} \in I$ iff 
$y_{h-1~1}-x_{g1} \in I$. 
\item[(b)] 
If $\spl$  moves to some point in  
$B'_h$  (in segment $h+1$), then $\dpl$ will move to some point 
in $B_{h-1}$ (in segment $h$). This is possible since for any interval $I$,  $y'_{h2}-x'_{g4} \in I$ iff 
$y_{h-1~ 2}-x_{g1} \in I$.
\item[(c)] 
If $\spl$  moves to some point in  
$C'_h$  (in segment $h+1$), then $\dpl$ will move to some point 
in $C_{h-1}$ (in segment $h$). This is possible since for any interval $I$,  $y'_{h3}-x'_{g4} \in I$ iff 
$y_{h-1~ 3}-x_{g1} \in I$.
\item[(d)] If $\spl$ moves to some point in  
$D'_h$  (in segment $h+1$), then $\dpl$ will move to some point 
in $A_{h}$ (in segment $h+1$). This is possible since for any interval $I$,  $y'_{h4}-x'_{g4} \in I$ iff 
$y_{h 1}-x_{g1} \in I$.
\end{itemize}
Cases (a)-(c) creates a lag of one segment between the two words, while (d) 
is similar to $(i_{p+1}, j_{p+1})$. From (d), we can achieve any one of cases (a)-(d) listed above. 
Let us hence look at cases (a)-(c), to  
understand the potential future configurations. \\

In cases (a)-(c), 
    when $\spl$ pebbles $k$ positions between 
segments $g+1$ and $h+1$ in $\rho_2$, $\dpl$ pebbles 
$k$ positions between segments $g+1$ and $h$ in $\rho_1$. The choice of pebbling 
is as described in item 2 above:
\begin{itemize}
\item[(i)]  whenever $\spl$ pebbles positions in block $D'_s$
 of segment $g+1 \leq s+1 < h+1$
$\dpl$ pebbles positions in the first half of block $A_{s}$.
If $\spl$ picks one of these pebbles from $A_s$, then $\dpl$ chooses the corresponding pebble
from $D'_s$, thereby obtaining a configuration $(i_q, j_q)$ with $seg(i_q)=seg(j_q)=s+1$, 
with $i_q \in A_s, j_q \in D'_s$. This configuration is exactly same as the one described in 3(d) above. 
 %
   \item[(ii)] whenever $\spl$ places pebbles on $A'_{s}, B'_{s}$ and $C'_{s}$, 
 $\dpl$ places his pebbles  
on $A_s, B_s$ and $C_s$ respectively, for $s < h+1$.
If $\spl$ picks one of these pebbles from $A_s, B_s$ or $C_s$, then $\dpl$ chooses the corresponding pebble
respectively from $A'_s, B'_s$ 
or $C'_s$,  obtaining a configuration $(i_q, j_q)$ with 
$i_q \in X_s$ iff $j_q \in X'_s$ for $s < h+1$ and $X \in \{A,B,C\}$.
$seg(i_q)=seg(j_q)=s+1$. In fact, this is an identical configuration. 
\item[(iii)] whenever $\spl$ keeps his pebbles on $A'_{h+1}, B'_{h+1}$ and $C'_{h+1}$, $\dpl$ 
keeps his pebbles on $A_h, B_h$ and $C_h$ respectively. 
If $\spl$ picks one of these pebbles from $A_h, B_h$ or $C_h$ then $\dpl$ chooses the corresponding pebble
respectively from $A'_{h+1}, B'_{h+1}$ 
or $C'_{h+1}$, obtaining a configuration $(i_q, j_q)$ with 
$seg(i_q)=h+1, seg(j_q)=h+2$ with $i_q \in X_h$ and $j_q \in X'_{h+1}$
for $X \in \{A,B,C\}$.  There is a lag of one segment here.
\end{itemize}

Cases (i) and (ii) have been explored before. Let us now explore case (iii), 
which gives the configuration $(i_q, j_q)$ with $i_q \in X_h, j_q \in X'_{h+1}$, $X \in \{A,B,C\}$.

\begin{enumerate}
\item 
Let $i_q \in A_h, j_q \in A'_{h+1}$. Clearly, if $\spl$ moves to $A'_d, B'_d$ or $C'_d$, for $d \geq h+1$,
$\dpl$ moves respectively to $A_{d-1}, B_{d-1}$ or $C_{d-1}$.  Pebbling and picking a pebble 
here will give rise to a configuration as seen in (i), (ii) or (iii) above. 
The interesting case is when $\spl$ moves from $j_q$ to a point in 
some $D'_d$. Since there are no $D$ blocks in $\rho_1$, 
$\dpl$ moves to a point in $C_{d-1}$. 
Note that this is possible since 
$y_{d-1~ 3}-x_{h-1~ 1} \in I$ iff 
      $y'_{d 4}-x'_{h 1} \in I$. 
      Further, after pebbling, when $\spl$ picks a pebble, $\dpl$ can either ensure an identical 
      configuration, or a  configuration 
      as in 3(d).
\item Let $i_q \in B_h, j_q \in B'_{h+1}$. Clearly, if $\spl$ moves to $A'_d, B'_d$ or $C'_d$, 
$\dpl$ moves respectively to $A_{d-1}, B_{d-1}$ or $C_{d-1}$.  Pebbling and picking a pebble 
here will give rise to a configuration as seen in (i), (ii) or (iii) above. 
The interesting case is when $\spl$ moves from $j_q$ to a point in 
some $D'_d$. Since there are no $D$ blocks in $\rho_1$, 
$\dpl$ moves to a point in $A_d$. 
Note that this is possible since 
$y_{d 1}-x_{h-1 2} \in I$ iff 
      $y'_{d 4}-x'_{h 2} \in I$. 
      Further, after pebbling, when $\spl$ picks a pebble, $\dpl$ can either ensure an identical 
      configuration, or a  configuration 
      as in 3(d).
The case when $i_q \in C_h, j_q \in C'_{h+1}$
 is similar to the above : $\dpl$ can either preserve the lag or move to 
 $A_d$ whenever $\spl$ moves to $D'_d$. This is possible 
      since  $y_{d 1}-x_{h-1 3} \in I$ iff 
      $y'_{d 4}-x'_{h 3} \in I$. 
\end{enumerate}
\end{enumerate}
Thus, the possible configurations are (i) identical configurations $(i_p, j_p)$ with $i_p \in X_h$ 
iff $j_p \in X'_h$ with $X \in \{A, B, C\}$, and no segment lag, or (ii) configurations with no segment lag 
of the form $(i_p, j_p)$ with $i_p \in A_h, j_p \in D'_h$, or (iii) configurations 
with lag of one segment of the form $(i_p, j_p)$ with $i_p \in X_h, j_p \in X'_{h+1}$ with 
$X \in \{A, B, C\}$. 
If $\spl$ always chooses bounded intervals (of length $\leq l$), 
then $\dpl$ respects his segment lag of 1, and the maximum number of segments that can be explored in either word 
is atmost $nl < K$. In this case, there are atleast
$K-pl$ $\geq$ $nlk+nl-pl$ $\geq$ $(n-p)(l+1)$ segments to the right of $\rho_1$ and $K-pl+1$ segments to the right of $\rho_2$ after $p$ rounds.

If $\spl$ chooses an unbounded interval of length $>l$ in any round, then $\dpl$ 
moves ahead only by $l+1$ segments. The pebbles 
$\spl$ drops in his $D'$ blocks can be accommodated by $\dpl$ 
in the  $A$ blocks of these $l+1$ segments since, the 
number of points in the $A$ blocks are much more than $2nkl$, and atmost $k$ pebbles are placed in a round. 
Having done this, 
$\dpl$ can either enforce an identical configuration,   
or obtain the configuration $(i_p, j_p)$ 
with $i_p \in A_h$ and $j_p \in D'_h$. 
Since we have seen that $\dpl$ can always replicate 
$\spl$'s move from configurations of this kind $(i_p, j_p)$
$i_p \in A_h$ and $j_p \in D'_h$,  
for the remaining $n-p$ rounds 
either we have $seg(i_q)=seg(j_q), q >p$, or 
$|seg(i_q)-seg(j_q)|=1$ for all $q>p$. 
Thus, whenever an unbounded interval is used, the segments match,
and $\dpl$ ensures that the maximum segments covered after any $p$ rounds is 
$\leq p(l+1)$. This ensures that after $n$ rounds, we cover atmost $n(l+1)$ segments on either word. 
Thus, $\dpl$ can always replicate moves of the $\spl$ and 
there are 
$K-p(l+1)$ $\geq$ $nlk+nl-pl-p$ $\geq$ $n-p +(nl-pl)=(n-p)(l+1)$ segments 
 to the right of each word after $p$ rounds for all $p \leq n$. 
\end{proof}

\section{Oversampling : Relevant Lemmas}
\label{app:onf}

If $\psi$ is over $\Sigma \cup X$, then the relativization of $\psi$ with respect to $\Sigma$ is denoted 
$ONF_{\Sigma}(\psi)$ \cite{time14} and defined inductively as follows: If $\psi=a \in \Sigma$, 
then $ONF_{\Sigma}(\psi)=(a \wedge \bigvee \Sigma)$. Likewise, if $\psi=\varphi_1 \until \varphi_2$, then 
	$ONF_{\Sigma}(\psi)$=$[(act \rightarrow ONF_{\Sigma}(\varphi_1))$ $\until (act \wedge ONF_{\Sigma}(\varphi_2))]$ where 
	$act= \bigvee \Sigma$. It can then be seen \cite{arxiv-time14} that for 
	$\zeta_1=ONF_{\Sigma_1}(\psi_1)$ and $\zeta_2=ONF_{\Sigma_2}(\psi_2)$, 
	with $\Sigma_1=\Sigma \cup X_1$, $\Sigma_2=\Sigma \cup X_2$ and disjoint $X_1, X_2$,  
	if 
	$\varphi_1=\exists \downarrow X_1. \zeta_1$ and 
	$\varphi_2=\exists \downarrow X_2. \zeta_2$, then 
	$\varphi_1 \wedge \varphi_2=\exists \downarrow (X_1 \cup X_2). (\zeta_1 \wedge \zeta_2)$.
The following lemmas are from \cite{arxiv-time14}. 
\begin{lemma}
\label {lem:boolclosedequis}
Consider  formulae $\varphi_1, \varphi_2$ built from $\Sigma$.   
Let $\psi_1, \psi_2$ be formulae built from $\Sigma \cup X_1$ and $\Sigma \cup X_2$ respectively.
Let $X=X_1 \cup X_2$, $\Sigma_i=\Sigma \cup X_i$ for $i=1,2$, and  $X_1 \cap X_2=\emptyset$.
Let $\zeta_1= ONF_{\Sigma_1}(\psi_1)$ and $\zeta_2= ONF_{\Sigma_2}(\psi_2)$. Then, 
 $\varphi_1 = \exists \downarrow X_1. \zeta_1$ and  
  $\varphi_2 = \exists \downarrow X_2. \zeta_2$ implies
  $\varphi_1 \wedge \varphi_2 = \exists \downarrow X. 
  (\zeta_1 \wedge \zeta_2)$.
  \end{lemma}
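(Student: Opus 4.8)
The plan is to verify directly the three defining conditions of equisatisfiability modulo oversampled projections for the pair $(\varphi_1 \wedge \varphi_2,\ \zeta_1 \wedge \zeta_2)$ over the alphabets $\Sigma$ and $\Sigma \cup X$ with $X = X_1 \cup X_2$. Both directions rest on a single invariance property of the oversampled normal form, which I would isolate first as an auxiliary claim: for a formula $\psi$ over $\Sigma_i = \Sigma \cup X_i$, the truth of $ONF_{\Sigma_i}(\psi)$ at a $\Sigma_i$-action point of a word is unaffected by inserting or deleting points that carry no symbol of $\Sigma_i$, as long as the time stamps and relative order of the remaining points are kept fixed. I would prove this by structural induction on $\psi$. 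The atomic and boolean cases are immediate; the only nontrivial case is $\alpha \until_I \beta$, whose relativization is $(act_i \to ONF_{\Sigma_i}(\alpha)) \until_I (act_i \wedge ONF_{\Sigma_i}(\beta))$ with $act_i = \bigvee \Sigma_i$. Here an inserted non-action point trivially satisfies the intermediate obligation $act_i \to ONF_{\Sigma_i}(\alpha)$ and can never witness the target $act_i \wedge ONF_{\Sigma_i}(\beta)$, so the set of admissible right endpoints ranges only over $\Sigma_i$-action points, whose time stamps are untouched by the insertion; hence the metric constraint $t_j - t_i \in I$ is evaluated on exactly the same positions before and after.

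Granting the claim, I would establish condition (2) (completeness of the witness) as follows. Suppose $\rho \models \varphi_1 \wedge \varphi_2$ over $\Sigma$. Since $\varphi_i = \exists{\downarrow} X_i.\,\zeta_i$, there are $(\Sigma,X_i)$-oversampled behaviours $\rho'_i$ with $\rho'_i \models \zeta_i$ and $\rho'_i \downarrow X_i = \rho$. Because both project onto $\rho$, their $\Sigma$-action points are in time-stamp-preserving bijection with $dom(\rho)$ and carry the same $\Sigma$-labels; they differ only in their oversampled points, which carry symbols of $X_1$ (resp.\ $X_2$) but none of $\Sigma$. I would then form the merged word $\rho'$ over $\Sigma \cup X$ by retaining this common $\Sigma$-action skeleton---now additionally carrying the $X_1$- and $X_2$-labels that $\rho'_1$ and $\rho'_2$ placed on those points---and interleaving by time stamp all oversampled points of $\rho'_1$ together with all oversampled points of $\rho'_2$; coincident time stamps are admissible under weak monotonicity and may be linearly ordered in any fixed consistent way. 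Viewed through $\zeta_1 = ONF_{\Sigma_1}(\psi_1)$, the $\Sigma_1$-action points of $\rho'$ are precisely the points of $\rho'_1$, and the only extra points are the $X_2$-oversampled points, which are non-$\Sigma_1$-action points; the invariance claim therefore gives $\rho' \models \zeta_1$, and symmetrically $\rho' \models \zeta_2$, so $\rho' \models \zeta_1 \wedge \zeta_2$. Finally $\rho' \downarrow X = \rho$, since deleting the $X$-only points and erasing $X$ from the $\Sigma$-action points recovers $\rho$.

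For condition (3) (soundness) I would run this in reverse. Let $\rho'$ be any $(\Sigma,X)$-oversampled behaviour with $\rho' \models \zeta_1 \wedge \zeta_2$; since its first and last points are $\Sigma$-action points, $\rho = \rho' \downarrow X$ is well defined. To see $\rho \models \varphi_1$, delete from $\rho'$ exactly the points carrying only $X_2$-symbols (the non-$\Sigma_1$-action points) and erase $X_2$ from the remaining points, obtaining a $(\Sigma,X_1)$-oversampled behaviour $\rho'_1$ with $\rho'_1 \downarrow X_1 = \rho$; since $\zeta_1$ does not mention $X_2$, the invariance claim yields $\rho'_1 \models \zeta_1$, whence $\rho \models \exists{\downarrow} X_1.\,\zeta_1 = \varphi_1$. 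The argument for $\varphi_2$ is symmetric, so $\rho \models \varphi_1 \wedge \varphi_2$, completing condition (3).

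I expect the invariance claim to be the main obstacle, and the delicate point inside it is exactly the interaction between oversampling and the metric intervals: I must argue that inserting or removing non-action points never shifts the time stamp of any action point, so that an interval membership $t_j - t_i \in I$ tested at action points is preserved verbatim, and that the relativized until genuinely bars non-action points from acting as its witnessing endpoint. A secondary nuisance is the well-definedness of the time-stamp interleaving when oversampled points of $\rho'_1$ and $\rho'_2$ collide with one another or with action points; this is harmless under weak monotonicity but must be set up so that projecting $\rho'$ back to $\Sigma \cup X_1$ and to $\Sigma \cup X_2$ recovers $\rho'_1$ and $\rho'_2$ up to the ONF-invisible reordering of coincident non-action points.
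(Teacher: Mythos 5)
The paper does not actually prove this lemma: it is stated in Appendix~\ref{app:onf} and imported wholesale from the cited arXiv version of the TIME'14 work, so there is no in-paper proof to compare against. Your argument is the intended one, and it is essentially correct: the invariance of $ONF_{\Sigma_i}(\cdot)$ at $\Sigma_i$-action points under insertion/deletion of non-$\Sigma_i$-action points (proved by induction on the relativized until, whose guard $act_i \rightarrow \cdot$ and witness $act_i \wedge \cdot$ make the inserted points invisible, while the metric constraint is evaluated on unchanged time stamps) is precisely the reason the relativization is needed, and it is exactly what the paper's motivating example with $\psi_1,\psi_2,\lambda_1,\lambda_2$ is illustrating informally. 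Your merge construction for condition (2) and the dual erasure for condition (3) are the standard completion of that idea, and your handling of coincident time stamps and of points carrying both $X_1$- and $X_2$-symbols is sound.

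The one point you should tighten is well-definedness in condition (3). The definition of equisatisfiability modulo oversampled projections quantifies over \emph{any} behaviour $\rho'$ over $\Sigma \cup X$ satisfying $\zeta_1 \wedge \zeta_2$ and requires you to show that its projection with respect to $X$ is well defined, i.e.\ that $\rho'$ really is a $(\Sigma,X)$-oversampled behaviour whose first and last points carry a symbol of $\Sigma$; you instead assume this at the outset. Whether it follows depends on details of $ONF$ that the paper only sketches (note that in its example $\lambda_2$ carries an explicit top-level conjunct $act_2$ forcing the first point to be an action point; something analogous is needed for the last point). This is a definitional wrinkle of the framework rather than a flaw in your core argument, but as written your condition (3) proves slightly less than the definition demands.
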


 \begin{lemma}
\label{lem:flatgen}
Let $\varphi \in \mathsf{CMTL}$ be built from $\Sigma$, and $W$ be the set of witness variables obtained 
while flattening $\varphi$. 
Then $\varphi=\exists \downarrow W. ONF_{\Sigma}(\varphi_{flat})$.
\end{lemma}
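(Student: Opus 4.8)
The plan is to establish the two directions of equisatisfiability modulo oversampled projections directly, exploiting the shape produced by flattening. I would write $\varphi_{flat} = \chi \wedge \bigwedge_{i=1}^{m} T_i$, where the witnesses $w_1,\ldots,w_m \in W$ are indexed bottom-up (so that if the $\mathsf{C}$-subformula replaced by $w_i$ occurs inside the one replaced by $w_j$, then $i<j$), each temporal definition has the flat form $T_i = \wB[w_i \leftrightarrow \theta_i]$ with $\theta_i$ obtained from the $i$-th $\mathsf{C}$-subformula $\psi_i$ of $\varphi$ by replacing its immediate $\mathsf{C}$-subformulas $\psi_k$ with the fresh proposition $w_k$, and $\chi$ is $\varphi$ with each top-level $\mathsf{C}$-subformula replaced by its witness. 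The first thing I would isolate is a substitution lemma: if a word $\rho'$ over $\Sigma \cup W$ satisfies $w_i \leftrightarrow ONF_{\Sigma}(\theta_i)$ at all action points for every $i$, then at each action point $p$ of $\rho'$, with corresponding point $\bar p$ of $\rho = \rho' \downarrow W$, one has $\rho',p \models w_i$ iff $\rho,\bar p \models \psi_i$. This is proved by a sub-induction on $i$ along the bottom-up ordering, together with the standard fact underlying the oversampling framework of \cite{time14,arxiv-time14} that $ONF_{\Sigma}$-relativization preserves satisfaction across projection: for a $\Sigma$-formula $\alpha$ and an oversampled behaviour $\rho'$ with projection $\rho$, $ONF_{\Sigma}(\alpha)$ holds at an action point $p$ iff $\alpha$ holds at $\bar p$ in $\rho$.

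For the forward direction (condition (2) of equisatisfiability), given $\rho \models \varphi$ over $\Sigma$ I would build $\rho'$ over $\Sigma \cup W$ by adding $w_i$ to the label of a point $p$ exactly when $\rho,p \models \psi_i$. Since $\mathsf{CMTL}$ contains only the $\mathsf{C}$ and $\until_I$ modalities, no new time points are needed: the witnesses decorate existing action points, so $\rho'$ is in fact a simple extension, every point is an action point, $\rho = \rho' \downarrow W$, and $ONF_{\Sigma}$ collapses to the identity on $\rho'$. Each $T_i$ then holds by construction, and using the easy (marking) direction of the substitution lemma, $\chi$ holds as well, so $\rho' \models ONF_{\Sigma}(\varphi_{flat})$.

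For the backward direction (condition (3)), I would take an arbitrary oversampled behaviour $\rho'$ over $\Sigma \cup W$ with $\rho' \models ONF_{\Sigma}(\varphi_{flat})$. The conjuncts $ONF_{\Sigma}(T_i)$ enforce $w_i \leftrightarrow ONF_{\Sigma}(\theta_i)$ at all action points, so the substitution lemma gives $\rho',p \models w_i \iff \rho,\bar p \models \psi_i$ at every action point $p$. Applying the relativization property to $\chi$ and substituting each $w_i$ back by $\psi_i$ then yields $\rho = \rho' \downarrow W \models \varphi$; well-definedness of the projection follows because $\varphi_{flat}$ is relativized over action points and the first and last positions of $\rho'$ carry $\Sigma$-symbols. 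Alternatively, the statement can be assembled from the single-witness case by induction on $m$, peeling off one innermost $\mathsf{C}$-modality at a time and merging its temporal definition with the rest via closure of oversampled projections under conjunction for relativized formulae (Lemma \ref{lem:boolclosedequis}).

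The step I expect to be the main obstacle is the $\mathsf{C}$-modality case inside the substitution lemma: to match $\cnt^{\geq n}_I \psi$ on $\rho$ with $\cnt^{\geq n}_I ONF_{\Sigma}(\theta)$ on $\rho'$, I must show $|N^{\rho'}[p,I](ONF_{\Sigma}(\theta))| = |N^{\rho}[\bar p,I](\psi)|$. This hinges on the action points of $\rho'$ being in timestamp-preserving bijection with the points of $\rho$, and on no oversampled point contributing to the count, since $ONF_{\Sigma}(\theta)$ can hold only at action points. I would verify both facts carefully so that the two cardinalities, and hence the truth of the threshold $\geq n$, agree on $\rho'$ and $\rho$.
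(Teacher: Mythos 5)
The paper does not actually prove this lemma: it is stated in Appendix~\ref{app:onf} and imported without proof from \cite{arxiv-time14}, so there is no in-paper argument to compare against. Your reconstruction is nevertheless correct and follows exactly the route that the oversampling framework intends: flatten into $\chi \wedge \bigwedge_i T_i$ with a bottom-up ordering of witnesses, prove a substitution lemma by induction along that ordering using the fact that $ONF_{\Sigma}$-relativization makes satisfaction invariant under the timestamp-preserving bijection between action points of $\rho'$ and points of $\rho' \downarrow W$, and then read off the two directions of the equisatisfiability definition (the forward direction indeed only needs the degenerate case where $\rho'$ is a simple extension, the backward direction needs the full oversampled case). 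Your alternative of peeling off one temporal definition at a time and invoking Lemma~\ref{lem:boolclosedequis} is also viable and is closer to how the paper actually composes the $\zeta_i$ in the ``putting it all together'' step.

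Two small points deserve explicit care. First, the paper defines $ONF_{\Sigma}$ only for atomic propositions and $\until$; for your substitution lemma to go through you must fix the clause $ONF_{\Sigma}(\cnt^{\geq n}_I\theta) = \cnt^{\geq n}_I(act \wedge ONF_{\Sigma}(\theta))$ (and likewise relativize the threshold formula in a $\mathsf{UT}$ modality), since otherwise the cardinality comparison $|N^{\rho'}[p,I](ONF_{\Sigma}(\theta))| = |N^{\rho}[\bar p,I](\psi)|$ that you correctly identify as the crux has no formal content --- you flag this as the main obstacle, and your resolution (oversampled points cannot satisfy an $act$-guarded formula) is the right one. Second, well-definedness of the projection in condition (3) requires the first and last positions of $\rho'$ to be action points; this follows only if $ONF_{\Sigma}(\varphi_{flat})$ is taken to include a top-level $act$ conjunct (as in the paper's worked example with $act_2 \wedge c$), so you should state that convention rather than derive it from relativization alone.
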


\section{Proof of Lemma \ref{remove-cntinf}}
\label{app:inf}

We want to express that the number of times $b$ is true in the 
region $[l, \infty)$ of any timed word $\rho$ is $\geq n$. \\

Let $\Lbag \varphi_1, \varphi_2, \zeta, n-1\Rbag$ denote the formula 
 $\varphi_1\until [\varphi_2 \wedge (\varphi_1 \until [\varphi_2 \wedge \dots
 (\varphi_1 \until \zeta)\dots])]$, where the depth of the nested until is $n-1$.
The formula $\psi=
\fut_{[l, \infty)}[b\wedge \Lbag\neg b, b, b, n-1 \Rbag]$ in $\mtl$ captures  this requirement. 
Clearly,  $\psi$ evaluates to true on timed words where there is a point 
in $[l, \infty)$ where $b$ is true, and continues to be true for atleast $n$ times.

\section{Proof of Lemma \ref{cnt-sp}(2)}
\label{app:tmtlsp}
 We construct the 
$(\Sigma \cup W, X)$-simple extension $\rho'$ 	
	from $\rho$ exactly as we did in Lemma \ref{cnt-sp}(1), using the same set of new propositions $X=\{b_0, \dots, b_n\}$. We use these propositions as counters in the same way as in Lemma \ref{cnt-sp}. Note that if the number of occurrences of $b$ in some segment of the timed word $\rho'$ is less than $n$ then at least one of the counters $b_i$  will be missing. A point $h$ should be marked with witness $a$ iff there exist a point $j>h$ with $y \in \sigma_j$, $t_j-t_h \in I$, $x \in \sigma_i$ for $h<i<j$, and the number of times $b$ has occurred in $[t_h, t_j]$ is less than $n$.
	 Checking the number of occurrences of $b$ to be $<n$ amounts to checking that at least one of the propositions from $X$ is missing from $[t_h, t_j]$. The formula
		$\lambda = \wB[a \leftrightarrow [\bigvee \limits_{k = 1}^n  (x \wedge \neg b_k)] \until_I y]$ 
		captures all positions where this is true; all such psoitions are marked $a$.  
		Thus the formula $\zeta = \delta \wedge \lambda$ is the required formula in $\mathsf{MTL}$.

   \section{Timed Propositional  Temporal Logic ($\mathsf{TPTL}$)} 
 \label{app:tptl}
  \noindent \emph{Syntax of  $\mathsf{TPTL}$}: 
$\varphi::=a (\in \Sigma)~|true~|\varphi \wedge \varphi~|~\neg \varphi~|
~\varphi \until \varphi~|~ \nx \varphi ~|~y.\varphi~|~y\in I$
where $y$ is a clock variable.  There is a finite set $\mathcal{C}$ of
clock variables  progressing at the same rate, 
 and $I$ is an interval of the form $<a,b>$ $a,b\in \N$ with $<\in \{(,[\}$ and $>\in \{ ],) \}$.
 $\mathsf{TPTL}$ is interpreted over  words in $T\Sigma^*$. 
 The truth of a formula is interpreted at a position
 $i\in \mathbb{N}$ along the word. For a timed word $\rho=(\sigma_1,t_1)\dots(\sigma_n,t_n)$, we define the satisfiability relation, $\rho, i, \nu \models \phi$ saying that  the formula $\phi$ is true at position
 $i$ of the timed word $\rho$ with a valuation $\nu$ of all the clock
 variables at $i$. $\nu(x)$ is the valuation of clock $x$. 
 The notation $\nu[x \leftarrow t_i]$ represents 
 replacing the valuation of $x$ with $t_i$.\\ 
  	$\rho, i, \nu \models a$   $\leftrightarrow$  $a \in \sigma_{i}$ and 
 	$\rho,i,\nu  \models \neg \varphi$  $\leftrightarrow$   $\rho,i,\nu \nvDash  \varphi$\\
 	$\rho,i,\nu \models \varphi_{1} \wedge \varphi_{2}$   $\leftrightarrow$   $\rho,i,\nu \models \varphi_{1}$ 
 	and $\rho,i,\nu \models\ \varphi_{2}$\\
 	$\rho,i,\nu \models x.\varphi $   $\leftrightarrow$  $\rho,i,\nu[x \leftarrow t_i] \models \varphi$ and 
	$\rho,i,\nu \models x \in I $   $\leftrightarrow$ $t_i - \nu(x) \in I$\\
	$\rho,i,\nu \models \nx \varphi $   $\leftrightarrow$  $\rho,i+1,\nu \models \varphi$\\
	$\rho,i,\nu \models \varphi_{1} \until \varphi_{2}$  $\leftrightarrow$  $\exists j > i$, 
	$\rho,j,\nu  \models \varphi_{2}$,  
	and  $\rho,k,\nu  \models \varphi_{1}$ $\forall$ $i < k < j$

\noindent $\rho$ satisfies $\phi$ denoted $\rho \models \phi$ iff $\rho,1,\bar{0}\models \phi$. Here $\bar{0}$ 
is the valuation obtained by setting all clock variables to 0. $\mathsf{TPTL}^{\mathsf n}$ denotes the class of   $\mathsf{TPTL}$ formulae using  $\leq n$ clocks.   For example, 
 $x.[a \wedge \fut(b \wedge x \in (0,1) \wedge \fut(b \wedge x \in (0,1)))]$ is a formula in  
 $\mathsf{TPTL}^1$ which specifies that there are two $b$'s within distance (0,1) from $a$.

\subsection{$\mathsf{CTMTL} \subset \mathsf{TPTL}^1$}
\label{app:c0}

$\mathsf{CTMTL} \subseteq \mathsf{TPTL}^1$: We show that we can encode 
both the $\mathsf{C}$ modality as well as $\mathsf{UT}$ modality in  
$\mathsf{TPTL}^1$. 
 Consider the $\mathsf{C}$ modality 
 $\cnt_I^{\geq n}\varphi$. Recall that this formula holds good at a point $i$ in a timed word
 iff  $\varphi$ evaluates to true $\geq n$ times in the interval $t_i+I$. 
  We capture this in $\mathsf{TPTL}^1$ as follows:
\begin{itemize}
 \item  Let $\Lbag \varphi_1, \varphi_2, \zeta, n-1\Rbag$ denote the formula 
 $\varphi_1\until [\varphi_2 \wedge (\varphi_1 \until [\varphi_2 \wedge \dots
 (\varphi_1 \until \zeta)\dots])]$, where the depth of the nested until is $n-1$. 
 The $\mathsf{TPTL}$ formula 
 ${x.\fut(x \in I \wedge (\varphi \wedge 
\Lbag \neg \varphi, \varphi, \varphi \wedge x \in I, n-1 \Rbag))}$ evaluates to true 
at a position $i$ in any timed word $\rho$  iff 
there is a position $j > i$ such that $t_j \in t_i+I$, 
 $\rho, j, \nu \models \varphi$ with $\nu \in I$, 
and there exists a point $k > j$ such that 
$\rho, k, \nu \models \varphi$ with   
 $\nu \in I$, and  there exist $n-2$ points 
  $j < i_1 < i_2 < \dots < i_{n-2}< k$ where $\varphi$ evaluates to true, and $\neg \varphi$ is true 
 in $(j, i_1), (i_1, i_2), \dots, (i_{n-2}, k)$. It is clear that the clock valuation at all these $n-2$ inbetween points satisfy $ \nu \in I$ (since  $\nu \in I$ at both $j, k$). Thus, we have obtained $n$ points in $t_i+I$ where $\varphi$ is true. Clearly, 
 this captures the semantics of $\cnt_I^{\geq n}\varphi$.  
 
 \item The modality $\cnt^{< n}_I \varphi$ is obtained by negating 
 $\cnt^{\geq n}_I \varphi$, while 
 $\cnt^{=n}_I \varphi$ is written as a conjunction of 
 $\cnt^{\geq n}_I \varphi$ and $\cnt^{\leq n}_I \varphi$. 
\end{itemize}
 
 Now we embed  the $\mathsf{UT}$ modality $\varphi_1 \until_{I, \eta}\varphi_2$
     in $\mathsf{TPTL}^1$. Using Lemma \ref{final}, we just have to show 
     that the counting modality   $\varphi_1 \until_{I, \eta}\varphi_2$ where $\eta$ 
     is free of conjunctions 
     can be expressed in $\mathsf{TPTL}^1$. Let $\eta=\#_{\psi} \geq c$. Then 
     the formula $x.(\Lbag \varphi_1 \wedge \neg \psi, \varphi_1 \wedge \psi, \varphi _1 \wedge \psi \wedge \{\varphi_1 \until (\varphi_2 \wedge x \in I)\}, c\Rbag)$ 
     is the formula in  $\mathsf{TPTL}^1$ that captures $\varphi_1 \until_{I, \eta}\varphi_2$ : clearly, 
     this formula evaluates to true at a point $i$ iff there is a position $j >i$ 
     such that at $j$, $\varphi_2 \wedge x \in I$ evaluates to true (note that $x$ was reset at $i$), 
     and all the way between $i$ and $j$, $\varphi_1$ evaluates to true. Further, we have a nested until 
     of depth $c$, which witnesses $n$ points $i < i_1 < \dots < i_n < j$, 
     such that $\varphi_1 \wedge \psi$ evaluates to true at each $i_j$, and 
     $\neg \psi$ evaluates to true in $(i_{j-1}, i_j)$. This process can be repeated to 
     handle threshold formulae of counting depth $i >1$, by recursively 
     replacing the threshold formulae at each level of $\eta$ by an appropriate 
      $\mathsf{TPTL}^1$ formula. 
      Finally, the untimed threshold counting modality 
      introduced in Lemma \ref{final} can be 
      replaced in  $\mathsf{TPTL}^1$  by a technique similar to that in \cite{cltl}.\\
 \vspace{.1cm}      
 
 To show the strict containment of $\mathsf{CTMTL}$ in $\mathsf{TPTL}^1$, 
 we consider the formula $\varphi=x. \fut[a \wedge x \in (0,1) \wedge \Box[x \in (0,1) \rightarrow \neg b]] \in \mathsf{TPTL}^1$. 
 The $\mathsf{TPTL}^1$ formula says that there is an $a$ in $(0,1)$, and 
the last symbol in (0,1) is not a $b$. 
 This formula was shown to be not expressible in $\mathsf{MTL}$ \cite{bouyer}. 
 We show here that $\varphi$ cannot be expressed even with counting, that is in $\mathsf{CTMTL}$. 
 We show that for any choice of $n$ rounds and $k$ pebbles, 
we can find two words $\rho_1, \rho_2$ such that 
 $\rho_2 \models \varphi, \rho_1 \nvDash \varphi$, but 
 $\rho_1 \equiv_{n,k}^{\mathsf{CTMTL}} \rho_2$. 
    Let $p \in \mathbb{N}$ be such that $pnk >>k$ and 
  $0<\delta < < \frac{1}{2pnk}$.
  \begin{enumerate}
 \item  Consider the word $\rho_1=((ab)^{pnk}(ab)^{pnk},\tau)$
  where the time stamps are as follows: the first $2pnk$ symbols 
  lie in the interval (0,1), with the first time stamp $t_1= 0.9$, 
  $t_{2pnk}= 0.9 + (2pnk-1)\delta$, $t_{i+1}-t_i=\delta$ for all $0 < i < 2pnk$. 
  The remaining $2pnk$ time stamps are such that $t_{2pnk+1}=1.1, 
  t_{4pnk}=1.1+(2pnk-1)\delta$ and $t_{i+1}-t_i=\delta$ for all $2pnk+1 < i < 4pnk$.
  By the choice of $\delta$, we have $1.1+ (2pnk-1)\delta < 1.2$, and
   $0.9 + (2pnk-1)\delta <1$.  

\item The second word is $\rho_2=((ab)^{pnk-1}a(ba)^{pnk}b, \tau')$, 
with time stamps $t'_i=t_i+\delta$ for  $1 \leq i \leq 2pnk-1$ and 
$t'_{2pnk}=1.1-\delta>1$, $t'_i=t_i$ for $2pnk+1 \leq i \leq 4pnk$.

 \item   In the case of $\rho_1$, the last time stamp $<1$ is  
   $t_{2pnk}=0.9 + (2pnk-1)\delta$
     and the letter at that position is 
  $\sigma_{2pnk}=b$. 
  For $\rho_2$,  $t'_{2pnk-1}=t_{2pnk-1}+\delta=t_{2pnk}=0.9+(2pnk-1)\delta$ is the last time stamp 
   $<1$, and the letter at this position is $\sigma'_{2pnk-1}=a$. 
    Hence, 
 $\rho_1 \nvDash \varphi, \rho_2 \models \varphi$. 
 \item While the last $b$ in (0,1) 
 of $\rho_1$ is at position $2pnk$ with time stamp 
 $0.9 + (2pnk-1)\delta$, 
   the last $b$ in (0,1) 
 of $\rho_2$ is at position $2pnk-2$ with time stamp 
  $0.9 + (2pnk-2)\delta$.

\end{enumerate}

We now show that in a $n,k$-$\mathsf{CTMTL}$ game over $\rho_1, \rho_2$, 
$\dpl$ wins. The main intuition here is that apart from the fact that there is a lag of one symbol
across intervals (0,1) and (1,2), there is no difference between $\rho_1$ and $\rho_2$.

From the initial configuration $(i_0, j_0)$ with time stamps (0,0), $\spl$ initiates an $\until_I$ move 
or a $\mathsf{C}$ move or a $\mathsf{UT}$ move.
\begin{enumerate}
\item 
If $\spl$ initiates a $\until_{(1,2)}$ move on $\rho_1$ and comes to 
the position $2pnk+1$ with time stamp 1.1 as part of the $\fut_{(1,2)}$ move, then $\dpl$ 
will come on $\rho_2$ to the position $2pnk+1$ with the same time stamp   
  $\sigma_{2pnk+1}=a=\sigma'_{2pnk+1}$. 
  So we have $i'_0=2pnk+1, j'_0=2pnk+1$.
  The future is identical in both words from this point. 
 The interesting case is when $\spl$ chooses to do the full until move or a $\mathsf{C}$ move
or a $\mathsf{UT}$ move at $(i'_0, j'_0)$.

 Consider the until move first. In this case, $\spl$ chooses some  position $1 < h \leq 2pnk$ 
 in $\rho_2$. In this case, $\dpl$ will choose the same position in $\rho_1$. Even though 
 the time stamps differ, all the points to the right in both $\rho_1, \rho_2$ lie in (0,1)
 from $t_h$. Moreover, the number of points to the right are the same, with the same symbols.
  Hence, $\dpl$ wins.  Now let us consider a $\mathsf{C}$ move from 
   $(i'_0, j'_0)$. 
      The only relevant move is $\cnt_{(0,1)}$, since all points to the right of $(i'_0, j'_0)$ lie in interval (0,1). 
  The number of points to the right of $(i'_0, j'_0)$ in both words are much larger than $k$. 
      The number of points between $i_0$ and $i'_0$ as well as $j_0$ and $j'_0$ are both much $>>k$; infact the number of $a$'s as well as $b$'s are much more than $k$. $\dpl$ can place his pebbles at the same positions as $\spl$, and obtain an identical configuration. 
The argument is exactly same for a $\mathsf{UT}$ move from  
  $(i'_0, j'_0)$.  Again, the only relevant move is $\until_{(0,1), \eta}$.

 \item   Let us now look at the more interesting case when $\spl$   
   initiates a $\until_{(0,1)}$ move or a $\cnt_{(0,1)}$ on $\rho_1$ from $(i_0, j_0)$ 
 and chooses  the last symbol in (0,1), the $b$ 
    at position $i'_0=2pnk$ of $\rho_1$ with time stamp $0.9+(2pnk-1)\delta$. 
        In this case, $\dpl$ will choose the last $b$ in (0,1) 
 at position $j'_0=2pnk-2$  of $\rho_2$ with time stamp $0.9+(2pnk-2)\delta$.
    In the case of $\until_{(0,1)}$ move, 
                $\spl$ can decide to end this move, in which case, the configuration will be
      $(i_1, j_1)$ with time stamps   
    $(0.9+(2pnk-1)\delta, 0.9+(2pnk-2)\delta)$. If $\spl$ decides to go ahead with the $\until$ move,
    and chooses a position $1 < h < 2pnk-2$ in $\dpl$'s word, then $\dpl$ will 
    pick  the same position $1 < h < 2pnk$ in $\spl$'s word. 
     This gives the identical configuration $(i_1, j_1)=(h, h)$. 
      All the points to the right of $h$ in $\rho_1$, as well as all the points 
      to the right of $h$ in $\rho_2$ lie in the interval (0,1), since the time stamps 
      are $t_h=0.9+(h-1)\delta$ and $t'_h=0.9+h \delta$. Clearly, any move of $\spl$ can be mimicked by $\dpl$ 
      obtaining an identical configuration. 
In case of the $\cnt_{(0,1)}$ move between $i_0$ and $i'_0$ and $j_0$ and $j'_0$,
 it can be seen that since the number of $a$'s and $b$'s are $>>k$, 
 $\dpl$ can place his pebbles at the same positions as $\spl$ and obtain an identical configuration. 
      
\item Now consider the case of a $\mathsf{UT}$ move. 
              Assume $\spl$ 
  initiates a $\until_{I, \eta}$ move with $I=(0,1)$ from (0,0), and
  plays on  $\rho_1$.  As part of the $\fut_{(0,1)}$ move, 
    If $\spl$ comes on the last position in (0,1) which is the $b$, 
  $\dpl$ will come on to the last  $b$ in (0,1). If the $\spl$ continues with the counting move, 
  then $\spl$ keeps $k$ pebbles in the positions between 0 and $2pnk$, while $\dpl$ keeps 
  his $k$ pebbles between 0 and $2pnk-2$ at identical positions in his own word. 
     It can be seen as in the case of the $\mathsf{C}$ move that $\dpl$ can 
    ensure an identical configuration. 
    
\item The argument when $\spl$ plays on $\rho_2$ is exactly the same.

  \end{enumerate}

\end{document}